\pgfplotsset{compat=newest}
\newtheorem{theorem}{Theorem}
\newtheorem{lemma}{Lemma}
\newtheorem{proposition}{Proposition}
\theoremstyle{definition}
\newtheorem{corollary}{Corollary}
\newtheorem{definition}{Definition}
\newtheorem{example}{Example}
\newtheorem{remark}{Remark}
\newcommand{\BE}{\mathsf{E}}
\newcommand{\BP}{\mathsf{P}}
\newcommand{\BR}{\mathbb{R}}
\renewcommand{\d}{\mathrm{d}}
\newcommand{\ve}{\varepsilon}
\numberwithin{equation}{section}
\newcommand{\rn}[1]{\romannumeral #1}
\newcommand{\Rn}[1]{\expandafter\@slowromancap\romannumeral #1@}
\title{Reputation, Learning and Project Choice in Frictional Economies}
\date{May 2024}
\author{Farzad Pourbabaee
\thanks{I am grateful to my advisors Bob Anderson, Gustavo Manso and Chris Shannon for the guidance and support. For helpful conversations and comments, I thank Haluk Ergin, Peter B. McCrory, Joseph Root, Ali Shourideh, David Sraer, Philipp Strack, and Pierre-Olivier Weill. I received valuable feedback from the referees as well as seminar participants at Search and Matching in Macro and Finance Workshop and UC Berkeley.}
\thanks{California Institute of Technology. \href{mailto:far@caltech.edu}{far@caltech.edu}}}
\renewcommand\footnotemark{}
\begin{document}
\maketitle
\begin{abstract}
I introduce a dynamic model of learning and random meetings between a long-lived agent with unknown ability and heterogeneous projects with observable qualities. The outcomes of the agent's matches with the projects determine her posterior belief about her ability (i.e., her reputation). In a self-type learning framework with endogenous outside option, I find the optimal project selection strategy of the agent, that determines what types of projects the agent with a certain level of reputation will accept. Sections of the optimal matching set become \textit{increasing intervals}, with different cutoffs across different types of the projects. Increasing the meeting rate has asymmetric effects on the sections of the matching sets: it unambiguously expands the section for the high type projects, while on some regions, it initially expands and then shrinks the section of the low type projects.
\end{abstract}
\noindent \textit{JEL classification:} D81; D83; O31

\noindent \textit{Keywords:} Reputation; Learning; Optimal Stopping
\clearpage
\setstretch{1.35}
\tableofcontents
\clearpage

\section{Introduction}
Most of the theoretical literature on experimentation and project choice revolves around learning the other party's (namely the project's) type. In this paper, however, I turn the focus to learning the self-type, and thus engaging in a \textit{self-experimentation} setting. Specifically, the agent in my paper does not know her type, and the only way to learn it is by matching with the projects and observing their outcomes.

There are natural instances where agents \textit{learn} their type through the course of their matches with other parties. For example, firms learn about their productivity while they are matched with workers. Venture capitalists learn about their ability and the quality of their post-investment services while investing in the startups. Common in all these cases is the cost of maintaining the match and the \textit{tangible} created surplus (such as the high-quality output of production in the first case, and the startups' success in the second case). These tangible gains can be isomorphically captured by the choice of the  matching function, that takes in the types of partners and returns the output.

However, when the agent holds incomplete information about her type, there is also an \textit{intangible} gain due to the learning, that cannot be nested in the former construct. Because, what is now used as an input to the matching function is no longer the static type of the agent, but a dynamic state process that reflects the agent's belief of her own type. Specifically, in addition to the tangible gains, there are now information gains from agent's project choices, as present selections convey information about the agent's ability, that in turn can be used in future choices of projects. The basic research question that I pose and address in this paper is as follows: When confronted with diverse projects that vary in their expected payoffs, what constitutes the agent's optimal project selection policy in relation to her reputation?

In this economy, the agent is ex ante endowed with a high or low \textit{immutable} type $\theta \in \{L,H\}$, that is hidden to herself. On the other hand, there are heterogeneous projects with observable qualities denoted by $q \in \{a,b\}$, which I often refer to as $a$-projects and $b$-projects. The agent randomly meets the projects subject to the search frictions and decide whether to accept them or not. Once a project is accepted, there will be a random success event whose arrival intensity depends on the types of the agent and the project (namely on $\theta$ and $q$). The agent continuously updates her belief about the underlying type during the course of her matches. Therefore, I interpret the posterior belief as her reputation and denote it by $\pi$.

Whenever the agent pairs up with a project, a learning opportunity is created about her type. Since maintaining the match is costly, the agent effectively solves a stopping time problem, in which she weighs the \textit{matching value function} $v_q(\pi)$ (that is a function of her current reputation $\pi$ and the type of the project $q$) against the reservation value $w(\pi)$ --- the value of holding the current reputation while being unmatched, that is called the \textit{reputation value function} throughout the paper. Because of the random meetings framework, these two functions are intertwined in the fixed-point. That is the reputation function is simply the expected discounted value of future surpluses that the agent extracts, and the matching value function is the solution to the free boundary problem with the exit option of $w$. The continuation region of this free boundary problem determines the optimal matching set $\mathcal{M}$, that in turn defines the acceptable levels of reputation with which the agent selects and holds on to a particular project. Specifically, $(q,\pi) \in \mathcal{M}$ if $v_q(\pi) > w(\pi)$. In light of this specification and following the terminology of the optimal stopping literature, I use the \textit{matching set} and \textit{continuation region} interchangeably and both refer to the subset $\mathcal{M}$.

The central innovation of this paper is to study the optimality and shape of these matching sets (namely the continuation regions) when the agent has long-run incentives and learn her ability as she selects and matches with the projects. Specifically, I find and study the properties of the optimal tuple $\langle w^*,v^*,\mathcal{M}^*\rangle$. The main point of the departure from the experimentation literature (e.g., \cite{keller2005strategic}) is the endogeneity of the outside option $w$, that determines the types of acceptable projects in the agent's optimal policy. In addition, the subject of learning in the experimentation literature is the project's type, whereas in my paper the learning is about the self-type and the projects provide the context for learning and a source of creating surplus.

\subsection{Organization of Results}
In Section~\ref{sec: model}, I introduce the dynamic learning and project selection model. Three main objects in the study of agent's optimal policy are the matching value function $v_q(\pi)$, the reputation or reservation value function $w(\pi)$, and the matching set $\mathcal{M}$. We will see how the agent's optimal policy can be translated into a fixed-point solution of a system that connects the three elements mentioned above.

In Section~\ref{sec: Opt_with_learning}, I find the unique optimal tuple $\langle w^*,v^*,\mathcal{M}^*\rangle$ in the space of continuously differentiable value functions, i.e., $C^1[0,1]$. I study the properties of the value functions (such as monotonicity and convexity), and show the sections of the optimal matching set are \textit{increasing intervals}, and hence the agent's optimal policy is to stay matched with a project so long as her reputation is larger than a certain threshold. The threshold for high type projects is lower, and hence the agent shows more tolerance for failure when matched to the high types. In particular, by letting $\mathcal{M}^*_q$ represent the interval of reputation levels where the agent remains matched with a $q$-project, it follows that at the optimum $\mathcal{M}^*_a \subseteq \mathcal{M}^*_b$. Due to the search frictions, there is a cost region that $\mathcal{M}^*_a \neq \emptyset$, and thus even the low type projects get selected.

I present the qualitative features of the matching sets and the value functions in Section~\ref{sec: Qual}. In particular, to uncover the unique role of learning on the shape of the optimal matching sets and value functions, I study the \textit{no-learning} counterpart of the original model in Section~\ref{subs: no_learning_version}. Specifically, I assume that the agent's true type is equal to her reputation (i.e., a number $\pi \in [0,1]$), as opposed to a background hidden binary variable $\theta\in \{L,H\}$. This will shut down the learning channel, that is the Bayesian learning force will be absent in the associated Bellman equations. 

Subsequently, in this no-learning environment, I find the unique optimal outcome $\langle \hat w, \hat v, \widehat{\mathcal{M}} \rangle$ in the space of continuous functions $C[0,1]$. The matching value function $\hat v$ is locally concave with kinks on the boundaries. Namely, it is no longer convex and continuously differentiable despite its counterpart $v^*$ in the original learning model. Losing convexity in the value functions (due to the absence of learning incentives) leaves the matching sets smaller than their learning counterpart $\mathcal{M}^*$. Lastly, I show in this setting that lowering the search frictions \textit{symmetrically} expands both sections of the matching set (i.e., $\widehat{\mathcal{M}}_a$ and $\widehat{\mathcal{M}}_b$).

Next, in Section~\ref{subs: comp_statics}, I present the comparative statics of the original learning model. Specifically, based on the unique existence of the optimal tuple $\langle w^*, v^*, \mathcal{M}^*\rangle$ in the space of $C^1$ functions, I present the comparative statics of this tuple with respect to the primitives of the economy. An important one among them is the impact of the search frictions on the size of the sections of the optimal matching set, namely $\mathcal{M}^*_a$ and $\mathcal{M}^*_b$. I show that decreasing the search frictions unambiguously expands the high type section $\mathcal{M}^*_b$, but on some regions, initially expands and then shrinks the low type section $\mathcal{M}^*_a$. This \textit{asymmetric} response to search frictions was not present in the no-learning matching sets $(\widehat{\mathcal{M}}_a, \widehat{\mathcal{M}}_b)$.

Building upon the baseline results with a single agent, Section~\ref{sec: rep_exter} examines an economy characterized by \textit{reputational externality} and populated by a continuum of agents. Specifically, by employing a reputation weight function, the meeting rate of each agent becomes higher as her reputation increases. More importantly, this rate is inversely influenced by the steady-state distribution of reputation weights across the population. Hence the more reputable agents slow down the meeting rate of the less reputable ones. As a result of this externality, I demonstrate that a marginal reduction in the symmetric equilibrium termination point increases the social surplus. Consequently, the equilibrium social surplus suffers from agents under-learning their self-types and early termination of the projects.

Lastly, the paper concludes in Section~\ref{sec: conclusion}.
\subsection{Related Literature}
The Bayesian learning force in the agent's decision problem in this paper is based on the exponential arrival of breakthroughs, and in that sense the paper is related to the experimentation literature with exponential news processes, initiated by \cite{keller2005strategic}, and expanded in the follow-up works of \cite{keller2010strategic} and~\cite{keller2015breakdowns}. The exponential Bandit approach has also been applied to other strategic settings with payoff and informational externalities between players \citep{margaria2020learning,das2020strategic,boyarchenko2020super,boyarchenko2021inefficiency}. The common theme in this line of research revolves around the uncertainty of the project's type for the decision maker(s). In contrast, in the present paper, the projects' types are observable and they provide a context for the decision maker to learn her own type while they are being selected.

In the context of reputation building (when the information about the persistent or dynamic self-type is incomplete) and interpreting the reputation as the posterior belief, this paper is related to \cite{holmstrom1999managerial, board2013reputation, bonatti2017career}. However the kind of economic engagement that releases informative signals in these papers is the agent's effort, and in the current study is about the agent's project selection.

The analysis of this paper has also the flavor of the literature on learning within the labor markets such as the works by \cite{jovanovic1979job,moscarini2005job, li2017efficient}. Aside from differences in the context and motivation, the subject of learning in these studies is the \textit{match-specific} parameter, and not the underlying types of the agents. Therefore, the information released over the present match has no bearing on the future matches and naturally the reputational aspects are absent.

There is also previous research on how agents hold \textit{perfect} private information about themselves, and receive some form of information about the type of their partner before the match \citep{chakraborty2010two,liu2014stable}. My setting is different from these works, mainly in the sense that the agent in this paper has incomplete information about herself, and one of the motives in her matching decisions (besides receiving the tangible surplus from the projects she accepts) is learning her type.

My findings also contribute to the literature about Bandits with correlated arms \citep{camargo2007good, rosenberg2013games}. This problem is known to be difficult, and thus very little has been achieved in economics literature. The self-experimentation model that is developed in this paper is formally equivalent to a two-armed Bandit setting, where the arms' payoffs are correlated. Specifically, what correlates the payoffs of the projects is the single dimensional variable that represents the agent's hidden type.

\section{Model}
\label{sec: model}
\subsection{Agent, Projects and Dynamic Timeline}
In this part, I describe the elements of an economy populated by a single long-lived agent, who discounts future payoffs, and a continuum of projects with observable types.

\paragraph{Agent.} The agent is a long-lived individual with the rate of time preference $r>0$. She holds incomplete information about her immutable type $\theta\in \{L,H\}$. The $\sigma$-field $\mathcal{I}_t$ aggregates all the information that is available in the economy at time $t\in \BR_+$. The agent cares about her reputation, which is the posterior belief about her type. Given the filtration $\bm{I}=\left\{\mathcal{I}_t\right\}$, $\pi_t = \BP\left(\left. \theta=H\right| \mathcal{I}_t\right)$ reflects her time-$t$ reputation. 

\paragraph{Projects.} The entities on the other side of this economy are treated as projects that are selected by the agent. Specifically, they have no bargaining power against the agent.\footnote{This assumption makes the analysis substantially simpler, yet it downplays the strategic role of ``project owners'' in the optimal outcome. However, given the paper's focus on the agent's side and her reputational concerns, such an abstraction seems plausible.} Each project is endowed with a type $q \in \{a,b\}$, which is publicly observable. The (unnormalized) mass of type-$q$ projects is $\varphi_q$ for $q \in \{a,b\}$, exogenously replenished and held constant.

\paragraph{Meetings and project selection.} The agent randomly meets the projects subject to the search frictions, with the meeting rate of $\kappa>0$, and the matching technology is \textit{quadratic}. That is the probability with which the agent meets a type-$q$ project over an infinitesimal period $\d t$ is approximately equal to $\kappa \varphi_q\,\d t$. Furthermore, the matches are one-to-one, that is both parties have capacity constraint over the number of partners they can accept.

\paragraph{Output and reputation.} Given the selection of a type-$q$ project by the type-$\theta$ agent, the success arrives at the rate of $\lambda_q$, if $\theta = H$. Otherwise, there will be no success. This means breakthroughs are \textit{conclusive} about the agent's ability.\footnote{The analysis involving inconclusive breakthroughs, where success can occur even with a low-type agent, presents several intractable steps and is therefore excluded. Due to its tractability, conclusive breakthroughs are used in a number of recent studies \citep{bonatti2017career, margaria2020learning, das2020strategic, boyarchenko2021inefficiency}. Additionally, choosing exponential processes to model breakthroughs is more suitable when news arrives at discrete and randomly spaced intervals, in contrast to the Wiener process approach for experimentation \citep[e.g., see][]{bolton1999strategic, pourbabaee2020robust}.} Type-$b$ projects are superior to type-$a$, in the sense that $\lambda_b>\lambda_a$. The agent has to cover the flow cost of $c>0$ that is common across all projects --- and, for a non-trivial setting, one has to assume $c<\lambda_a<\lambda_b$. In return, she receives the right to terminate the project at her will, so conceptually a stopping time problem is solved by the agent ex post to every selection of a project. The flow cost $c$ captures both the running cost of the project and learning about the self-type $\theta$. I assume there is a mechanism in the economy that tracks the output of each project and records the Bayes-updated posterior of the agent. This information is reflected in the filtration $\bm{I}=\{\mathcal{I}_t\}$. The posterior dynamics for the reputation process (resulted from the Bayes law) follows
\begin{equation}
\label{eq: Poisson_Bayes}
	\d \pi_t = \left\{\begin{array}{cc}
	    -\pi_t(1-\pi_t)\lambda_q\,\d t &  \text{ before the breakthrough}\,, \\ 
	    0 & \text{ after the breakthrough}\,. 
	\end{array}\right.
\end{equation} 
Since the breakthroughs are conclusive, when the success arrives, $\pi_t$ promptly jumps up to one and remains there indefinitely.

Figure~\ref{fig: Dynamic_timeline} illustrates the dynamic timeline for the agent, who starts the cycle with reputation $\pi$, and after some exponentially distributed time meets a project randomly drawn from the population of available ones. A decision to accept or reject the contacting project is made by the agent. Upon rejection, she returns to the initial node, and conditioned on acceptance an investment problem with the flow cost of $c$ is solved. 

Finally, I interpret success as the occurrence of a breakthrough before the agent stops the project, leading her to rationally updating her belief upwards. And the failure refers to the outcome where the project is terminated before a success arrives, thus the agent returns to the unmatched status with a lower reputation. Importantly, after a success or a failure, the match is dissolved and both the agent and the project become available.

The self-experimentation model presented here is formally equivalent to a special two-armed Bandit setting, in which the payoffs to the two arms are correlated. Specifically, the agent's type $\theta$ is the variable that correlates the payoffs of the two arms. Therefore, the upcoming analysis can also be read through the lens of the experimentation literature with correlated arms.
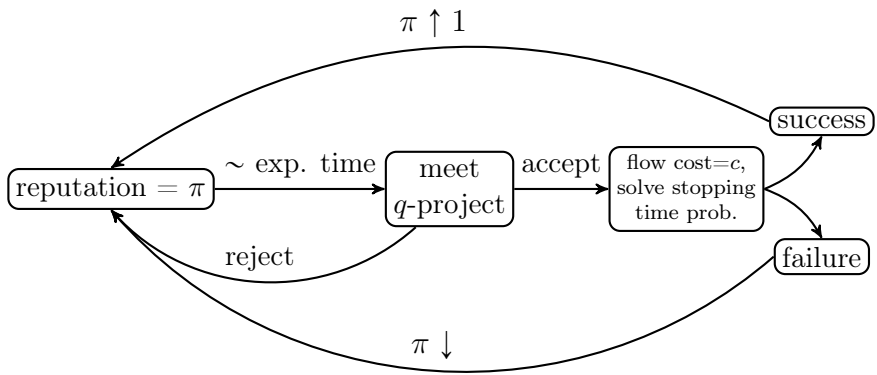
\begin{figure}[htp]
\begin{center}
\begin{tikzpicture}[scale=0.9,thick,>=stealth',dot/.style = {draw,fill = black, circle,inner sep = 0pt,minimum size = 4pt}]
		\node[draw,thick,rounded corners,inner sep=0.1cm,font=\small] (belief) at (0,0) {reputation = $\pi$};
		
		\node[draw,thick,rounded corners,inner sep=0.1cm,font=\small,align=center] (investee) at (5,0) {meet \\$q$-project};
				
		\draw [->] (belief) -- 	node [above,align=center,midway,font=\small]{$\sim$ exp. time} (investee);
		
		\path (investee) edge [->,bend left=45]  node[above,midway,font=\small] {reject} (belief.south);
		
		\node[draw,thick,rounded corners,inner sep=0.1cm,font=\scriptsize,align=center] (cost) at (8.5,0) {flow cost=$c$,\\solve stopping\\ time prob.};
		
		\path (investee) edge [->]  node[above,midway,font=\small] {accept} (cost);
		
		\node[draw,thick,rounded corners,inner sep=0.1cm,font=\small,align=center] (success) at (10.5,1) {success};

		\node[draw,thick,rounded corners,inner sep=0.1cm,font=\small,align=center] (failure) at (10.5,-1) {failure};
		
		\path (cost.east) edge [->,bend right=20]  (success.south);
		
		\path (cost.east) edge [->,bend right=-20]  (failure.north);
		
		\path (success.west) edge [->,bend left=330]  node[above,midway] {$\pi \uparrow 1$} (belief.north);

		\path (failure.west) edge [->,bend left=45]  node[above,midway] {$\pi \downarrow$} (belief.south);
\end{tikzpicture}
\end{center}
\caption{Decision timeline for the agent}
\label{fig: Dynamic_timeline}
\end{figure}

\subsection{Value Functions and Matching Sets}
In this section, I show that the agent's optimal strategy can be encapsulated by the choice of the matching sets. Additionally, I present the \textit{necessary} Bellman equations that every optimal $C^1$ value function, associated with the optimal matching sets, must satisfy. 

Let $w(\pi)$ be the optimal value of holding reputation $\pi$, when the agent is \textit{unmatched}. This function shall be treated as the agent's outside option and is weighed against the optimal matching value function upon the meetings.\footnote{Henceforth, I often drop the word ``optimal'', as it is clear from the context.} The matching value function when the agent with reputation $\pi$ selects and stays with a type-$q$ project is $v_q(\pi)$, that is the expected value of discounted future payoffs generated by this project. 

The optimality of the match between the agent of reputation $\pi$ and a type-$q$ project requires that $v_q(\pi) > w(\pi)$, in that case I say $(q,\pi) \in \mathcal{M} \subseteq \{a,b\}\times [0,1]$, where $\mathcal{M}$ is called the matching set (or interchangeably the continuation region). Also, understood from the context, $\mathcal{M}(\pi)$ (respectively, $\mathcal{M}_q$) refers to the $\pi$ (respectively, $q$) \textit{section} of this two dimensional set. In addition, often in the paper I use the indicator function $\chi_q(\pi)$ to denote whether the agent with reputation $\pi$ accepts a type-$q$ project, that is whether $(q,\pi)\in \mathcal{M}$ or not. 

\newpage
Recall that $\varphi$ represents the mass of available projects in the economy, which is treated exogenously as the primitives of the model. The agent also meets type-$q$ projects at the rate of $\kappa \varphi_q$. If a $q$-project is acceptable, it leads to a surplus of $v_q(\pi)-w(\pi)$ for the agent. Hence, the following Bellman equation falls out:\footnote{The reader can refer to Appendix~\ref{subs: Bellman_derivation} for a heuristic derivation of the Bellman equations~\eqref{eq: reservation_value} and~\eqref{eq: Bellman}.}
\begin{equation}
\label{eq: reservation_value}
	rw(\pi)= \kappa \sum_{q\in \mathcal{M}(\pi)} \varphi_q \big(v_q(\pi)-w(\pi)\big) .
\end{equation}

Next, I formally define the optimal matching value function, $v_q(\pi)$, and present the necessary Bellman equation that it satisfies.
 
Imagine a match between the agent with an initial reputation of $\pi$ and a type-$q$ project. Let $\sigma_q$ represent the random exponential time of success with the unit payoff and the arrival intensity of $\lambda_q$ if $\theta=H$. Therefore, the matching value function $v_q(\cdot)$ is an endogenous outcome of a free boundary problem with the outside option of $w(\cdot)$. In that, the agent selects an optimal stopping time $\tau$, upon which she stops backing the project, taking into account the project's success payoff and her reputation value $w$:\footnote{Henceforth, for a random variable $X$ and an event $A$, I use the convention that $\BE[X;A] := \BE[X \mathbf{1}_A]$, where $\mathbf{1}_A$ is the indicator function of the event $A$.}
\begin{equation}
\label{eq: matching_value_function}
	\begin{gathered}
		v_q(\pi) = \sup_\tau \left\{\BE\left[e^{-r\sigma_q}-c\int_0^{\sigma_q} e^{-rs}\d s+e^{-r \sigma_q} w(\pi_{\sigma_q}); \sigma_q \leq \tau\right]\right.\\
		+\left.\BE\left[-c\int_0^\tau e^{-r s}\d s+e^{-r \tau}w(\pi_\tau);\sigma_q>\tau\right]\right\}.
	\end{gathered}	
\end{equation}
Specifically, $\tau$ is adapted to the filtration $\bm{I}$. Namely, $\{\tau\leq t\}$ is $\{\pi_s: s<t\}$-measurable for every $t\in \BR_+$.

Formally, in the above stopping time problem, if the success happens before the agent stops backing the project (namely when $\sigma_q \leq \tau$), the agent collects the discounted unit payoff, has paid the flow cost until time $\sigma_q$, and successfully leaves the project with the updated reputation function $w(\pi_{\sigma_q})$. Observe that $\pi_{\sigma_q}$ is the updated posterior belief reflecting the successful exit, hence $\pi_{\sigma_q}=1$.

On the other hand, if the agent stops the project before the realization of success (specifically when $\tau<\sigma_q$), she only incurs the flow cost up to time $\tau$, and leaves with the updated reputation function $w(\pi_\tau)$ --- reflecting the fact that the success has not happened until time $\tau$. Therefore, the exit option at the stopping time $\tau$ is the agent's reservation value of holding reputation $\pi_\tau$, i.e., $w(\pi_\tau)$.

Because of the dynamic programming principle, any $C^1$ value function of the above stopping time problem must satisfy the following HJB equation:
\begin{equation}
\label{eq: Bellman}
	rv_q(\pi) = \max\Big\{ rw(\pi), -c +\lambda_q \pi \big(1+w(1)-v_q(\pi)\big)-\lambda_q\pi(1-\pi) v'_q(\pi)\Big\}.
\end{equation} 
The above HJB is presented in the variational form, that is the first expression in the \textit{rhs} is the value of stopping --- denying the project and holding on to the outside option $w$ --- and the second expression represents the Bellman equation over the continuation region $\mathcal{M}_q$, on which $v_q(\pi) > w(\pi)$. 

The first term in the Bellman equation is the flow cost of the project borne by the agent, the second term is the expected flow of the created surplus, and the last term captures the marginal reputation loss due to the lack of success. At $\pi = 1$, where there is no learning about the self-type, that final term is absent and the above Bellman equation reduces to:
\begin{equation}
\label{eq: Bellman_at_1}
	rv_q( 1 ) = \max\Big\{ rw(1), -c +\lambda_q \big(1+w(1)-v_q(1)\big)\Big\}\,.
\end{equation}

Induced by the above stopping time problem, the matching set $\mathcal{M}$ can thus be interpreted as the continuation region for the free boundary problem of~\eqref{eq: Bellman}, namely
\begin{equation}
\label{eq: matching_set}
	\mathcal{M}=\left\{ (q,\pi) \in \{a,b\} \times [0,1]: v_q(\pi) > w(\pi)\right\},
\end{equation}
and on the \textit{stopping region} $\mathcal{M}^c$ (namely the complement of $\mathcal{M}$), the matching value function equals the agent's reputation function, i.e., $v_q(\pi)=w(\pi)$.

The overarching goal of this paper is to study the optimal outcome, which is the solution to the following fixed-point problem: the tuple $\langle w, v ,\mathcal{M}\rangle$ constitutes an optimal outcome, if (\rn{1}) given $v$ and $\mathcal{M}$, the reputation value function $w$ satisfies~\eqref{eq: reservation_value} and (\rn{2}) given $w$, the matching value function $v$ combined with the matching set $\mathcal{M}$ together solve the free boundary system of~\eqref{eq: matching_value_function} and~\eqref{eq: matching_set}. I aim to find the $C^1$ optimal value functions. 

There is in fact a two-way feedback between the reputation function $w$ and the matching variables $\langle v,\mathcal{M}\rangle$. The link connecting $w$ to $\langle v,\mathcal{M}\rangle$ is upheld by the stopping time problem~\eqref{eq: matching_value_function}. The opposite link from the matching variables to $w$ is supported by the Bellman equation for the reputation function in~\eqref{eq: reservation_value}.


I should emphasize that in the optimal stopping time literature the exit option is usually exogenously set, and thus finding the optimal strategy only requires solving the free boundary problem. The main stretch in our setting is that the exit option itself is endogenously determined by the value function associated with the stopping time problem, and this complicates the solution method.

\section{Optimum as the Fixed-Point}
\label{sec: Opt_with_learning}
In the previous section, the optimal outcome was expressed as the fixed-point to the system of necessary conditions~\eqref{eq: reservation_value}, \eqref{eq: matching_value_function} and~\eqref{eq: matching_set}.

Below in Section~\ref{sec: nec_cond}, I appeal to the fact that any $C^1$ solution to the stopping time problem of~\eqref{eq: matching_value_function} satisfies the Bellman equation~\eqref{eq: Bellman}. Additionally, it satisfies two other conditions known as the \textit{majorant} and \textit{superharmonic} properties. Hence, I initiate the search for the optimal tuple in the larger space of $C^1$ functions that satisfy the aforementioned two properties, as well as the system of conditions~\eqref{eq: reservation_value}, \eqref{eq: Bellman} and~\eqref{eq: matching_set}.

Subsequently, in Section~\ref{subs: unique_existence}, I show the predicted outcome (determined by the above necessary conditions and denoted by $\langle w^*, v^*, \mathcal{M}^* \rangle$) is unique. Then, I show this unique tuple is indeed the optimal outcome, that is replacing~\eqref{eq: matching_value_function} with~\eqref{eq: Bellman} was innocuous, and the pair $\langle v^*,\mathcal{M}^*\rangle$ solves the stopping time problem in~\eqref{eq: matching_value_function} given $w^*$.
\subsection{Necessary Conditions}
\label{sec: nec_cond}
In this section, I first show the monotonicity of the matching value functions in $q$. That is to prove for any solution $v$ to the stopping time problem~\eqref{eq: matching_value_function}, one has $v_b(\pi) \geq v_a(\pi)$ for all $\pi$. Second, I highlight two additional necessary conditions, called the \textit{majorant} and \textit{superharmonic} properties, that the optimal value functions must satisfy.\footnote{These two conditions are standard in the literature of optimal stopping and can be found in Chapter~2 of \cite{peskir2006optimal}.} 
\begin{proposition}[Monotonicity]
\label{prop: q_monotonicity}
    Optimal matching value functions must satisfy $v_b(\pi) \geq v_a(\pi)$ for all $\pi$.
\end{proposition}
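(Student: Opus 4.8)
The plan is a pathwise coupling argument that exploits the fact that the two project types differ only through the success intensity, with $\lambda_b > \lambda_a$. First I would reduce the admissible stopping times. Prior to a success the reputation path is deterministic (it solves \eqref{eq: Poisson_Bayes}), so on the event that no success has yet occurred the filtration $\bm{I}$ carries no information beyond elapsed time. Consequently any $\bm{I}$-adapted $\tau$ acts, on the no-success event, as a deterministic quitting time, and once a success arrives the agent collects the fixed success payoff irrespective of how large $\tau$ is. Hence it suffices to work with $v_q(\pi) = \sup_{T \in [0,\infty]} V_q(\pi,T)$, where $V_q(\pi,T)$ is the expected payoff of the policy ``quit at deterministic time $T$ unless a success intervenes.''

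The key device is to reparametrize real time $\sigma$ by the \emph{belief time} $u = \lambda_q \sigma$. Under \eqref{eq: Poisson_Bayes} the reputation written as a function of $u$ is $\pi_u = \pi e^{-u}/(\pi e^{-u} + 1 - \pi)$, which is independent of $q$; and when $\theta = H$ the success intensity measured in belief time is exactly one. I would therefore couple the two problems on a common probability space through a single draw $E \sim \mathrm{Exp}(1)$ (the belief time of success on $\{\theta = H\}$) together with a common belief-time quitting level $U$, setting $T_q = U/\lambda_q$. Along this coupling both arms traverse the identical reputation trajectory and, on $\{\theta=H\}$, realize success at the same belief time $E$; the only difference is that arm $b$ reaches each state at the earlier real time $u/\lambda_b < u/\lambda_a$.

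It then remains to verify, scenario by scenario, that reaching the exit state sooner is beneficial. On a success ($E < U$) the exit payoff is $g(t) = e^{-rt}\big(1 + w(1)\big) - \tfrac{c}{r}\big(1 - e^{-rt}\big)$ evaluated at the real success time $t = E/\lambda_q$; on a quit ($E \geq U$, or $\theta = L$) it is $h(t) = e^{-rt} w(\pi_U) - \tfrac{c}{r}\big(1 - e^{-rt}\big)$ at $t = U/\lambda_q$, with the \emph{same} terminal belief $\pi_U$ for both arms. Both $g$ and $h$ are strictly decreasing in $t$ once $w \geq 0$ (indeed $g'(t) = -e^{-rt}\big[r(1+w(1)) + c\big] < 0$, and likewise $h'(t) = -e^{-rt}\big[r\,w(\pi_U) + c\big] \leq 0$), while $w \geq 0$ follows from \eqref{eq: reservation_value} because $v_q \geq w$ on $\mathcal{M}(\pi)$. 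Since arm $b$ always exits at the earlier real time at the same terminal belief, its payoff dominates that of arm $a$ in every scenario, so $V_b(\pi, U/\lambda_b) \geq V_a(\pi, U/\lambda_a)$ for each $U$; taking suprema over $U$ yields $v_b(\pi) \geq v_a(\pi)$.

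The one genuine subtlety, and the reason the claim is not immediate, is that raising $\lambda_q$ has two opposing effects: it accelerates the arrival of conclusive good news (helpful) but also speeds the erosion of reputation while the agent waits without success (harmful, through the $-\lambda_q \pi(1-\pi)$ drift). The belief-time change is exactly what disentangles them, collapsing both arms onto a shared reputation path and a shared success law and thereby isolating the role of $\lambda_q$ into a pure real-time speed-up, whose value follows from the monotonicity of the discounted exit payoffs. I expect the step needing the most care to be confirming that the coupled paths share the same terminal belief $\pi_U$, so that the $w(\pi_U)$ terms cancel exactly and the comparison reduces cleanly to the timing of exit.
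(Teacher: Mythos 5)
Your proposal is correct and follows essentially the same route as the paper: the paper likewise reduces the $a$-arm to a deterministic quitting time $t_a$ (via the belief threshold $\pi_0$), has the $b$-arm mimic it at the belief-matched time $t_b = \lambda_a t_a/\lambda_b$, notes $\BP(\sigma_a > t_a) = \BP(\sigma_b > t_b)$, and compares the success and no-success terms separately using monotonicity of the discounted exit payoffs in real time. The only cosmetic difference is that where you couple the two arms through a single $\mathrm{Exp}(1)$ draw in belief time to obtain a pathwise comparison, the paper invokes first-order stochastic dominance of the conditional distributions $(\sigma_q \mid \sigma_q \le t_q)$ --- the same fact in distributional rather than pathwise form.
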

\noindent\textit{Proof sketch.} Since $\lambda_b >\lambda_a$, for every initial $\pi$ the success arrives faster with a $b$-project. This means when deciding to match with a $b$-project, the agent can mimic the matching strategy of an $a$-type, and guarantees herself a payoff of at least $v_a(\pi)$. The proof is actually more subtle and is presented in the appendix.
\begin{corollary}
\label{cor: set_inclusion}
At the optimum $\mathcal{M}_a \subseteq \mathcal{M}_b$ and $w(\pi)=0$ if and only if $v_b(\pi)=0$.
\end{corollary}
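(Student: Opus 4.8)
The plan is to derive both assertions directly from the monotonicity established in Proposition~\ref{prop: q_monotonicity} together with the Bellman equation \eqref{eq: reservation_value} for the reputation function. The inclusion $\mathcal{M}_a \subseteq \mathcal{M}_b$ is immediate: if $\pi \in \mathcal{M}_a$, then $v_a(\pi) > w(\pi)$ by the definition \eqref{eq: matching_set}, and Proposition~\ref{prop: q_monotonicity} gives $v_b(\pi) \geq v_a(\pi) > w(\pi)$, so $\pi \in \mathcal{M}_b$.

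For the equivalence I would first record two elementary sign facts. Because $v_q(\pi) = w(\pi)$ on the stopping region and $v_q(\pi) > w(\pi)$ on $\mathcal{M}_q$, one has $v_q(\pi) \geq w(\pi)$ for every $\pi$ and every $q$; in particular $v_b(\pi) \geq w(\pi)$. Next, $w \geq 0$: in \eqref{eq: reservation_value} the sum runs over exactly those $q$ with $v_q(\pi) > w(\pi)$, so each summand $\varphi_q\big(v_q(\pi) - w(\pi)\big)$ is strictly positive (as $\varphi_q > 0$), forcing $rw(\pi) \geq 0$ and hence $w(\pi) \geq 0$. With $0 \leq w(\pi) \leq v_b(\pi)$ in hand, the reverse direction is a squeeze: if $v_b(\pi) = 0$ then $w(\pi) = 0$. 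For the forward direction, suppose $w(\pi) = 0$; then \eqref{eq: reservation_value} reduces to $\kappa \sum_{q \in \mathcal{M}(\pi)} \varphi_q\, v_q(\pi) = 0$, and since every term with $q \in \mathcal{M}(\pi)$ satisfies $v_q(\pi) > w(\pi) = 0$ and $\varphi_q > 0$, the section must be empty, $\mathcal{M}(\pi) = \emptyset$. In particular $\pi \notin \mathcal{M}_b$, so $v_b(\pi) = w(\pi) = 0$ on the stopping region.

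This is essentially a bookkeeping argument once monotonicity is available, so I do not expect a genuine obstacle. The one point requiring care is establishing $w \geq 0$ without circularity, since $w$ appears on both sides of \eqref{eq: reservation_value} and inside the very definition of $\mathcal{M}(\pi)$; this is resolved by reading the sign of each summand off the defining inequality $v_q(\pi) > w(\pi)$ for $q \in \mathcal{M}(\pi)$ rather than from any prior estimate of $w$. As a byproduct the argument yields the slightly sharper statement that $w(\pi) = 0$ holds precisely when the section $\mathcal{M}(\pi)$ is empty, which together with $\mathcal{M}_a \subseteq \mathcal{M}_b$ is what makes the equivalence with $v_b(\pi) = 0$ transparent.
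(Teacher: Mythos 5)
Your proposal is correct and follows essentially the same route as the paper: the inclusion $\mathcal{M}_a \subseteq \mathcal{M}_b$ from Proposition~\ref{prop: q_monotonicity} plus the definition \eqref{eq: matching_set}, and the equivalence from reading $w$ off \eqref{eq: reservation_value} as a nonnegative combination of the $v_q$'s together with $0 \leq w \leq v_b$. Your version is merely more explicit where the paper is terse --- you derive $w \geq 0$ from the sign of each summand rather than assuming non-negativity of $v_a, v_b$ outright, and you note the sharper fact that $w(\pi)=0$ exactly when $\mathcal{M}(\pi)=\emptyset$ --- but the decomposition and the key ingredients are identical.
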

\begin{proof}
The justification for $\mathcal{M}_a\subseteq \mathcal{M}_b$ is that $\pi \in \mathcal{M}_a$ implies $v_a(\pi)>w(\pi)$. Since $v_b(\pi)\geq v_a(\pi)$, then $v_b(\pi)>w(\pi)$, and hence $\pi\in \mathcal{M}_b$. In regard to the second claim, note that by equation~\eqref{eq: reservation_value}, at the optimum $w$ is a linear combination of $v_a$ and $v_b$ (with possibly zero weights). Since $v_b\geq v_a$ and both are non-negative, the second claim follows.   
\end{proof}
\begin{corollary}
\label{cor: pi=1}
    At the optimum, $1\in \mathcal{M}_a$ if and only if
    \begin{equation}
    \label{eq: cost_regime_determination}
        \lambda_a-c> \frac{\kappa \varphi_b(\lambda_b-c)}{r+\kappa\varphi_b+\lambda_b}\,.
    \end{equation}
\end{corollary}
The proof follows from the previous corollary and equation~\eqref{eq: Bellman_at_1}. We say that the economy is in the \textit{low cost regime} if the above inequality holds, and otherwise is in the \textit{high cost regime}. In particular, it claims that at $\pi=1$ --- where the learning channel is absent --- selecting an $a$-project is optimal if its payoff exceeds the opportunity cost, that is induced by waiting for a superior $b$-project.
\begin{remark}
\label{rem: w(1)}
It is noteworthy to mention that $w(1)$ itself is an endogenous object, that takes different forms depending on the cost regime. Its values in the high and low cost regimes are respectively expressed in equations~\eqref{eq: high_cost_w} and~\eqref{eq: low_cost_w} of the appendix --- in the proof of Corollary~\ref{cor: pi=1}. To avoid introducing extra notation, I will henceforth use $w(1)$ to refer to both of these expressions, considering that the cost regime is clear from the context.
\end{remark}

Turning to the second group of necessary conditions, the dynamics of the reputation process can be compactly represented by 
\begin{equation*}
    \d \pi_t = \left(1-\pi_{t^-}\right)\left(\d \iota_t -\lambda_q\pi_{t^-}\d t\right)\,,
\end{equation*}
where $\iota$ is the success indicator process, that is $\iota_t :=1_{\{t \geq \sigma\}}$. The infinitesimal generator associated with this stochastic process is $\mathcal{L}_q: C^1[0,1]\to C[0,1]$, where for a generic $u \in C^1[0,1]$:\footnote{Space of continuously differentiable functions on $(0,1)$ with continuous extension to the boundary $\{0,1\}$.}
\begin{equation*}
    \left[\mathcal{L}_q u\right](\pi) = \lambda_q \pi \big(1+w(1)-u(\pi)\big) -\lambda_q \pi(1-\pi)u'(\pi).
\end{equation*}
For every candidate fixed-point tuple $\langle w, v ,\mathcal{M}\rangle$ in the space of $C^1$ functions, that satisfy the system~\eqref{eq: reservation_value}, \eqref{eq: matching_value_function} and~\eqref{eq: matching_set} the following two conditions must hold for the optimal $v$ and $w$ at all $\pi \in [0,1]$ and $q \in \{a,b\}$:
\begin{enumerate}[label=(\roman*)]
	\item \textit{Majorant property}: $v_q(\pi) \geq w(\pi)$.
	\item \textit{Superharmonic property}: $[\mathcal{L}_q v_q](\pi)-rv_q(\pi)-c\leq 0$.
\end{enumerate}
The first condition simply means that in every match the agent has the option to terminate the project, thereby enjoying her reputation value $w$. The second condition means \textit{on expectation} a typical agent \textit{loses} if she decides to keep the match on the stopping region. 

Usually in the ``one dimensional'' experimentation settings, where the continuation region is one dimensional, the agent follows threshold strategy and thus the continuation region is naturally a connected subset.\footnote{One dimensional threshold rule is ubiquitous in many optimal stopping problems, e.g., \cite{hyndman2022procrastination, henderson2023cautious}.} However, in the current setting, where the matching set is ``two-dimensional'', consisting of two sections $\mathcal{M}_a$ and $\mathcal{M}_b$, one may expect a situation in which one of these subsets contains two disjoint intervals, and hence not be connected. In the next three results, using Proposition~\ref{prop: q_monotonicity} and the above two conditions, I will rule out this possibility, and show that both sections of the matching set are connected intervals.
\begin{lemma}[Lowest boundary point]
\label{lem: beta}
Let $\beta:= \inf\mathcal{M}_b$, that is the lowest boundary point of the high type section of the optimal matching set. Then,
\begin{equation}
\label{eq: beta}
    \beta = \frac{c}{\lambda_b \big(1+w(1)\big)}\,.
\end{equation}
\end{lemma}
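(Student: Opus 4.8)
The plan is to exploit the $C^1$ smoothness together with the structure of the stopping region lying below $\beta$. First I would show that the whole interval $[0,\beta)$ sits in the stopping region with both value functions vanishing. Indeed, if $\pi<\beta=\inf\mathcal{M}_b$ then $\pi\notin\mathcal{M}_b$, and by Corollary~\ref{cor: set_inclusion} ($\mathcal{M}_a\subseteq\mathcal{M}_b$) also $\pi\notin\mathcal{M}_a$, so the $\pi$-section of the matching set is empty. Hence the sum in the reservation equation \eqref{eq: reservation_value} is empty and $w(\pi)=0$; since $\pi$ is in the stopping region of the $b$-problem, the convention on $\mathcal{M}^c$ accompanying \eqref{eq: matching_set} gives $v_b(\pi)=w(\pi)=0$ as well. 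Thus $w\equiv v_b\equiv 0$ on $[0,\beta)$ (which in particular forces $\beta>0$, consistent with the claimed formula).

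Next I would read off the boundary data at $\beta$ directly from the $C^1$ hypothesis. Continuity yields $v_b(\beta)=\lim_{\pi\to\beta^-}v_b(\pi)=0$, and because $v_b\equiv 0$ immediately to the left of $\beta$, its left derivative there vanishes; continuity of $v_b'$ then gives the smooth-fit value $v_b'(\beta)=0$. This is the only place the $C^1$ restriction is used, and it is what makes the subsequent computation collapse.

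Finally I would evaluate the continuation-region equation at the boundary by a limiting argument. Since $v_b-w$ is continuous, $\mathcal{M}_b$ is open, so its infimum $\beta$ is not in $\mathcal{M}_b$ and there is a sequence $\pi_n\in\mathcal{M}_b$ with $\pi_n\downarrow\beta$ and $\pi_n>\beta$. On each $\pi_n$ the second branch of \eqref{eq: Bellman} holds with equality, i.e. $rv_b(\pi_n)=-c+\lambda_b\pi_n\big(1+w(1)-v_b(\pi_n)\big)-\lambda_b\pi_n(1-\pi_n)v_b'(\pi_n)$. Letting $n\to\infty$ and using continuity of $v_b$ and $v_b'$ together with $v_b(\beta)=v_b'(\beta)=0$ reduces this to $0=-c+\lambda_b\beta\big(1+w(1)\big)$, which rearranges to \eqref{eq: beta}.

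I do not expect a genuinely hard step here: the algebra at the end is immediate once the boundary data are in place. The one point deserving care is the first step — verifying that $[0,\beta)$ lies entirely in the stopping region with $w\equiv v_b\equiv 0$ — since it is exactly the vanishing of $v_b(\beta)$ and $v_b'(\beta)$ that lets me discard those terms in the ODE. Economically, the identity just states that at the lowest acceptable reputation the instantaneous expected surplus from a breakthrough, $\lambda_b\beta\big(1+w(1)\big)$, precisely offsets the flow cost $c$.
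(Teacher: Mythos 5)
Your proof is correct and takes essentially the same route as the paper's: establish $w\equiv v_b\equiv 0$ below $\beta$ via Corollary~\ref{cor: set_inclusion} and the reservation equation \eqref{eq: reservation_value}, deduce the smooth-fit data $v_b(\beta)=v_b'(\beta)=0$ from the $C^1$ hypothesis, and insert these into the Bellman equation \eqref{eq: Bellman} on the continuation region. The only difference is presentational: you spell out the limiting argument along a sequence $\pi_n\downarrow\beta$ in $\mathcal{M}_b$, which the paper's one-line ``inserting this into the Bellman equation'' leaves implicit.
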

\begin{proof}
Since the value functions are continuous, $\mathcal{M}_b$ is an open subset in $[0,1]$, and hence $\beta \notin \mathcal{M}_b$. Corollary~\ref{cor: set_inclusion} implies that $w(\pi) = 0$ for all $\pi\leq \beta$. By optimality the matching value function $v_b$ must smoothly meet the zero function at $\beta$, i.e., $v_b(\beta)=v'_b(\beta)=w'(\beta)=w(\beta)=0$. By substituting this into the Bellman equation~\eqref{eq: Bellman}, we arrive at~\eqref{eq: beta}.
\end{proof}

The next lemma shows that at the optimum, $\mathcal{M}_b$ is an \textit{increasing interval}. That is if $\pi \in \mathcal{M}_b$, then $\pi' \in \mathcal{M}_b$ for all $\pi'>\pi$. To show this claim, suppose to the contrary that $\exists \pi' >\pi$ such that $\pi' \notin \mathcal{M}_b$. Then, Corollary~\ref{cor: set_inclusion} implies tha $v_b(\pi')=0$, whereas $v_b(\pi)>0$ because $\pi \in \mathcal{M}_b$. This combination will be ruled out in the next lemma, thus proving that $\mathcal{M}_b$ is an increasing interval.

\begin{lemma}[Single crossing]
\label{lem: single_cross}
Let $v_b$ be the optimal matching value function in $C^1$. If $v_b(\pi)>0$, then $v_b(\pi')>0$ for all $\pi'>\pi$.
\end{lemma}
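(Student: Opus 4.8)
The plan is to argue by contradiction, exploiting the fact that on the set where $v_b$ is strictly positive the continuation branch of the HJB \eqref{eq: Bellman} is the active one, and then to read off a forced sign of $v_b'$ at the right boundary of the positive region that conflicts with the geometry of a zero that is approached from above. The first order of business is to record the identification $\{v_b>0\}=\mathcal{M}_b$. Indeed, if $v_b(\pi)>0$ but $\pi\notin\mathcal{M}_b$, then by $\mathcal{M}_a\subseteq\mathcal{M}_b$ (Corollary~\ref{cor: set_inclusion}) the section $\mathcal{M}(\pi)$ is empty, so \eqref{eq: reservation_value} forces $w(\pi)=0$, and then Corollary~\ref{cor: set_inclusion} gives $v_b(\pi)=0$, a contradiction; the reverse inclusion is immediate from the majorant property together with $w\geq 0$. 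Consequently, at every point of $\{v_b>0\}$ the second term in \eqref{eq: Bellman} attains the maximum, i.e. $v_b$ solves the continuation ODE there.

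Suppose now, contrary to the claim, that $v_b(\pi)>0$ for some $\pi$ while $v_b(\pi')=0$ for some $\pi'>\pi$. Set $\bar\pi:=\inf\{\rho\in(\pi,1]:v_b(\rho)=0\}$. Since $v_b$ is continuous and $v_b(\pi)>0$, one has $\bar\pi>\pi$, $v_b(\bar\pi)=0$, and $v_b>0$ on $[\pi,\bar\pi)$; in particular $\bar\pi>\pi>\beta$, where $\beta=\inf\mathcal{M}_b$ is the threshold of Lemma~\ref{lem: beta} (recall $\pi\in\mathcal{M}_b$ and $\mathcal{M}_b$ is open, so $\pi>\beta$). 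Because $v_b$ descends to $0$ at $\bar\pi$ from a strictly positive region on its left, the $C^1$ regularity yields, through the difference quotient $(v_b(\bar\pi)-v_b(\rho))/(\bar\pi-\rho)<0$, the bound $v_b'(\bar\pi)\leq 0$.

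Finally, I take the one-sided limit $\rho\uparrow\bar\pi$ in the continuation ODE, which holds throughout $[\pi,\bar\pi)$. Using $v_b(\bar\pi)=0$ and continuity of $v_b'$ gives, for $\bar\pi\in(0,1)$,
\begin{equation*}
    v_b'(\bar\pi)=\frac{\lambda_b\bar\pi\big(1+w(1)\big)-c}{\lambda_b\bar\pi(1-\bar\pi)}=\frac{\lambda_b\big(1+w(1)\big)(\bar\pi-\beta)}{\lambda_b\bar\pi(1-\bar\pi)}>0,
\end{equation*}
where I substituted $c=\lambda_b\beta\big(1+w(1)\big)$ from \eqref{eq: beta} and used $\bar\pi>\beta$. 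This contradicts $v_b'(\bar\pi)\leq 0$. The borderline case $\bar\pi=1$ is dispatched by the same limit: the denominator vanishes, so the numerator must vanish too, forcing $\beta=1$, which is impossible since $\beta=c/(\lambda_b(1+w(1)))<1$ as $c<\lambda_b$. In either case we reach a contradiction, so no zero of $v_b$ can lie to the right of a point where $v_b>0$, which is the asserted single-crossing property.

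The main obstacle I anticipate is the opening identification $\{v_b>0\}=\mathcal{M}_b$, since it is precisely this that licenses using the continuation ODE on all of $[\pi,\bar\pi)$ and hence differentiating and passing to the limit; it rests on feeding the emptiness of $\mathcal{M}(\pi)$ into \eqref{eq: reservation_value} via $\mathcal{M}_a\subseteq\mathcal{M}_b$. Once that is secured, the clash between the geometric sign $v_b'(\bar\pi)\leq 0$ and the ODE-dictated sign $v_b'(\bar\pi)>0$ is short, and it is essentially a restatement that the only admissible smooth contact of $v_b$ with zero occurs at $\beta$.
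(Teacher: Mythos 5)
Your proof is correct, but its engine differs from the paper's. The paper argues at an \emph{arbitrary} zero $\pi'>\pi$: since $v_b\geq 0$, such a $\pi'$ is a global minimum, hence $v_b'(\pi')=0$, and plugging $v_b(\pi')=v_b'(\pi')=0$ into the \emph{superharmonic property} $[\mathcal{L}_b v_b](\pi')-rv_b(\pi')-c\leq 0$ immediately yields $\lambda_b\pi'\big(1+w(1)\big)-c\leq 0$, i.e.\ $\pi'\leq\beta$, contradicting $\pi'>\pi>\beta$. You never invoke superharmonicity: instead you establish the identification $\{v_b>0\}=\mathcal{M}_b$, pass to the \emph{first} zero $\bar\pi$ above $\pi$, use the fact that the continuation branch of \eqref{eq: Bellman} holds with equality on $[\pi,\bar\pi)$, and take a one-sided limit (legitimate by the $C^1$ regularity) to force $v_b'(\bar\pi)>0$, against the geometric bound $v_b'(\bar\pi)\leq 0$. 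The paper's route is shorter and point-local: any zero works, there is no infimum construction, no identification step, and no degenerate case at the endpoint (the factor $\pi'(1-\pi')$ multiplying the derivative vanishes at $\pi'=1$ anyway, so the same inequality survives). Your route buys independence from the superharmonic condition --- it rests only on the Bellman equality over the continuation region together with Lemma~\ref{lem: beta}, Corollary~\ref{cor: set_inclusion} and \eqref{eq: reservation_value} --- at the cost of the extra bookkeeping: the identification $\{v_b>0\}=\mathcal{M}_b$ (which you rightly flag as the load-bearing step), the first-zero construction, and the separate treatment of $\bar\pi=1$, all of which you execute correctly.
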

\begin{proof}
Assume there exists $\pi'>\pi$ such that $v_b(\pi')=0$. Since $\pi \in \mathcal{M}_b$, then clearly $\beta < \pi<\pi'$. Also, since $v_b \geq 0$, then $\pi'$ is a global minimum and thus $v'_b(\pi')=0$. Because of the superharmonic property at $\pi'$, one has
\begin{equation*}
    0 \geq \mathcal{L}_b v_b(\pi')-rv_b(\pi') = \lambda_b \pi' \big(1+w(1)\big) -c\,,
\end{equation*}
where the equality holds because $v_b(\pi')=v'_b(\pi')=0$. This in turn implies that $\pi'\leq c/\lambda_b\big(1+w(1)\big)=\beta$, which is a contradiction.
\end{proof}
As previously argued, the above lemma shows that at the optimum $\mathcal{M}_b$ is the increasing interval $(\beta,1]$, and hence is unique. This in turn leaves only one candidate for the optimal $v_b$, that satisfies the Bellman equation~\eqref{eq: Bellman} on the continuation region $(\beta,1]$. A particular solution for this differential equation is
\begin{equation*}
    -\frac{c}{r}+\frac{\lambda_b}{r+\lambda_b}\left(1+w(1)+\frac{c}{r}\right)\pi\,,
\end{equation*}
and the homogeneous solution is $\left(1-\pi\right)^{1+r/\lambda_b}\pi^{-r/\lambda_b}$. Since $v_b(\beta)=v'_b(\beta)=0$, the only candidate for the optimal $v_b$ is 
\begin{equation}
\label{eq: optimal_vb}
	\begin{gathered}
		v^*_b(\pi) = -\frac{c}{r}+\frac{\lambda_b}{r+\lambda_b}\left(1+w(1)+\frac{c}{r}\right)\pi\\
			+\left(\frac{c}{r}-\frac{\lambda_b}{r+\lambda_b}\left(1+w(1)+\frac{c}{r}\right)\beta\right)\left(\frac{1-\pi}{1-\beta}\right)^{1+r/\lambda_b}\left(\frac{\pi}{\beta}\right)^{-r/\lambda_b}\,.
	\end{gathered}
\end{equation}
Exploiting the above characterization as the only viable candidate for the optimal $v_b$ (in any $C^1$ fixed-point outcome) and the fact that $\mathcal{M}_a\subseteq \mathcal{M}_b$, I prove in the following proposition that $\mathcal{M}_a$ is also an increasing interval --- especially, it means $\mathcal{M}_a$ cannot have disjoint subsets. The proof involves several steps, so it is relegated to the appendix.
\begin{proposition}[Optimal $\mathcal{M}_a$]
\label{prop: optimal_Ma}
    In the low cost regime, optimal $\mathcal{M}_a$ is an increasing interval, i.e., $\mathcal{M}_a = (\alpha,1]$ for some $\alpha \geq \beta$. And in the high cost regime $\mathcal{M}_a=\emptyset$.
\end{proposition}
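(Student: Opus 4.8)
The plan is to reduce the whole statement to a sign analysis of a single, \emph{explicit} scalar function built from $v^*_b$. Since Corollary~\ref{cor: set_inclusion} and Lemma~\ref{lem: single_cross} give $\mathcal{M}_a\subseteq\mathcal{M}_b=(\beta,1]$, on the candidate region $S:=\mathcal{M}_b\setminus\mathcal{M}_a$ (where $b$ is accepted but $a$ is not) the reservation equation \eqref{eq: reservation_value} reads $rw=\kappa\varphi_b(v_b-w)$, so $w=\mu\,v^*_b$ with $\mu:=\kappa\varphi_b/(r+\kappa\varphi_b)$ and $v_a=w=\mu v^*_b$ there. Hence on $S$ the superharmonic property becomes $\psi_S\le 0$, where I define on all of $(\beta,1]$
\[
\psi_S(\pi):=\big[\mathcal{L}_a(\mu v^*_b)\big](\pi)-r\,\mu v^*_b(\pi)-c .
\]
Because $v^*_b$ is known in closed form from \eqref{eq: optimal_vb}, $\psi_S$ is a fully explicit function, and the first clean fact is $\{\psi_S>0\}\subseteq\mathcal{M}_a$, since a point of $(\beta,1]$ with $\psi_S>0$ cannot lie in $S$.

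Using the $b$-Bellman equation $\mathcal{L}_b v^*_b-rv^*_b-c=0$ to eliminate the derivative term, $\psi_S$ simplifies to an affine function of $\pi$ minus a positive multiple of $v^*_b$,
\[
\psi_S(\pi)=(1-\mu)\lambda_a\big(1+w(1)\big)\pi+\mu\tfrac{\lambda_a}{\lambda_b}c-c-\mu r\tfrac{\lambda_b-\lambda_a}{\lambda_b}\,v^*_b(\pi).
\]
The structural backbone is then $\psi_S''=-\mu r\tfrac{\lambda_b-\lambda_a}{\lambda_b}(v^*_b)''$, so I first show $v^*_b$ is \emph{strictly convex} on $(\beta,1)$: differentiating \eqref{eq: Bellman} gives $(v^*_b)''(\beta)=\big(1+w(1)\big)/\big(\beta(1-\beta)\big)>0$ at the smooth-fit point, and positivity propagates along the resulting first-order ODE for $(v^*_b)''$. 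Strict convexity of $v^*_b$ makes $\psi_S$ \emph{strictly concave}, so $\{\psi_S>0\}$ is an interval and $\psi_S$ has at most two zeros in $(\beta,1)$.

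Next I identify boundary points of $\mathcal{M}_a$ with zeros of $\psi_S$. At any boundary point $\pi^*\in(\beta,1)$, $C^1$ smooth fit gives $v_a(\pi^*)=w(\pi^*)=\mu v^*_b(\pi^*)$ and $v'_a(\pi^*)=w'(\pi^*)$; since $\pi\mapsto\mathcal{L}_a v_a(\pi)-rv_a(\pi)-c$ is continuous and vanishes on $\mathcal{M}_a$, continuity forces $\psi_S(\pi^*)=0$. This rules out gaps: if $\mathcal{M}_a$ had two components separated by an interval $(\pi_1,\pi_2)\subseteq S$, then $\psi_S(\pi_1)=\psi_S(\pi_2)=0$ while strict concavity forces $\psi_S>0$ on $(\pi_1,\pi_2)$, contradicting $\psi_S\le0$ on $S$. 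Hence $\mathcal{M}_a$ is connected.

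It remains to read off the two regimes from the endpoints of $\psi_S$. A direct computation using $\beta=c/\big(\lambda_b(1+w(1))\big)$ and $v^*_b(\beta)=(v^*_b)'(\beta)=0$ gives $\psi_S(\beta)=c(\lambda_a/\lambda_b-1)<0$, while the sign of $\psi_S(1)$ coincides exactly with the cost-regime inequality of Corollary~\ref{cor: pi=1}. In the low cost regime $\psi_S(1)>0$, so $\{\psi_S>0\}=(\alpha,1]$ for some $\alpha\ge\beta$; combined with $\{\psi_S>0\}\subseteq\mathcal{M}_a$, connectedness, and $1\in\mathcal{M}_a$, this yields $\mathcal{M}_a=(\alpha,1]$. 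In the high cost regime I show $\psi_S\le0$ throughout $(\beta,1]$; then for $D:=v_a-\mu v^*_b\ge0$, which satisfies $\lambda_a\pi(1-\pi)D'+(r+\lambda_a\pi)D=\psi_S$ on $\mathcal{M}_a$ with integrating factor $1/\eta_a$ for $\eta_a=(1-\pi)^{1+r/\lambda_a}\pi^{-r/\lambda_a}$, a nonempty component (with $D$ vanishing at its endpoints) would force the vanishing of the integral of a nonpositive, not-identically-zero integrand, which is impossible; hence $\mathcal{M}_a=\emptyset$. The main obstacle is precisely the two explicit-formula computations flagged above: strict convexity of $v^*_b$ on $(\beta,1)$ (hence strict concavity of $\psi_S$), and the global sign $\psi_S\le0$ in the high cost regime, which hinges on the regime-specific value of $w(1)$ recorded in Remark~\ref{rem: w(1)}.
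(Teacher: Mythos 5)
Your connectedness argument and your low-cost-regime conclusion are sound, but they are essentially the paper's own proof in different notation: your $\psi_S$ is exactly the paper's $\mathcal{D}_a v_a$ evaluated on $\mathcal{M}_a^c\cap\mathcal{M}_b$ (the paper's first lemma derives the same explicit expression, computes $\frac{\d^2}{\d \pi^2}\mathcal{D}_a v_a=-\frac{\kappa\varphi_b(\lambda_b-\lambda_a)}{(r+\kappa\varphi_b)\lambda_b}v_b^{*\prime\prime}<0$, and rules out a gap by the same ``zero at both ends of the gap plus strict concavity'' contradiction), and the low-cost case then follows from connectedness together with $1\in\mathcal{M}_a$ (Corollary~\ref{cor: pi=1}), just as in the paper. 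Your supporting computations here are correct: $\psi_S(\beta)=c(\lambda_a/\lambda_b-1)<0$, and $v_b^{*\prime\prime}(\beta)=\big(1+w(1)\big)/\big(\beta(1-\beta)\big)>0$ with positivity propagating along the first-order ODE satisfied by $v_b^{*\prime\prime}$.

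The genuine gap is the high-cost regime. Your reduction is valid as architecture: since $1\notin\mathcal{M}_a$ in that regime, every component of $\mathcal{M}_a$ would have $D:=v_a-\mu v_b^*$ vanishing at both endpoints, and the identity $(D/\eta_a)'=\psi_S/\big(\lambda_a\pi(1-\pi)\eta_a\big)$ then contradicts $\psi_S\le0$, $\psi_S\not\equiv 0$. But the hypothesis $\psi_S\le 0$ on all of $(\beta,1]$ is precisely what you never prove --- you assert it and flag it as ``the main obstacle.'' It does not follow from anything you have established: strict concavity together with $\psi_S(\beta)<0$ and $\psi_S(1)\le 0$ is perfectly compatible with $\psi_S>0$ on an interior interval (a concave function can bulge above zero between two negative endpoint values), and ruling out exactly that bulge is the entire difficulty of the high-cost case. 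The claim is true --- it is precisely the superharmonicity of $v_a^*=\frac{\kappa\varphi_b}{r+\kappa\varphi_b}v_b^*$ with respect to $\mathcal{L}_a$, which the paper establishes not here but in the proof of Theorem~\ref{thm: unique}, via a nontrivial chain: the convexity bound $v_b^*(\pi)+(1-\pi)v_b^{*\prime}(\pi)\le v_b^*(1)$, an affine-in-$\pi$ majorant that is negative at $\pi=0$, monotonicity in $\lambda_a$, and an exact cancellation at the extreme value $\lambda_a=c+rw(1)$ using the regime-specific expression \eqref{eq: high_cost_w} for $w(1)$. The paper's own proof of Proposition~\ref{prop: optimal_Ma} deliberately avoids needing this global sign: it assumes $x=\inf\mathcal{M}_a\in\mathcal{M}_b$, observes via Lemma~\ref{lem: ordering} that $x$ maximizes $v_b/v_a$, and contradicts the second-order condition at $x$ through the primitive inequality \eqref{eq: primitives}. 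If you import the superharmonicity computation from Theorem~\ref{thm: unique}, your integrating-factor argument does yield an alternative, arguably cleaner, treatment of the high-cost regime that reuses a verification the paper must perform anyway; as written, however, the crucial inequality is missing and your proof of the second half of the proposition is incomplete.
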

This proposition implies that in the high cost regime $v^*_a = \frac{\kappa \varphi_b}{r+\kappa \varphi_b}\, v^*_b$, and $\mathcal{M}^*_a = \emptyset$. In the low cost regime, however, $\mathcal{M}^*_a = (\alpha,1]$ and 
\begin{equation}
\label{eq: optimal_va}
    v^*_a(\pi) = -\frac{c}{r}+\frac{\lambda_a}{r+\lambda_a}\left(1+w(1)+\frac{c}{r}\right)\pi+\gamma \left(\frac{1-\pi}{1-\alpha}\right)^{1+r/\lambda_a}\left(\frac{\pi}{\alpha}\right)^{-r/\lambda_a}\,,
\end{equation}
in that the coefficient $\gamma$ and the lower boundary point $\alpha$ are determined by the following boundary conditions: 
\begin{equation}
\label{eq: boundary_cond_va}
v^*_a(\alpha) = \frac{\kappa \varphi_b}{r+\kappa \varphi_b} \, v^*_b(\alpha)\, \text{ and } v^{*'}_a(\alpha) = \frac{\kappa \varphi_b}{r+\kappa \varphi_b} \, v^{*'}_b(\alpha)\,.
\end{equation}

\subsection{Uniqueness and Martingale Verification}
\label{subs: unique_existence}
The characterizations in the previous section essentially offered a unique tuple as the only viable candidate satisfying~\eqref{eq: reservation_value}, \eqref{eq: Bellman}, and~\eqref{eq: matching_set} as well as the majorant and superharmonic conditions. In the first theorem below, I summarize the properties of this tuple.
\begin{theorem}[Uniqueness]
\label{thm: unique}
    The following profile expresses the unique $C^1$ value functions and the matching sets, that satisfy the fixed-point conditions~\eqref{eq: reservation_value}, \eqref{eq: Bellman}, and ~\eqref{eq: matching_set} as well as the majorant and superharmonic properties:
    \begin{enumerate}[label=(\roman*)]
        \item In each cost regime $\mathcal{M}^*_b = (\beta,1]$, where $\beta$ is determined by Lemma~\ref{lem: beta}. Additionally, $v^*_b$ follows~\eqref{eq: optimal_vb}.

        \item In the high cost regime $\mathcal{M}^*_a =\emptyset$ and $v^*_a = \frac{\kappa \varphi_b}{r+\kappa \varphi_b}\, v^*_b$. In the low cost regime $\mathcal{M}^*_a = (\alpha,1]$, where $\alpha$ is determined by~\eqref{eq: boundary_cond_va}, and $v^*_a$ follows~\eqref{eq: optimal_va}.
    \end{enumerate}
\end{theorem}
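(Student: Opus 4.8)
The plan is to treat this theorem as the consolidation of the characterizations already obtained in Section~\ref{sec: nec_cond}, so that the only genuinely new work is to verify that the endogenous constants entering those characterizations are pinned down uniquely and are mutually consistent. Proposition~\ref{prop: q_monotonicity} and Corollary~\ref{cor: set_inclusion} supply $v_b \geq v_a$ and $\mathcal{M}_a \subseteq \mathcal{M}_b$; Lemmas~\ref{lem: beta} and~\ref{lem: single_cross} force $\mathcal{M}^*_b = (\beta,1]$ with $\beta$ as in \eqref{eq: beta} and single out \eqref{eq: optimal_vb} as the only candidate for $v^*_b$; and Proposition~\ref{prop: optimal_Ma} reduces $\mathcal{M}^*_a$ to either $\emptyset$ (high cost) or a single increasing interval $(\alpha,1]$ (low cost). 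Thus, once I show that the scalars $w(1)$, $\beta$, and---in the low-cost regime---the pair $(\alpha,\gamma)$ are uniquely determined, uniqueness of the whole tuple follows.

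First I would pin down $w(1)$. The key observation is that at $\pi=1$ the learning term $\lambda_q\pi(1-\pi)v'_q(\pi)$ in \eqref{eq: Bellman} vanishes, so on the continuation region the Bellman equation collapses to the purely algebraic relation $v_q(1) = \bigl(\lambda_q(1+w(1)) - c\bigr)/(r+\lambda_q)$. Substituting these into \eqref{eq: reservation_value} evaluated at $\pi=1$---with $\mathcal{M}(1)=\{b\}$ in the high-cost regime and $\mathcal{M}(1)=\{a,b\}$ in the low-cost regime, per Corollary~\ref{cor: pi=1}---produces a single linear equation in $w(1)$ in each regime. I would solve it and check that the root is the admissible one (positive, and consistent with $1\in\mathcal{M}_b$, respectively $1\in\mathcal{M}_a$), recovering the two closed forms referenced in Remark~\ref{rem: w(1)}. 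Because the derivative term drops out, $w(1)$ decouples from the rest of the problem; this in turn makes $\beta$ explicit through \eqref{eq: beta} and fixes $v^*_b$ as the unique $C^1$ solution of the Bellman ODE on $(\beta,1]$ pasting smoothly to zero at $\beta$, namely \eqref{eq: optimal_vb}.

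With $v^*_b$ in hand the reservation value is reconstructed piecewise from \eqref{eq: reservation_value}: $w=0$ below $\beta$, $w=\tfrac{\kappa\varphi_b}{r+\kappa\varphi_b}v^*_b$ where only $b$-projects are accepted, and $w=(\kappa\varphi_a v^*_a+\kappa\varphi_b v^*_b)/(r+\kappa\varphi_a+\kappa\varphi_b)$ where both are. In the high-cost regime $\mathcal{M}^*_a=\emptyset$, so $v^*_a = w = \tfrac{\kappa\varphi_b}{r+\kappa\varphi_b}v^*_b$ is immediate. In the low-cost regime $v^*_a$ must solve the Bellman ODE for $q=a$ on $(\alpha,1]$, giving the two-parameter family \eqref{eq: optimal_va}, and the free boundary $\alpha$ together with the coefficient $\gamma$ are fixed by the value-matching and smooth-pasting conditions \eqref{eq: boundary_cond_va} at $\alpha$.

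The main obstacle is the uniqueness of the pair $(\alpha,\gamma)$ in the low-cost regime, which is where the real analysis lies rather than bookkeeping. I would eliminate $\gamma$ between the two conditions in \eqref{eq: boundary_cond_va} to obtain a single scalar equation in $\alpha$, and then argue that the resulting map is strictly monotone---equivalently, single-crossing---on $(\beta,1)$, so that it admits exactly one root with $\alpha\geq\beta$ as required by $\mathcal{M}_a\subseteq\mathcal{M}_b$. I would close by checking global consistency of the assembled candidate: continuity and $C^1$-pasting of $w$ at $\beta$ (automatic from $v^*_b(\beta)=v^{*\prime}_b(\beta)=0$) and at $\alpha$ (automatic from \eqref{eq: boundary_cond_va}), together with the majorant and superharmonic inequalities holding throughout $[0,1]$ and not merely at the boundary points. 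Since every free constant has been shown to be uniquely determined and the prior results leave no other admissible configuration, the displayed profile is the unique $C^1$ tuple satisfying the stated fixed-point and optimality conditions.
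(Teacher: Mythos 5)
Your first three paragraphs track the paper's own logic: the paper likewise treats the characterization as already settled by Section~\ref{sec: nec_cond} (Lemma~\ref{lem: beta}, Lemma~\ref{lem: single_cross}, Proposition~\ref{prop: optimal_Ma}, and the closed forms \eqref{eq: optimal_vb}, \eqref{eq: optimal_va}, \eqref{eq: boundary_cond_va}), and your recipe for pinning down $w(1)$ at $\pi=1$ is exactly the content of the proof of Corollary~\ref{cor: pi=1}, yielding \eqref{eq: high_cost_w} and \eqref{eq: low_cost_w}. The genuine gap is in your closing sentences: you relegate the superharmonic property to a final ``consistency check,'' when verifying it is the \emph{entire} content of the paper's proof of Theorem~\ref{thm: unique}, and it is not routine. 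The problem sits on the region $\mathcal{M}^{*c}_a \cap \mathcal{M}^*_b$ --- all of $(\beta,1]$ in the high cost regime, and $(\beta,\alpha]$ in the low cost one --- where $v^*_a = \frac{\kappa \varphi_b}{r+\kappa \varphi_b}\, v^*_b$ is a scaled copy of $v^*_b$ and does \emph{not} solve the $a$-Bellman equation; there the inequality $\mathcal{L}_a v^*_a(\pi)-rv^*_a(\pi)-c\leq 0$ is a substantive claim. The paper proves it by rewriting the left side through $(\mathcal{L}_b-\mathcal{L}_a)v^*_b$, invoking convexity of $v^*_b$ to bound $v^*_b(\pi)+(1-\pi)v^{*\prime}_b(\pi)\leq v^*_b(1)$, reducing to an expression affine in $\pi$, and then using the defining inequality of the high cost regime ($\lambda_a - c \leq r w(1)$) together with the closed form \eqref{eq: high_cost_w}. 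The resulting bound equals zero exactly in the worst case, so the inequality is tight: no soft appeal to continuity, boundedness, or ``consistency'' can substitute for this computation. Without it you have not shown that the displayed profile satisfies the stated conditions at all --- that is, existence of an admissible tuple, not merely uniqueness, remains open.

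A secondary point: the step you single out as ``the main obstacle'' --- eliminating $\gamma$ from \eqref{eq: boundary_cond_va} and proving the resulting scalar equation in $\alpha$ is strictly monotone --- is also left as an unexecuted plan, and it is not the route the paper takes: there, $\mathcal{M}^*_a$ being empty or a single increasing interval is delivered by Proposition~\ref{prop: optimal_Ma}, the pair $(\alpha,\gamma)$ is then read off from the $C^1$ pasting conditions, and the analytic burden of the theorem's proof falls entirely on superharmonicity. So the proposal both misplaces where the difficulty lies and omits the argument that resolves it.
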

All the claims in this theorem, except a complete verification of the superharmonic property (especially outside of the continuation region), were justified in the previous section. Therefore, it only remains to establish the superharmonic property in the appendix.

The next step is to demonstrate that the unique tuple expressed in the previous theorem does indeed correspond to the best response of the agent. Formally, one needs to prove that given $w^*$, the pair $\langle v^*,\mathcal{M}^*\rangle$ describe the optimal value function and the optimal continuation region for the stopping time problem of \eqref{eq: matching_value_function}. In the next theorem, I will apply a Martingale verification procedure to show this step.
\begin{theorem}[Unique optimum]
\label{thm: optimal_policy}
     $\langle w^*, v^*, \mathcal{M}^* \rangle$ is the unique optimal tuple in the space of $C^1$ value functions, satisfying conditions~\eqref{eq: reservation_value}, \eqref{eq: matching_value_function}, and~\eqref{eq: matching_set}.
\end{theorem}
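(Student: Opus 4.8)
The plan is to run a martingale verification argument that upgrades the candidate $v^*_q$ from a mere solution of the HJB equation \eqref{eq: Bellman} to the genuine value function of the stopping time problem \eqref{eq: matching_value_function}, with $w=w^*$ held fixed. Fix $q$ and $\pi$, write $V=v^*_q$, and let $J^\tau(\pi)$ denote the objective in \eqref{eq: matching_value_function} evaluated at a given stopping time $\tau$. I would introduce the discounted gain process that accumulates the flow cost, credits the terminal reward $1+w^*(1)$ upon a success at $\sigma_q$, and otherwise carries the continuation value $e^{-rt}V(\pi_t)$:
\begin{equation*}
    Y_t = -c\int_0^t e^{-rs}\,\d s + \big(1+w^*(1)\big)e^{-r\sigma_q}1_{\{\sigma_q \le t\}} + e^{-rt}V(\pi_t)1_{\{\sigma_q > t\}},
\end{equation*}
stopped at $\rho = \tau \wedge \sigma_q$. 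Applying the Dynkin (change-of-variables) formula for the pure-jump reputation process---whose generator is exactly $\mathcal{L}_q$, with success modeled as a terminal jump to $1+w^*(1)$ at intensity $\lambda_q\pi_t$---the drift of $Y_t$ is precisely $e^{-rt}\big([\mathcal{L}_q V](\pi_t) - rV(\pi_t) - c\big)\,\d t$ plus a compensated-jump local martingale.

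First I would establish the upper bound $V \ge \sup_\tau J^\tau$. The superharmonic property gives $[\mathcal{L}_q V] - rV - c \le 0$ everywhere, so $Y$ is a supermartingale and optional stopping yields $\BE[Y_\rho] \le Y_0 = V(\pi)$. On the success branch $Y_\rho$ equals the realized payoff, while on the voluntary-stop branch $Y_\rho = -c\int_0^\tau e^{-rs}\,\d s + e^{-r\tau}V(\pi_\tau) \ge -c\int_0^\tau e^{-rs}\,\d s + e^{-r\tau}w^*(\pi_\tau)$ by the majorant property $V \ge w^*$. Hence $J^\tau(\pi) \le \BE[Y_\rho] \le V(\pi)$ for every $\tau$.

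Next I would show the bound is attained at $\tau^* = \inf\{t: \pi_t \notin \mathcal{M}^*_q\}$, giving the reverse inequality. On $\mathcal{M}^*_q$ the HJB holds with equality, so the drift of $Y$ vanishes up to $\tau^* \wedge \sigma_q$ and $Y$ is there a (local) martingale, whence $\BE[Y_{\tau^*\wedge\sigma_q}] = V(\pi)$. Because $\mathcal{M}^*_q$ is the increasing interval above a strictly positive cutoff ($\beta$, resp.\ $\alpha$) established in Lemma~\ref{lem: beta} and Proposition~\ref{prop: optimal_Ma}, and $\pi_t$ drifts down deterministically between successes, $\tau^*$ is almost surely finite; at $\tau^*$ the reputation lies on $\partial\mathcal{M}^*_q$, where $V = w^*$ by the matching-set definition \eqref{eq: matching_set} together with the $C^1$ smooth fit. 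Therefore $Y_{\tau^*\wedge\sigma_q}$ coincides exactly with the realized payoff of $\tau^*$, so $J^{\tau^*}(\pi) = V(\pi)$. Combining the two inequalities gives $v^*_q(\pi) = \sup_\tau J^\tau(\pi)$ with $\tau^*$ optimal; since $w^*$ already satisfies \eqref{eq: reservation_value} by construction, the whole tuple solves the fixed-point system, and its uniqueness is inherited from Theorem~\ref{thm: unique} (any optimal $C^1$ tuple satisfies the necessary conditions and hence must equal $\langle w^*,v^*,\mathcal{M}^*\rangle$).

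I expect the main obstacle to be the stochastic-analytic bookkeeping rather than any conceptual leap: getting the jump compensator right so the drift of $Y$ reproduces $[\mathcal{L}_q V] - rV - c$ (in particular treating success as a terminal jump to $1+w^*(1)$ rather than to $V(1)$), and upgrading the (local) martingale and supermartingale statements to genuine optional-stopping identities. The integrability needed for the latter is benign here: $\pi_t \in [0,1]$, $V$ is continuous hence bounded, discounting is at rate $r>0$, and the cumulative cost is dominated by $c/r$, so $Y$ is uniformly bounded on $[0,\rho]$ and dominated convergence applies without further localization. A secondary point requiring care is the value-matching $V(\pi_{\tau^*}) = w^*(\pi_{\tau^*})$ at the free boundary, which rests on the $C^1$ pasting already embedded in the construction of $v^*_b$ and $v^*_a$.
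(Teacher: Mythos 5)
Your proposal is correct and follows essentially the same route as the paper's proof: a martingale verification in which the superharmonic property yields the supermartingale upper bound via optional stopping, the majorant property handles the comparison at an arbitrary stopping time, and the Bellman equation holding with equality on $\mathcal{M}^*_q$ gives attainment at the first exit time $\tau^*\wedge\sigma_q$, with uniqueness inherited from Theorem~\ref{thm: unique}. The only difference is bookkeeping: you fold the flow cost into the gain process $Y_t$ so that it becomes a genuine supermartingale, whereas the paper tracks $e^{-rt}\mathbf{v}_q(\iota_t,\pi_t)$ alone and bounds its drift by $ce^{-rs}$, which is an equivalent computation.
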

Figure~\ref{fig: value_functions} plots the optimal value functions in the low cost regime. In particular, it demonstrates the convexity of the value functions, and shows that at the the optimum $\alpha \geq \beta$. Relatedly, Figure~\ref{fig: equil_match_set} plots the optimal matching sets $\mathcal{M}^*$ in each cost regime. As explained before, at the optimum $\mathcal{M}^*_a\subseteq \mathcal{M}^*_b$, and in the high cost regime $\mathcal{M}^*_a =\emptyset$. In light of $\mathcal{M}^*_a \subseteq \mathcal{M}^*_b$, the model offers the testable prediction that the agents who exit the economy and do not engage in further activities made their last few engagements in the high-growth projects (i.e., $b$-types).

The agent follows a cutoff strategy with respect to each type of the projects, and in particular, she shows more tolerance for failure when matched to the high type projects. The threshold strategy (equivalently, that the matching sets are increasing intervals) advances the idea that agents with higher reputation have higher tolerance for failure. In other words, the distance to the endogenous separation point ($\alpha$ or $\beta$) is greater for a more reputable agent than a less reputable one. This observation is in line with the \textit{learning theory} in economics of venture capital. Specifically, \cite{gompers1999analysis} argue that VCs learn about their \textit{post-investment ability} while they are funding startups, and the more reputable ones have higher tolerance for failure, namely they spend longer time funding their portfolio companies.

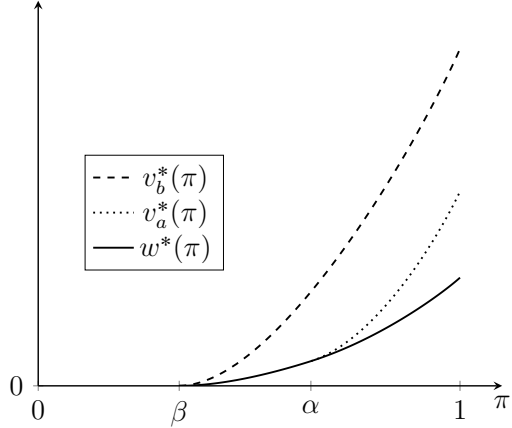
\begin{figure}[htbp]
\begin{center}
\begin{tikzpicture}[scale=0.9]
    \tikzmath{
        \r=0.9;
        \La=1.8;
        \Lb=3;
        \c=1.2;
        \Kap=0.8;
        \phia=0.35;
        \phib=0.4;
        \Ala=0.646471;
        \Alb=0.334252;
        \w=0.1967;
        \VbAla=0.170859;
        }
    \begin{axis}[
    axis lines=left,
    smooth,
    axis line style={semithick},
    xtick={0,\Ala,\Alb,1},
    xticklabels={$0$,$\alpha$,$\beta$,$1$},
    xmin=0,
    xmax=1.1,
    ytick={0},
    yticklabels={$0$},
    ytick pos=left,
    x label style={at={(axis description cs:1,0)},anchor=north},
	y label style={at={(axis description cs:0,1.13)},anchor=north,rotate=-90},
	smooth,
    xlabel=$\pi$,
    ymin=0,
    ymax=0.7,
    legend style={at={(0.1,0.6)},anchor=north west},
    ] 
    
    \addplot[thick,dashed,samples=400,domain= \Alb:1]
    {(-\c/\r)+(\Lb/(\r+\Lb))*(1+\w+(\c/\r))*x+
    ((\c/\r)-(\Lb/(\r+\Lb))*(1+\w+(\c/\r))*\Alb)*(((1-x)/(1-\Alb))^(1+(\r/\Lb)))*((\Alb/x)^(\r/\Lb))};
    \addlegendentryexpanded{$v^*_b(\pi)$}

    \addplot[thick,dotted,samples=400,domain= \Ala:1]
    {(-\c/\r)+(\La/(\r+\La))*(1+\w+(\c/\r))*x+
    (((\c/\r)-(\La/(\r+\La))*(1+\w+(\c/\r))*\Ala)+\VbAla*(\Kap*\phib/(\r+(\Kap*\phib))))*(((1-x)/(1-\Ala))^(1+(\r/\La)))*((\Ala/x)^(\r/\La))};
    \addlegendentryexpanded{$v^*_a(\pi)$}

    \addplot[thick,samples=400,domain= \Ala:1]
    {((\Kap*\phib)*((-\c/\r)+(\Lb/(\r+\Lb))*(1+\w+(\c/\r))*x+
    ((\c/\r)-(\Lb/(\r+\Lb))*(1+\w+(\c/\r))*\Alb)*(((1-x)/(1-\Alb))^(1+(\r/\Lb)))*((\Alb/x)^(\r/\Lb)))+(\Kap*\phia)*((-\c/\r)+(\La/(\r+\La))*(1+\w+(\c/\r))*x+
    (((\c/\r)-(\La/(\r+\La))*(1+\w+(\c/\r))*\Ala)+\VbAla*(\Kap*\phib/(\r+(\Kap*\phib))))*(((1-x)/(1-\Ala))^(1+(\r/\La)))*((\Ala/x)^(\r/\La))))/(\r+(\Kap*\phib)+(\Kap*\phia))};
    \addlegendentryexpanded{$w^*(\pi)$}
    
    \addplot[forget plot,thick,samples=400,domain= \Alb:\Ala]
    {(\Kap*\phib)*((-\c/\r)+(\Lb/(\r+\Lb))*(1+\w+(\c/\r))*x+
    ((\c/\r)-(\Lb/(\r+\Lb))*(1+\w+(\c/\r))*\Alb)*(((1-x)/(1-\Alb))^(1+(\r/\Lb)))*((\Alb/x)^(\r/\Lb)))/(\r+(\Kap*\phib))};    
    \end{axis}
\end{tikzpicture}
\caption{Value functions in the low cost regime}
\label{fig: value_functions}
\end{center}
\end{figure}
\begin{figure}[ht]
	\begin{subfigure}[t]{0.5\textwidth}
	    \begin{center}
		\begin{tikzpicture}[scale=1.2,thick,>=stealth',dot/.style = {draw,fill = black, circle,inner sep = 0.5pt,minimum size = 4pt}]
  			\coordinate [label=left:$0$] (O) at (0,0);
  			\coordinate [label=left:$1$] (1) at (0,3.5);
  			\draw[->] (0,0) -- (4,0) coordinate[label = {right:$\substack{\text{project}\\\text{type}}$}] (xmax);
  			\draw[->] (0,0) -- (0,4) coordinate[label = {left:$\pi$}] (ymax);
  			\draw (1,0) node[dot,label = {below:$a[ \varphi_a]$}] {}; 
  			\draw (3,0) node[dot,label = {below:$b[\varphi_b]$}] {}; 
  			\draw[dotted] (1,0) -- (1,3.5);
 			\draw[dotted] (3,0) -- (3,3.5);
  			\draw[dashed,thin] (0,3.5) -- (4,3.5);
  			\draw[line width=0.04cm] (3,1.4) -- (3,3.5);
  			\draw[)-,line width=0.04cm] (3,1.5);
  			\draw[[-,line width=0.04cm] (3,3.5);
  			\draw[dashed,thin] (0,1.4) -- (4,1.4);
  			\coordinate [label=left:$\beta_{\textsf{HC}}$] (alpha) at (0,1.4);
		\end{tikzpicture}
		\caption{\small{high cost regime}}
		\end{center}
	\end{subfigure} 
	\begin{subfigure}[t]{0.5\textwidth}
	    \begin{center}
		\begin{tikzpicture}[scale=1.2,thick,>=stealth',dot/.style = {draw,fill = black, circle,inner sep = 0pt,minimum size = 4pt}]
  			\coordinate [label=left:$0$] (O) at (0,0);
  			\coordinate [label=left:$1$] (1) at (0,3.5);
  			\draw[->] (0,0) -- (4,0) coordinate[label = {right:$\substack{\text{project}\\ \text{type}}$}] (xmax);
 			\draw[->] (0,0) -- (0,4) coordinate[label = {left:$\pi$}] (ymax);
  			\draw (1,0) node[dot,label = {below:$a[\varphi_a]$}] {}; 
  			\draw (3,0) node[dot,label = {below:$b[\varphi_b]$}] {}; 
  			\draw[dotted] (1,0) -- (1,3.5);
  			\draw[dotted] (3,0) -- (3,3.5);
  			\draw[dashed,thin] (0,3.5) -- (4,3.5);
  
  			\draw[line width=0.04cm] (3,0.9) -- (3,3.5);
  			\draw[)-,line width=0.04cm] (3,1);
  			\draw[[-,line width=0.04cm] (3,3.5);

  			\draw[line width=0.04cm] (1,2) -- (1,3.5);
  			\draw[)-,line width=0.04cm] (1,2.1);
  			\draw[[-,line width=0.04cm] (1,3.5);

  			\draw[dashed,thin] (0,2) -- (4,2); \coordinate [label=left:$\alpha_{\textsf{LC}}$] (alpha) at (0,2);
  			\draw[dashed,thin] (0,0.9) -- (4,0.9);
  			\coordinate [label=left:$\beta_{\textsf{LC}}$] (alpha) at (0,0.9);

		\end{tikzpicture}
		\caption{\small{low cost regime}}
		\end{center}
	\end{subfigure}
	\caption{Optimal matching sets}
	\label{fig: equil_match_set}
\end{figure}
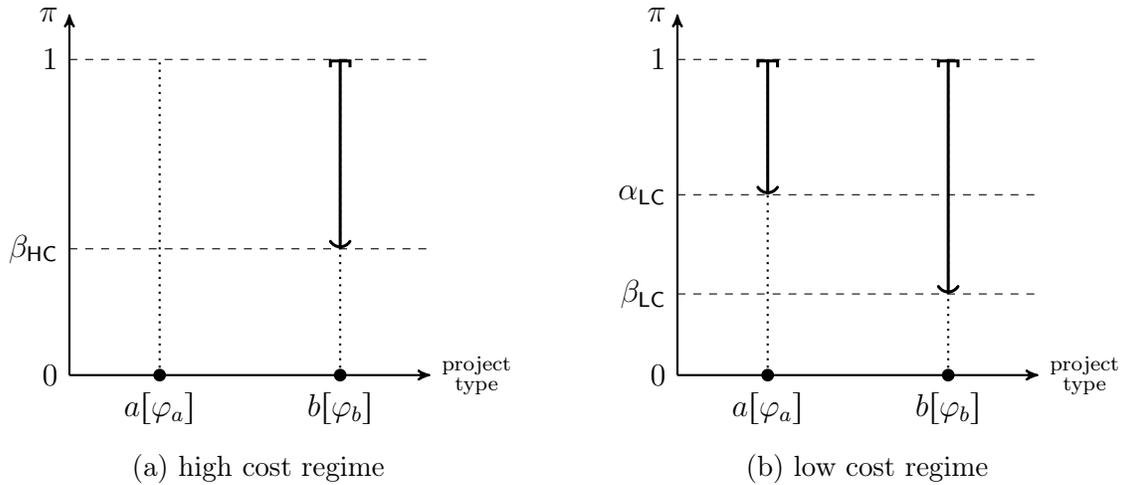
Specifically, it was shown in Lemma~\ref{lem: beta} that the endogenous termination point $\beta$ is inversely related to $w(1)$, where $w(1)$ is the value of holding the maximum reputation, namely at $\pi=1$, in each cost regime. In the high cost regime $w(1)$ only depends on the $b$-parameters, because $\mathcal{M}^*_a=\emptyset$, whereas in the low cost regime it takes the $a$-related parameters into account as well (see Remark~\ref{rem: w(1)}). Additionally, equation~\eqref{eq: beta} shows that it is indeed through the reputation channel (i.e., $w(1)$) that learning incentives manifest themselves in the agent's selection policy. Specifically, any exogenous parameter of the economy can influence $w(1)$, thereby impacting the size of the continuation set.

In the next section, I compare the current learning model with its no-learning version, and perform comparative statics with respect to the underlying primitives, in particular the meeting rate (or inversely, the search frictions).
\section{Qualitative Comparisons}
\label{sec: Qual}
In Section~\ref{subs: no_learning_version} below, I illustrate how the shape of the matching sets and value functions change in an economy, where the agent has complete information about her type, but is otherwise the same as before. This exercise reveals the distinct impact of learning and incomplete information on the agent's optimal selection strategy.

Subsequently, in Section~\ref{subs: comp_statics}, I perform the comparative statics of the optimal matching sets (in the original learning model) with respect to the primitives of the economy.

\subsection{No-Learning Version}
\label{subs: no_learning_version}
In contrast to our original model, where the agent's underlying type was a hidden binary variable $\theta\in \{L,H\}$, and $\pi$ reflected the posterior belief, here I assume the actual underlying type is $\pi \in [0,1]$, and it remains \textit{constant} over time. Specifically, when a type-$\pi$ agent selects a type-$q$ project, the success arrives with the rate of $\lambda_q \pi$. The underlying reason behind studying this benchmark case is to understand how the learning process impacts the optimal matching sets.

The major changes happen in the Bellman equation for the matching value function. First, the Bayesian learning component that includes the $\pi$-derivative of $v_q(\pi)$ is no longer present. Second, the exit option at the time of success is $1+w(\pi)$ instead of $1+w(1)$. This is due to the fact that the agent's type is persistent and she leaves the match with the same reputation she entered with. Formally, the no-learning Bellman equation for the matching value function is:
\begin{equation}
\label{eq: no_learning_Bellman}
r v_q(\pi) = \max\left\{rw(\pi),-c+\lambda_q \pi \big(1+w(\pi)-v_q(\pi)\big)\right\}\,.
\end{equation}
The expressions behind the reputation function $w$ and the set $\mathcal{M}$ remain consistent with those in~\eqref{eq: reservation_value} and~\eqref{eq: matching_set}, respectively.
\begin{proposition}[Unique optimum, absent learning]
\label{prop: no_learning_optimum}
    There exists a unique optimal tuple $\langle \hat w, \hat v , \widehat{M}\rangle$, in the space of continuous value functions, that satisfy the optimality conditions~\eqref{eq: reservation_value}, \eqref{eq: no_learning_Bellman}, and~\eqref{eq: matching_set}. Furthermore,
    \begin{enumerate}[label=(\roman*)]
        \item in both cost regimes the matching sets are increasing intervals and $\widehat{\mathcal{M}}_a \subseteq \widehat{\mathcal{M}}_b$.
        \item In the high cost regime $\widehat{\mathcal{M}}_a = \emptyset$.
    \end{enumerate}
\end{proposition}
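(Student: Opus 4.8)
The key simplification relative to the learning model is that, with no learning, the reputation is frozen during a match and the Bellman equation \eqref{eq: no_learning_Bellman} carries no $\pi$-derivative; consequently the system \eqref{eq: reservation_value}, \eqref{eq: no_learning_Bellman}, \eqref{eq: matching_set} decouples across $\pi$ and reduces, at each fixed $\pi$, to a finite algebraic problem. On the continuation set of a $q$-project, \eqref{eq: no_learning_Bellman} is purely algebraic and solving for $v_q$ gives $v_q(\pi)=v_q^{\mathrm{stay}}(\pi):=\big(\lambda_q\pi(1+w(\pi))-c\big)/(r+\lambda_q\pi)$, whence the surplus is $v_q(\pi)-w(\pi)=s_q(\pi)/(r+\lambda_q\pi)$ with $s_q(\pi):=\lambda_q\pi-c-rw(\pi)$. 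Thus $(q,\pi)\in\mathcal{M}$ if and only if $s_q(\pi)>0$. Since $\lambda_b>\lambda_a$ gives $s_b(\pi)>s_a(\pi)$ for $\pi>0$, the inclusion $\widehat{\mathcal{M}}_a\subseteq\widehat{\mathcal{M}}_b$ is immediate, and at $\pi=0$ both surpluses equal $-c<0$, so neither type is accepted.

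First I would establish pointwise existence and uniqueness of $\widehat w$. Substituting the surpluses into \eqref{eq: reservation_value}, the reservation value at a fixed $\pi$ solves $rw=F(w)$, where $F(w):=\kappa\sum_{q:\,s_q>0}\varphi_q\, s_q/(r+\lambda_q\pi)$ is continuous, nonnegative, and nonincreasing in $w$ (as $w$ rises each $s_q$ falls and terms drop out of the sum), with $F(w)\to0$ as $w\to\infty$. The left side $rw$ is strictly increasing from $0$, so the two graphs meet at exactly one point, yielding a unique $\widehat w(\pi)\ge0$; continuity of the crossing point in $\pi$ (the entering or leaving term vanishes at a regime switch, so no jump occurs) gives $\widehat w\in C[0,1]$, and then $\widehat v_q=\max\{\widehat w,\,v_q^{\mathrm{stay}}\}\in C[0,1]$. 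Because $\widehat v_a,\widehat v_b,\widehat{\mathcal{M}}$ are all pinned down by $\widehat w$, uniqueness of the tuple follows.

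The heart of the argument is that each section is an increasing interval, i.e.\ that $s_q$ single-crosses zero. I would show this by reducing the sign of the surplus to a quadratic. Using the explicit regional form of $\widehat w$---equal to $0$ below $\beta$, to $\kappa\varphi_b(\lambda_b\pi-c)/\big(r(r+\lambda_b\pi+\kappa\varphi_b)\big)$ where only $b$ is accepted, and to the analogous two-project expression where both are accepted---a direct computation shows that in every region the sign of $s_a(\pi)$ equals the sign of the quadratic $\Phi(\pi):=(\lambda_a\pi-c)(r+\lambda_b\pi)-\kappa\varphi_b(\lambda_b-\lambda_a)\pi$, while $s_b$ is negative below and positive above $\beta=c/\lambda_b$. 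Since $\Phi$ opens upward with $\Phi(0)=-cr<0$, it has a single positive root $\alpha$; hence $\widehat{\mathcal{M}}_b=(\beta,1]$ and $\widehat{\mathcal{M}}_a=(\alpha,1]$ are increasing intervals, with $\beta<\alpha$ because $\Phi(\beta)<0$. The low/high cost dichotomy is then read off from $\Phi(1)$: the positive root satisfies $\alpha<1$ exactly when $\Phi(1)>0$ (low cost regime), while $\Phi(1)\le0$ forces $\Phi<0$ on $[0,1)$ and hence $\widehat{\mathcal{M}}_a=\emptyset$ (high cost regime); one checks $\Phi(1)\le0$ is equivalent to the $\pi=1$ indifference $\lambda_a-c\le\kappa\varphi_b(\lambda_b-c)/(r+\lambda_b+\kappa\varphi_b)$.

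It remains to verify that the constructed $(\widehat v,\widehat{\mathcal{M}})$ actually solves the stopping problem \eqref{eq: matching_value_function} given $\widehat w$, not merely the Bellman equation. Here the absence of learning makes verification transparent: during a match the reputation is frozen at $\pi$ until a success, so the continuation problem is stationary and the optimal policy is bang-bang---stay until success when $v_q^{\mathrm{stay}}(\pi)>\widehat w(\pi)$, and stop immediately otherwise---yielding precisely $\widehat v_q=\max\{\widehat w,\,v_q^{\mathrm{stay}}\}$, with $\widehat v_q=\widehat w$ on the stopping region as required by \eqref{eq: matching_set}. I expect the single-crossing step to be the main obstacle: the delicate point is that $\widehat w$ is only piecewise defined across the acceptance regions, so one must confirm both that $\widehat w$ is continuous at the thresholds (the marginal term vanishes exactly at $\alpha$ and $\beta$) and that the sign of $s_a$ is governed by the \emph{same} quadratic $\Phi$ in both the ``$b$-only'' and the ``both-accepted'' regions---these are what make the sections genuinely connected.
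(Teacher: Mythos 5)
Your proposal is correct and follows essentially the same route as the paper's proof: the pointwise algebraic reduction giving $\pi \in \widehat{\mathcal{M}}_q \Leftrightarrow \lambda_q\pi - c > r\hat w(\pi)$ (the paper's \eqref{eq: no_learning_M_determination}), the boundary $\hat\beta = c/\lambda_b$, and the comparison of the ``$b$-only'' and ``both-accepted'' configurations, with your quadratic $\Phi$ having exactly the positive root that the paper records in \eqref{eq: no_learning_alpha} and your condition $\Phi(1)\le 0$ matching the paper's high-cost criterion. The only substantive addition is your explicit monotone-crossing argument ($rw$ against the nonincreasing $F(w)$) for pointwise existence and uniqueness of $\hat w$, which the paper leaves implicit in its configuration-by-configuration closed forms; this is a clean way to make the uniqueness claim airtight but does not change the structure of the argument.
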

The lower boundary of optimal matching sets (in the proof of Proposition~\ref{prop: no_learning_optimum}) are denoted by $\hat \alpha = \inf \widehat{\mathcal{M}}_a$, and $\hat \beta = \inf \widehat{\mathcal{M}}_b$. It is shown in the appendix that $\hat \beta = c/\lambda_b$ and $\hat \alpha$ follows equation~\eqref{eq: no_learning_alpha}.

\paragraph{Comparing $\hat v$ with $v^*$:} The optimal matching value functions in the current no-learning environment (and in the low cost regime) are plotted in Figure~\ref{fig: NL_value_functions}. There are two important differences with Figure~\ref{fig: value_functions}: local concavity and kinks on the boundary of the matching sets. In contrast, the value functions in the learning environment were convex and smooth. Both of these properties were due to the Bayesian learning, that is absent here. Specifically, the marginal value of acquiring information about the self-type becomes larger as the posterior belief increases, and this aspect is only present when the learning opportunity is available. Analytically, the final term in~\eqref{eq: optimal_vb}, which is induced by the Bayesian learning component in the Bellman equation, delivers the convexity of the value functions.

\begin{figure}[htp]
\begin{center}
\begin{tikzpicture}[scale=0.9]
    \tikzmath{
        \r=0.9;
        \La=1.5;
        \Lb=3.8;
        \c=0.6;
        \Kap=2.1;
        \phia=3;
        \phib=0.4;
        \Ala=0.648248;
        \Alb=\c/\Lb;
        }
    \begin{axis}[
    axis lines=left,
    smooth,
    axis line style={semithick},
    xtick={0,\Ala,\Alb,1},
    xticklabels={$0$,$\hat \alpha$,$\hat \beta$,$1$},
    xmin=0,
    xmax=1.1,
    ytick={0},
    yticklabels={$0$},
    ytick pos=left,
    x label style={at={(axis description cs:1,0)},anchor=north},
	y label style={at={(axis description cs:0,1.13)},anchor=north,rotate=-90},
	smooth,
    xlabel=$\pi$,
    ymin=0,
    ymax=1.2,
    legend style={at={(0.98,0.05)},anchor=south east},
    ] 
    
    \addplot[name path=V_b,dashed,thin,samples=400,domain= \Ala:1]
    {(\Lb*x-\c)/(\r+\Lb*x)+(\Lb*x*(\Kap*\phib*(\Lb*x-\c)*(\r+\La*x)+\Kap*\phia*(\La*x-\c)*(\r+\Lb*x))/(\r*((\r+\La*x)*(\r+\Lb*x)+\Kap*\phib*(\r+\La*x)+\Kap*\phia*(\r+\Lb*x))))/(\r+\Lb*x)};
    \addlegendentryexpanded{$\hat v_b(\pi)$}

    \addplot[name path=V_a,dotted,thick,samples=400,domain= \Ala:1]
    {(\La*x-\c)/(\r+\La*x)+(\La*x*(\Kap*\phib*(\Lb*x-\c)*(\r+\La*x)+\Kap*\phia*(\La*x-\c)*(\r+\Lb*x))/(\r*((\r+\La*x)*(\r+\Lb*x)+\Kap*\phib*(\r+\La*x)+\Kap*\phia*(\r+\Lb*x))))/(\r+\La*x)};
    \addlegendentryexpanded{$\hat v_a(\pi)$}
    
     \addplot[name path=W_ab,thick,samples=400,domain= \Ala:1]
    {(\Kap*\phib*(\Lb*x-\c)*(\r+\La*x)+\Kap*\phia*(\La*x-\c)*(\r+\Lb*x))/(\r*((\r+\La*x)*(\r+\Lb*x)+\Kap*\phib*(\r+\La*x)+\Kap*\phia*(\r+\Lb*x)))};
    \addlegendentryexpanded{$\hat w(\pi)$}
    
    \addplot[name path=V_b_alone,dashed,thin,samples=400,domain= \Alb:\Ala]
    {(\Lb*x-\c)/(\r+\Lb*x)+(\Lb*x*(\Kap*\phib*(\Lb*x-\c))/(\r*(\r+\Lb*x+\Kap*\phib))/(\r+\Lb*x))};
    
    \addplot[name path=W_b,thick,samples=400,domain= \Alb:\Ala]
    {(\Kap*\phib*(\Lb*x-\c))/(\r*(\r+\Lb*x+\Kap*\phib))};
    \end{axis}
\end{tikzpicture}
\caption{Value functions in the low cost regime (no-learning version)}
\label{fig: NL_value_functions}
\end{center}
\end{figure}
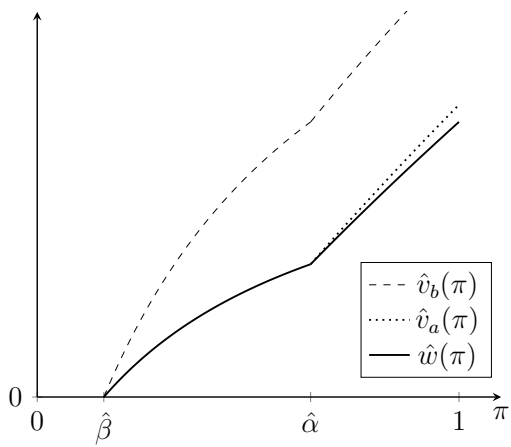

\paragraph{Comparing $\hat \beta$ with $\beta$:} It is shown in the proof of Proposition~\ref{prop: no_learning_optimum} that $\hat \beta = c/\lambda_b$. Comparing this with equation~\eqref{eq: beta} for $\beta$, one notices an important difference: learning incentives affect $\beta$ through the impact of $w(1)$ in its denominator. Specifically, increasing $\kappa,\varphi_a,\varphi_b$, or decreasing $r$ each strengthens the reputational motives and raises $w(1)$, thereby making the agent more patient (by lowering $\beta$). In the absence of learning, all of these effects are muted in $\hat \beta$. Hence, in both cost regimes, the separation point $\beta$ is smaller than its no-learning counterpart $\hat \beta = c/\lambda_b$. Therefore, the prospects of learning about the self-type and possibly reaching a higher reputation expand the matching sets and add more patience to the agent's continuation region. I refer to this force as the \textit{learning effect} in the next section.

\paragraph{Response of $\hat \alpha$ to $\kappa$:} Differentiating the expression for $\hat \alpha$ in equation~\eqref{eq: no_learning_alpha} (of the appendix) with respect to $\kappa$ implies that $\d \hat\alpha/\d \kappa>0$. Therefore decreasing the search frictions shrinks $\widehat{\mathcal{M}}_a$. Hypothetically, in a frictionless world (where $\kappa \to \infty$) the agent never selects the $a$-projects, because by equation~\eqref{eq: cost_regime_determination} its opportunity cost ($\lambda_b-c$) exceeds its payoff ($\lambda_a-c$). In reality however, search frictions create an endogenous wedge, by lowering the previous opportunity cost, and partially tilt the incentives toward the inferior $a$-projects. I refer to this force as the \textit{opportunity cost effect} in the next section. This has the same frictional spirit, by which the low-skilled individuals are selected by the employers in the labor market. Hence, it is exactly in this sense that increasing the search frictions (by lowering the meeting rate $\kappa$) expands $\widehat{\mathcal{M}}_a$. Through an example in the next section, I show that this monotone response is overturned in the original learning model of Section~\ref{sec: Opt_with_learning}  --- due to the opposing force created by the learning effect.

\subsection{Comparative Statics}
\label{subs: comp_statics}
The results of this section pertains to the original model with learning. Observe that in both cost regimes $\mathcal{M}^*_a \subseteq \mathcal{M}^*_b$, and thus I take $\beta$ as a proxy for the size of the union of matching sets, i.e., $\mathcal{M}^*_a \cup \mathcal{M}^*_b$. It is important to know the comparative statics of $\beta$ (in~\eqref{eq: beta}) with respect to the primitives of the economy.

Performing simple differentiation of~\eqref{eq: beta}, one can verify that $\frac{\d \beta}{\d c} >0$, $\frac{\d \beta}{\d r}>0$, and $\frac{\d \beta}{\d \kappa} <0$. Namely, lower levels of flow cost, time discount rate, and search frictions (equivalently higher meeting rate) are all associated with larger $\mathcal{M}^*_b$. Specifically, raising the meeting rate $\kappa$ increases the value of holding the maximum reputation $w(1)$ --- because the agent meets the projects more frequently --- and this expands the optimal $\mathcal{M}^*_b$. Hence, with regard to the influence of $\kappa$ on $\beta$, only the learning effect comes into play. The opportunity cost effect has no impact on $\mathcal{M}^*_b$, as there is no better alternative than the $b$-projects. Lastly, increasing $\varphi_a$ or $\varphi_b$ raises $w(1)$ and thus decreases $\beta$ (that is similar to the effect of meeting rate $\kappa$ on $\beta$).

In the following example, I study how the optimal $\alpha$ (i.e., the lower boundary of $\mathcal{M}^*_a$) reacts \textit{non-monotonically} to the search frictions. This stands in contrast to the monotonic response of $\hat \alpha$ to $\kappa$ in the no-learning version discussed earlier.
\begin{example}[Non-monotone response of $\alpha$ to $\kappa$.]
In this example, I show --- in the low cost regime where $\mathcal{M}^*_a = (\alpha,1] \neq \emptyset$ --- there exists a range of parameters, in which the optimal $\alpha$ reacts non-monotonically to $\kappa$. This is in contrast with the response of its no-learning counterpart $\hat \alpha$ to $\kappa$, that was shown to be unambiguously increasing due to the opportunity cost effect.

First, I explain how one can mathematically pin down the fixed-point $\alpha$, and then I argue (based on the properties of the fixed-point mapping) why the response is not monotone. Observe that in the low cost regime, $\alpha$ is the point at which $v_a^*$ smoothly meets the reservation value $w^*$. By the specification in Theorem~\ref{thm: unique}, $\alpha \in \mathcal{M}^{* c}_a \cap \mathcal{M}^*_b$, and one has
\begin{equation*}
    v^*_a(\pi)=w^*(\pi) = \frac{\kappa \varphi_b}{r+\kappa \varphi_b}\, v^*_b(\pi)\,, \quad \forall \pi \leq \alpha\,.
\end{equation*}
Therefore, the boundary conditions in~\eqref{eq: boundary_cond_va} apply. Using these conditions and the Bellman equations for $v^*_a$ and $v^*_b$, one arrives at the following relation, whose fixed-point determines the optimal $\alpha$:
\begin{equation}
\label{eq: alpha_a_fixed_point}
    \alpha = \frac{r \lambda_b c+\kappa \varphi_b \big(\lambda_b-\lambda_a\big)\big(c+r v^*_b(\alpha)\big)}{r\lambda_b \lambda_a \big(1+w(1)\big)}\,.
\end{equation}
Afterward, I substituted the closed-form expression for $v^*_b$ from~\eqref{eq: optimal_vb} into the above relation. By varying $\kappa$, I found the fixed-point $\alpha$ (as a function of $\kappa$) in a numerical example whose output is plotted in Figure~\ref{fig: alpha_a_kappa}. As it appears the response is $\mathsf{U}$-shaped: for small values of $\kappa$, the optimal $\alpha$ is decreasing, while for larger $\kappa$, it becomes increasing. One should contrast this outcome with the no-learning counterpart, in which $\frac{\d \hat \alpha}{\d \kappa} >0$, and with the lower boundary of $\mathcal{M}^*_b$, where $\frac{\d \beta}{ \d \kappa} <0$.

The algebraic reason behind the $\mathsf{U}$-shaped response of $\alpha$ to $\kappa$ is that both the numerator and the denominator of the fixed-point map~\eqref{eq: alpha_a_fixed_point} are increasing in $\kappa$, therefore, the overall response is ambiguous. Intuitively however, the ratio in~\eqref{eq: alpha_a_fixed_point} highlights two opposing forces, underlying the non-monotone behavior: learning effect and the opportunity cost effect. First and similar to the case for $\mathcal{M}^*_b$, raising $\kappa$ increases the value of holding the maximum reputation $w(1)$, encouraging the agent to stay longer with the project. This learning effect manifests itself in the denominator of~\eqref{eq: alpha_a_fixed_point}, and sets an expanding force on $\mathcal{M}^*_a$. Second and similar to the case for $\widehat{\mathcal{M}}_a$, higher $\kappa$ raises the opportunity cost of choosing an $a$-project, thus shrinking the optimal $\mathcal{M}^*_a$. This effect is playing out in the numerator of~\eqref{eq: alpha_a_fixed_point}. As it appears in the example plotted in Figure~\ref{fig: alpha_a_kappa}, the learning effect dominates for small levels of the meeting rate, while as $\kappa$ increases, it is the opportunity cost effect that prevails and causes $\mathcal{M}^*_a$ to shrink.

\begin{figure}[ht]
\begin{center}
\begin{tikzpicture}[scale=0.85]
    \begin{axis}[
    axis lines=left,
    smooth,
    axis line style={thick},
    xtick={0.001},
    xticklabels={},
    ytick={0.5},
    yticklabels={},
    xmax = 1.45,
    ymin = 0.578,
    ymax = 0.59,
    x label style={at={(axis description cs:1,0)},anchor=north},
	xlabel={$\kappa$},
	y label style={at={(axis description cs:0,1)},anchor=east,rotate=-90},
	ylabel={$\alpha$},
    ]
    \addplot[thick ] table[col sep=comma] {Alpha_a_Kappa.csv};
    \end{axis}
\end{tikzpicture}
\caption{Response of the optimal $\mathcal{M}^*_a$ to $\kappa$}
\label{fig: alpha_a_kappa}
\end{center}
\end{figure}
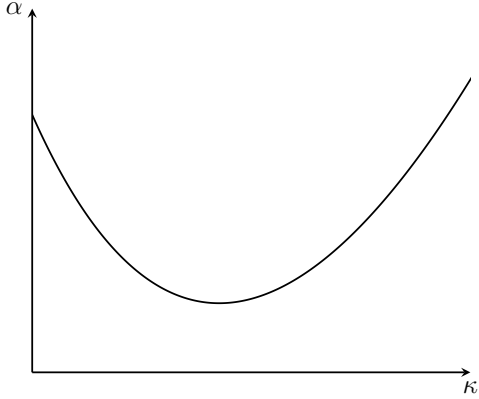
\end{example}

\begin{remark}
Even though the above observation on the non-monotone response was stated as an example, but by continuity it holds for an open region. A formal argument establishing this non-monotonicity is out of reach. First, because the fixed-point mapping in~\eqref{eq: alpha_a_fixed_point} is not monotone in the parameter $\kappa$, the monotone comparative statics apparatus cannot be applied to show the behavior in Figure~\ref{fig: alpha_a_kappa}. Second, implicitly differentiating both sides of~\eqref{eq: alpha_a_fixed_point} with respect to $\kappa$, and showing that $\frac{\d \alpha}{\d \kappa}$ is negative for small $\kappa$ and positive for large $\kappa$ is also intractable.
\end{remark}

The policy lesson behind this final comparative statics of $\mathcal{M}^*_a$ with respect to $\kappa$ is that increasing the meeting rate between the two sides of the economy in the hope of achieving higher surplus is not always socially optimal. Specifically, consider an economy where there are spillovers from successful low type projects (here the $a$-types) to the creation of high type projects (here the $b$-types) --- as is common in the innovation literature, where small low growth accomplishments create high growth opportunities. There are empirical evidences \citep{lerner2005study} suggesting that small innovative firms are particularly weak in protecting their intellectual property and thus their investors do not internalize the spillover gains in their decisions. 

In such circumstances, policies aimed at reducing the search frictions are initially helpful as they increase the incentives to invest in small projects by amplifying the learning effect. However, continued reduction eventually backfires and shrinks the investment region for the low growth projects (due to the domination of the opportunity cost effect).

\section{Reputational Externality}
\label{sec: rep_exter}
Building on the previous results, in this section, I study an economy populated by a continuum of agents (instead of just one) who are all making matching decisions. In frictional economies where there is not a price for reputation, one would expect agents with higher reputations to have more opportunities for contacts, which in turn suggests fewer contacts for less reputable agents. For example, in the context of two-sided market of venture capitalists and startups, there are empirical evidences about the \textit{individual benefits} associated with higher reputation among VCs. The findings include the theory of grandstanding, and lower pay-for-performance for smaller and younger VC firms toward the goal of establishing a reputation and enjoying a \textit{hig}\textit{her deal flow} \citep{gompers1996grandstanding, gompers1999analysis}.  Relatedly, by dissecting investment-level data \cite{nanda2020persistent} find that initial success confers preferential access to deal flow and perpetuates the early superior performances made by successful VCs.

Expanding upon the previous baseline results, in this section I examine how this connection between higher reputation and higher meeting rates manifests itself in the equilibrium. The main finding is that as a result of this externality, the equilibrium is not efficient, and marginally decreasing the termination threshold (thereby enlarging the equilibrium matching set) increases the social surplus of the economy. 

The nature of this reputational externality is best described if we focus only on a single group of projects. This choice is important because the potential entanglement of project selection motives (between multiple types) could complicate efforts to find equilibrium outcomes in an environment where contact rates depend on reputation. Therefore, to proceed with the analysis and clearly identify the impact of reputational externality on the equilibrium outcome, I assume that only $b$-projects are available and subsequently drop the $b$-index from the associated variables

In the following, I first find the stationary cross-sectional distribution of reputation. Then, in a mean-field setting, I postulate a functional form on how the meeting rate of each agent could depend on her reputation as well as the distribution of reputation across all agents (Section~\ref{subs: stationary_dist}). Next, in Section~\ref{subs: equil_eff}, I prove the existence of a symmetric stationary equilibrium and show that it is not surplus efficient.

\subsection{Stationary Distribution}
\label{subs: stationary_dist}
Toward obtaining a \textit{non-degenerate} cross-sectional distribution, I assume agents are short-lived. Specifically, they leave the economy exogenously at the rate of $\delta$, and are born with the same rate, bearing the initial reputation of $p$. 

In preceding sections, the meeting rate did not depend on the agent's reputation. In this part, holding the total rate of contacts constant, I assume this flow is not uniformly distributed among agents, rather it contacts more (respectively, less) reputable agents with higher (respectively, lower) probability, according to a \textit{reputation weight} function $\psi(\cdot)$.

Let $\bm{\pi_\infty}$ be the steady-state random variable that represents the stationary distribution of reputation across all agents. Consequently, the rate at which an agent possessing a reputation of $\pi$ meets projects is 
\begin{equation*}
    \kappa \varphi \, \frac{\psi(\pi)}{\mu}\,, \text{ where } \mu:= \BE\left[\psi(\bm{\pi_\infty})\right]\,.
\end{equation*}
Here again $\kappa$ represents the extent of search frictions in the economy and $\varphi$ stands for the mass of available projects, that is exogenously replenished and held constant. I further assume $\psi$ is increasing, concave and differentiable, particularly it belongs to the following class:
\begin{equation}
\label{eq: admissible_weight_func}
    \Psi:=\Big\{\psi:[0,1]\to [0,1] \Big| \psi(0)=0,\psi(1)=1, \psi'\geq 0 \text{ and } \psi \text{ is concave}\Big\}.
\end{equation}

I conjecture (and prove in the next section) that there exists a symmetric stationary equilibrium wherein all agents terminate their matches at a common $\alpha$. In light of this conjecture, I denote the cross-sectional density function of the matched agents by $m(\pi)$ supported on $[\alpha,p]$. Let $m(1)$ and $n(1)$ be the discrete measures of the matched and unmatched agents with maximum reputation (i.e., at $\pi =1$). And finally $n(\alpha)$ and $n(p)$ are the discrete measures of unmatched group at $\alpha$ and $p$. Figure~\ref{fig: steady_state} plots all pieces of the cross-sectional steady-state distribution of agents' reputations.
\begin{figure}[htp]
\begin{center}
\begin{tikzpicture}[scale=0.85]
    \tikzmath{
        \De=0.8;
        \p=0.65;
        \r=0.9;
        \Kap=5;
        \L=2.5;
        \Fi=1;
        \Al=0.15;
        \c=1;
        \M=((((\Kap*\Fi)^2)*\L)/((\De+(\Kap*\Fi))*(\De+\L+(\Kap*\Fi))*(\De+\L)));
        \N=((\De+\L)*\M)/(\Kap*\Fi);
        \NP = \De/(\De+(\Kap*\Fi));
        \Na = ((\Kap*\Fi)/(\De+(\Kap*\Fi)))*((\Al/\p)^(\De/\L))*(((1-\Al)/(1-\p))^(-((\De/\L)+1)));
        }
    \begin{axis}[
    axis lines=left,
    smooth,
    axis line style={semithick},
    xtick={0,\Al,\p,1},
    xticklabels={$0$,$\alpha$,$p$,$1$},
    xmin=0,
    xmax=1.2,
    ytick={0},
    yticklabels={$0$},
    ytick pos=left,
    x label style={at={(axis description cs:1.1,0)},anchor=north},
	y label style={at={(axis description cs:0,1.3)},anchor=north,rotate=-90},
	smooth,
	transpose legend,
    xlabel=$\pi$,
    ymin=0,
    ymax=0.45,
    ylabel = dist. of $\bm{\pi_\infty}$,
    ] 
    
    \addplot[
    name path=MP,thick,samples=400,domain= \Al:\p]
    {(\De*\Kap*\Fi)/(\L*(\De+(\Kap*\Fi)))*((x/\p)^((\De/\L)-1))*(((1-x)/(1-\p))^(-((\De/\L)+2)))}
    [yshift=7pt]
	node [pos=0.4,above] {\small{$m(\pi)$}};

    \draw[->,thick] (axis cs:1.01,0) -- (axis cs:1.01,\M)
    [xshift=16pt]
    node [pos=0.95,above] {\small{$m(1)$}};
    
    \addplot [name path = Minline, domain=\Al:\p] {0};
    \addplot [black!10] fill between [
	of=MP and Minline,
	];
	\draw[->,thick,] (axis cs:\p,0) -- (axis cs:\p,\NP)
	[xshift=12pt]
    node [pos=0.9,above] {\small{$n(p)$}};
	
	\draw[->,thick,] (axis cs:0.99,0) -- (axis cs:0.99,\N)
	[xshift=-15pt]
    node [pos=0.85,above] {\small{$n(1)$}};
    
    \draw[->,thick,] (axis cs:\Al,0) -- (axis cs:\Al,\Na)
    [yshift=0pt]
    node [pos=1,above] {\small{$n(\alpha)$}};
    \end{axis}
\end{tikzpicture}
\caption{Steady-state cross-sectional distribution of $\bm{\pi_\infty}$}
\label{fig: steady_state}
\end{center}
\end{figure}

The inflow outflow equations at the discrete masses are:
\begin{subequations}
\label{eq: pi=1_and_p}
    \begin{align}
        \dot{m}(1) &= -\lambda m(1)+\kappa \varphi\, \frac{n(1)}{\mu}-\delta m(1),\label{eq: m1t}\\
        \dot{n}(1)&=\lambda m(1)-\kappa \varphi \frac{n(1)}{\mu}-\delta n(1)+\int_\alpha^p \lambda \pi m(\pi)\,\d \pi,\label{eq: n1t}\\
        \dot{n}(p)&=-\kappa \varphi\, \frac{\psi(p)}{\mu}\,n(p)-\delta n(p)+\delta.
        \label{eq: npt}
    \end{align}
\end{subequations}
The rationale behind the dynamics in~\eqref{eq: pi=1_and_p} is rather simple. For example, the \textit{rhs} in~\eqref{eq: n1t} consists of the influx from the successful matched agents with reputation $\pi =1$ who have just become unmatched, the outflow resulting from the recently matched individuals, the exogenous exits at the rate of $\delta$, and finally the influx stemming from successful agents across various intermediate reputation levels (from $\alpha$ to $p$).

The other discrete measure is $n(\alpha)$ (the mass of unmatched agents at the lowest reputation level, i.e., the termination point), whose value is determined by the conservation of the zeroth and first moment of the overall distribution (that is found out in the appendix).

The law of motion for the continuous density of the matched agents supported on the matching interval $(\alpha,p]$ is 
\begin{equation}
\label{eq: signle_forward}
    \dot{m}(\pi)=-\underbrace{\lambda \pi m(\pi)}_{\substack{\text{outflow of}\\ \text{successful agents}}}+\underbrace{\lambda\partial_\pi \big(\pi (1-\pi)m(\pi)\big)}_{\text{net learning inflow}}\underbrace{-\delta m(\pi)}_{\text{exogenous exits}}.
\end{equation}
The first component in the \textit{rhs} above is the outflow from $m(\pi)$ (due to the recent success events) to $n(1)$. The second term captures the net learning effect, by factoring the inflow of agents whose reputation is in $(\pi,\pi+\d \pi)$ and thus falling due to the lack of success and the outflow of the unsuccessful group with reputation in $(\pi-\d \pi,\pi)$.\footnote{The first two terms can also be understood in the context of Kolmogorov Forward equation (see Theorem~17.4.14 of \cite{cohen2015stochastic}), and associate that with the density function of the reputation process $\d \pi_t = \left(1-\pi_{t^-}\right)\left[\d \iota_t -\lambda\pi_{t^-}\d t\right]$.} Finally, the third term picks up the exogenous departures. In the steady-state $\dot{m}(\pi)=0$, hence rendering a differential equation for the density function whose solution is
\begin{equation}
\label{eq: m(pi)_sol}
    m(\pi) = m(\alpha)\left(\frac{\pi}{\alpha}\right)^{\delta/\lambda-1}\left(\frac{1-\pi}{1-\alpha}\right)^{-(\delta/\lambda+2)}, ~~\forall \pi \in [\alpha,p].
\end{equation}
The group of agents with minimum reputation at $\pi=\alpha$ are subject to two flows: the inflow from the matched individuals in $(\alpha,p]$ and the outflow due to the exogenous exits. Therefore, in the steady-state it must be that the inflow equals $\delta n(\alpha)$.
Lastly, the net inflow to the matched agents on the interval $(\alpha,p]$ is equal to the net outflow in the steady-state, that is:
\begin{equation}
\label{eq: flow_[alpha,p]}
    \underbrace{\kappa \varphi \, \frac{\psi(p)}{\mu}\,n(p)}_{\substack{\text{new matches}\\ \text{originating from }p}} = \underbrace{\lambda\int_\alpha^p \pi m(\pi)\,\d \pi}_{\substack{\text{outflow of}\\ \text{successful agents}}} +\underbrace{\delta \int_\alpha^pm(\pi)\,\d \pi}_{\text{exogenous exists}}+\underbrace{\delta n(\alpha)}_{\substack{\text{endogenously}\\ \text{separated matches}}}.
\end{equation}
Lemma~\ref{lem: steady_state_profile} in the Appendix~\ref{sec: dist_results} finds the steady-state solution to the preceding distributional equations in closed-form, thereby paving the way for the subsequent equilibrium analysis.

\subsection{Equilibrium and Efficiency}
\label{subs: equil_eff}
To find the symmetric stationary equilibrium, each agent stipulates the population average for $\psi$ (denoted as $\mu$), and accordingly specifies the maximum attainable reputation function in the unmatched status (i.e., $w(1)$),   via the mapping $\mathsf{W}:[0,1] \to \BR_+$:
\begin{equation}
\label{eq: mathfrak_W}
    \mathsf{W}(\mu) := \frac{(r+\delta)^{-1}\kappa \varphi/\mu }{r+\delta+\lambda+\kappa \varphi/\mu}\,(\lambda-c)\,.
\end{equation}
Observe that the above mapping represents the value of $w(1)$ (analogous to equation~\eqref{eq: high_cost_w}), adjusted for the current setting, where the agents are short-lived and the contact rate depends on the reputation weight function $\psi(\cdot)$ and its steady-state average $\mu$.

Then, followed by the Bellman equation on the continuation region induced by $w(1)=\mathsf{W}(\mu)$, namely
\begin{equation*}
    rv(\pi) = \lambda-c + \lambda \pi \big(w(1)-v(\pi)\big)-\lambda\pi(1-\pi)v'(\pi)-\delta v(\pi),
\end{equation*}
each agent terminates the match at $\alpha=\mathsf{A}(w(1))$, where $\mathsf{A}:\BR_+\to [0,1]$ is given by $\mathsf{A}(w) := \frac{c}{\lambda(1+w)}$. This follows from the expression found for the termination point in~\eqref{eq: beta}. In the symmetric stationary equilibrium the initial stipulation about $\mu$ is self-fulfilling, that is $\mu = \mathsf{M}\big(\mu,\mathsf{A}\circ \mathsf{W}(\mu)\big)$, where $\mathsf{M}: [0,1]^2 \to \BR_+$ returns the population average of reputation weights under the steady-state measure $\bm{\pi_\infty}$, namely:
\begin{equation*}
    \mathsf{M}(\mu,\alpha) = \BE\left[\psi(\bm{\pi_\infty})\right]=m(1)+n(1)+\psi(p)n(p)+\int_\alpha^p\psi(\pi)m(\pi)\,\d \pi +\psi(\alpha)n(\alpha)\,.
\end{equation*}
\begin{definition}[Symmetric stationary equilibrium]
    The symmetric stationary equilibrium in this economy with reputational externality is the set of all fixed-points of the mapping $\mathsf{M}\big(\cdot,\mathsf{A}\circ \mathsf{W}(\cdot)\big)$ on the unit interval. A generic member is denoted by $\mu_e$. Associated with the equilibrium outcome $\mu_e$ is the equilibrium termination point $\alpha_e=\mathsf{A}\circ \mathsf{W}(\mu_e)$.
\end{definition}
\begin{theorem}
    There exists a symmetric stationary equilibrium.
\end{theorem}
\begin{proof}
In the Appendix~\ref{sec: stoch_order}, I show that an increase in $\alpha$ or $\mu$, holding the other variable constant, \textit{positively} shifts the steady-state distribution of $\bm{\pi_\infty}$ in the sense of \textit{second order stochastic dominance}. Since $\psi(\cdot)$ is increasing and concave, one can deduce that $\mathsf{M}(\mu,\alpha)$ is an increasing function in each argument. In addition, the composition map $\mathsf{A}\circ \mathsf{W}$ is increasing. Therefore the mapping $\mu \mapsto \mathsf{M}\big(\mu,\mathsf{A}\circ \mathsf{W}(\mu)\big)$ is a continuous increasing function from the unit interval to itself.\footnote{It is clearly continuous on $(0,1]$, and it is made continuous at $\mu=0$ by letting $\mathsf{W}(0):=\lim_{\mu\to 0} \mathsf{W}(\mu)$ and $\mathsf{M}(0,\alpha):=\lim_{\mu \to 0} \mathsf{M}(\mu,\alpha)$, where both limits exist in light of the expression~\eqref{eq: mathfrak_W} and Lemma~
\ref{lem: steady_state_profile} in the appendix.} Hence, a fixed-point $\mu_e$ and $\alpha_e=\mathsf{A}\circ \mathsf{W}(\mu_e)$ exist, thereby establishing the existence of a symmetric stationary equilibrium.
\end{proof}

Comparing the equilibrium outcome with the socially optimal choice, I express the steady-state flow surplus of the economy using the measures found in Lemma~\ref{lem: steady_state_profile}:
\begin{equation}
\label{eq: social_surplus_ext}
rS(\mu,\alpha) = (\lambda-c)m(1)+\int_\alpha^p(\lambda \pi-c)m(\pi)\,\d \pi.
\end{equation}
A benevolent social planner selects an $\alpha$ so that jointly with its induced $\mu$,  that is the fixed-point of $\mathsf{M}\left(\cdot,\alpha\right)$, they maximize the social surplus $S(\mu,\alpha)$.
\begin{definition}[Planner's problem]
    The planner's problem is 
    \begin{equation*}
        \max_\alpha S(\mu,\alpha) \text{ subject to } \mu=\mathsf{M}(\mu,\alpha)\,.
    \end{equation*}
\end{definition}
Note that the externality failed to be internalized in the agents' decisions is originated from the impact of their choices on $\mu$. Therefore, it is essential to incorporate $\mu = \mathsf{M}(\mu,\alpha)$ as the constraint of the planner's problem.

Next proposition explains why the identified equilibrium outcome is not constrained socially efficient, and highlights the direction along which the social surplus increases.
\begin{proposition}
\label{prop: cons_ineff_rep_ext}
    Every symmetric stationary equilibrium of the economy with reputational externality is not constrained-efficient. In particular, a local reduction in the termination point $\alpha_e$ increases the social surplus.
\end{proposition}
\begin{proof}
Every symmetric equilibrium is characterized by its associated pair $(\alpha_e,\mu_e)$, in which $\alpha_e=\mathsf{A}\circ \mathsf{W}(\mu_e)$ and $\mu_e=\mathsf{M}(\mu_e,\alpha_e)$. It is further a \textit{stable} equilibrium if $\partial_\mu \mathsf{M}(\mu_e,\alpha_e)<1$. From the expression for the social surplus in~\eqref{eq: social_surplus_ext} and Lemma~\ref{lem: steady_state_profile}, one can see that $S$ is decreasing in $\mu$, therefore, if $\mathsf{M}(\cdot,\alpha)$ has multiple fixed-points for a given $\alpha$ the one with the smallest $\mu$ is the efficient one. Furthermore, this equilibrium (with the smallest $\mu$) is stable because $\mathsf{M}(0,\alpha)>0$, and $\mathsf{M}(\cdot,\alpha)$ \textit{downcrosses} the 45-degree line at its first intersection.

Toward proving the constrained inefficiency, I employ a variational approach in the neighborhood of $\alpha_e$. Suppose the economy is in a stable pair $(\alpha_e,\mu_e)$, and the planner moves $\alpha_e$ by $\Delta \alpha$. The new smallest fixed-point $\mu_e+\Delta \mu$ satisfies
\begin{equation*}
    \mu_e+\Delta \mu = \mathsf{M}(\mu_e+\Delta \mu,\alpha_e+\Delta \alpha) \approx \mathsf{M}(\mu_e,\alpha_e)+ (\partial_\mu \mathsf{M}) \Delta \mu+(\partial_\alpha \mathsf{M})\Delta \alpha\,,
\end{equation*}
hence $\Delta \mu \approx \frac{\partial_\alpha \mathsf{M}}{1-\partial_\mu \mathsf{M}}\,\Delta \alpha$. Consequently, the change in the social surplus is:
\begin{equation}
\label{eq: inefficiency}
    r \Delta S \approx r\left(\frac{\partial_\alpha \mathsf{M}}{1-\partial_\mu \mathsf{M}} \,\partial_\mu S  +\partial_\alpha S\right)\Delta \alpha\,.
\end{equation}
Note that in every stable fixed-point of $\mathsf{M}(\cdot,\alpha_e)$, $\frac{\partial_\alpha \mathsf{M}}{1-\partial_\mu \mathsf{M}}>0$, because $\mathsf{M}$ is shown to be increasing in $\alpha$ and by the stability $\partial_\mu \mathsf{M}<1$. Additionally, as argued above $\partial_\mu S<0$. Therefore, lowering $\alpha_e$ (i.e., $\Delta \alpha<0$) leads to a strict improvement in the social surplus if $\partial_\alpha S<0$. Expression~\eqref{eq: social_surplus_ext} combined with Lemma~\ref{lem: steady_state_profile} and subsequent rearrangements, result in
\begin{equation*}
    \begin{split}
        r \partial_\alpha S(\mu_e,\alpha_e) &= (\lambda-c)\partial_\alpha m(1)-(\lambda \alpha-c)m(\alpha)\\
         &\hspace{-45pt}=-\underbrace{\frac{\kappa \varphi \psi(p)/\mu_e}{\delta +\kappa \varphi \psi(p)/\mu_e}\frac{1-p}{(1-\alpha_e)^2}\left(\frac{p}{1-p}\right)^{-\delta/\lambda}\left(\frac{\alpha_e}{1-\alpha_e}\right)^{\delta/\lambda}}_{>0} \times \\
         &\hspace{20pt} \left[\frac{\delta(\lambda \alpha_e-c)}{\lambda \alpha_e}+\frac{(\lambda-c)\kappa \varphi/\mu_e}{\delta+\lambda+\kappa \varphi/\mu_e}\right]\,.
     \end{split}
\end{equation*}
Therefore, the sign of $\partial_\alpha S(\mu_e,\alpha_e)$ is the opposite of the sign of the expression in the square bracket. Recall that in the equilibrium $\alpha_e=\mathsf{A}\circ \mathsf{W}(\mu_e)$, so
\begin{equation*}
    \begin{split}
        \frac{\delta(\lambda \alpha_e-c)}{\lambda \alpha_e}+\frac{(\lambda-c)\kappa \varphi/\mu_e}{\delta+\lambda+\kappa \varphi/\mu_e}&=-\delta \mathsf{W}(\mu_e)+\frac{(\lambda-c)\kappa \varphi/\mu_e}{\delta+\lambda+\kappa \varphi/\mu_e}\\
        &=-\delta \mathsf{W}(\mu_e)+\delta\lim_{r\to 0} \mathsf{W}(\mu_e)\geq 0,
    \end{split}
\end{equation*}
where the last inequality holds because $\mathsf{W}(\mu_e)$ is decreasing in $r$. This justifies that $\partial_\alpha S(\mu_e,\alpha_e)<0$, and hence by~\eqref{eq: inefficiency} a small reduction of equilibrium $\alpha_e$ leads to a strict improvement in the social surplus function.
\end{proof}

\begin{figure}[ht]
\begin{center}
\begin{tikzpicture}[scale=0.85]
\tikzmath{
    \P=0.4;
    \Alequil=0.280771;
    \Sequil=0.0660879;
    \Almax=0.126;
    \Smax=0.09723240985;
}
    \begin{axis}[axis lines=left,
	ymin=0,
	xmax=0.5,
	ymax=0.1,
	axis line style={thick,-stealth},
	xtick={0,\Almax,\Alequil,\P},
	xticklabels={$0$,$\alpha_*$,$\alpha_e$,$p$},
	ytick={0},
	yticklabels={$0$},
	x label style={at={(axis description cs:1,0)},anchor=north},
	xlabel={$\alpha$},
	y label style={at={(axis description cs:0,1)},anchor=east,rotate=-90},
	ylabel={$S(\alpha)$},
]
        \addplot[line width=1pt] table[col sep=comma] {Cons_surplus.csv};
    
        \draw [dashed,thin] (axis cs: \Alequil,0) --(\Alequil,\Sequil);
        
        \draw [dashed,thin] (axis cs: \Almax,0) --(\Almax,\Smax);
    \end{axis}
\end{tikzpicture}
\caption{Social surplus with reputational externality}
\label{fig: cons_surplus}
\end{center}
\end{figure}
Figure~\ref{fig: cons_surplus} is the result of a simulation that plots the social surplus as a function of $\alpha$, while implicitly satisfying $\mu=\mathsf{M}(\mu,\alpha)$ at every $\alpha\in [0,p]$. As depicted in this plot, the equilibrium termination point $\alpha_e$ is greater than the socially optimal point $\alpha_*$. Hence, the equilibrium outcome is associated with early termination of projects, and predicts a lower tolerance for failure than what is socially efficient. Given that the closed-form expressions for the steady-state distribution of $\bm{\pi_\infty}$ are intricate, establishing a non-local argument to demonstrate that $\alpha_e$ consistently exceeds $\alpha^*$ seems challenging. In the following example, I delve into an illustrative scenario that involves a limiting case.

\begin{example}[Long-lived agents, $\delta \to 0$]
According to Lemma~\ref{lem: steady_state_profile} as $\delta \to 0$, namely when agents become asymptotically long-lived, the limiting value of the flow social surplus becomes equal to
\begin{equation*}
    \lim_{\delta \to 0} rS = \lim_{\delta\to 0} (\lambda-c)m(1)=\frac{\kappa \varphi}{\lambda \mu +\kappa\varphi}\,\frac{p-\alpha}{1-\alpha}\,.
\end{equation*}
It was shown in the proof of Proposition~\ref{prop: cons_ineff_rep_ext} that $\mu$ is increasing in $\alpha$, hence the above limit is decreasing in $\alpha$. Thus, in an economy populated by long-lived agents that is subject to reputational externality, reducing the equilibrium termination point $\alpha$ unambiguously increases the social surplus.

The underlying reason for the surplus inefficiency stems from the reputational externality. This results in an undesirably high proportion of agents who have high ability yet are inactive, i.e.,  their reputation get stuck at $\alpha$. The mass of such agents is equal to $\alpha n(\alpha)$. By Lemma~\ref{lem: steady_state_profile} as $\delta \to 0$, one has
\begin{equation*}
    \lim_{\delta \to 0} \alpha n(\alpha) = \frac{\alpha(1-p)}{1-\alpha}\,,
\end{equation*}
that is increasing in $\alpha$. This means an inefficiently high proportion of agents stop the matching activity sooner than the optimal level, in spite of their high ability, which in turn reduces the social surplus.
\end{example}

\section{Concluding Remarks}
\label{sec: conclusion}
I study the optimal project selection policy of an agent with unknown ability. The agent randomly meets the projects drawn from a heterogeneous pool, that differ in their quality. In a match between the agent and a project a breakthrough arrives at the exponential rate depending on the type of the agent and the quality of the project. Since maintaining the projects are costly, the agent effectively solves a stopping time problem, in which she weighs the expected benefit of learning about her type as well as accomplishing breakthroughs against the endogenous reservation function (that is called the reputation value function in the paper).

The matching sets indicate what types of projects an agent with a certain level of reputation is willing to accept or continue the match with. 
In the space of continuously differentiable functions, I show there exists a unique optimum. Sections of the optimal matching set are increasing intervals, thus the agent follows cutoff strategies at the optimum. The thresholds depend on the type of the projects and are endogenously determined. They encode a number of messages: For example, lower levels of flow cost and time discount rate are associated with larger optimal matching sets. Additionally, it is shown that raising the meeting rate (or lowering the search frictions) has asymmetric effects across the two types of the projects: it unambiguously expands the high type section of the matching set, while on some regions it initially expands and then shrinks the low type section.

Compared to the no-learning benchmark (where there is no incomplete information about the agent's type), the optimal continuation sets are larger, therefore the agent shows more patience before stopping the projects. This is due to the convexity of the value functions in reputation, that itself is resulted from the learning incentives in the agent's dynamic problem.

Finally, the single agent setting is extended to an economy populated by a continuum of agents that exhibits reputational externality. Specifically, the meeting rate of each agent is positively impacted by her reputation and negatively by the average reputation weight across the population. Because of this externality the symmetric stationary equilibrium is not surplus efficient, and I show a local increase in the agents' tolerance (equivalently a marginal reduction in the lower endpoint of the equilibrium matching set) increases the social surplus.

\newpage
\appendix
\counterwithin{lemma}{section}

\section{Proofs}
\label{sec: proofs}
\subsection{Heuristic Derivation of the Bellman Equations}
\label{subs: Bellman_derivation}
First, I argue how the Bellman equation for $w(\pi)$ is derived. Below, I invoke a standard dynamic programming analysis:
\begin{equation*}
	\begin{gathered}
		w(\pi) = \kappa \sum_{q \in \mathcal{M}(\pi)} \big(w(\pi)+ (v_q(\pi)-w(\pi)) \big)\varphi_q \,\d t + \kappa \sum_{q \in \{a,b\}\setminus \mathcal{M}(\pi)} w(\pi) \varphi_q \,\d t \\
	    +\big(1-\kappa (\varphi_a+\varphi_b)\,\d t\big)(1- r\,\d t)w(\pi) + o(\d t)\,.
	\end{gathered}
\end{equation*} 
The first term in the \textit{rhs} is the expected value of payoffs generated from all \textit{acceptable} matches, noting that the next project with type $q$ arrives at the rate of $\kappa \varphi_q$. The second term is the expected payoff over all \textit{denied} matches, and the third term simply refers to the discounted payoff conditioned on receiving no proposal over the period $\d t$. Rearranging the above expression and letting $\d t\to 0$ amount to the Bellman equation for $w$ in~\eqref{eq: reservation_value}.

Next, I offer a heuristic derivation of the HJB equation in~\eqref{eq: Bellman}. Observe that if the agent stops immediately, then $v_q(\pi) =w(\pi)$. On the continuation region however, assume that she keeps the match over an infinitesimal period $\d t$. Over this period she incurs the total cost of $c\, \d t$ and receives the discounted payoff that consists of two factors: (\rn{1}) with the approximate probability of $\lambda_q \pi\, \d t$ a breakthrough happens and she receives the unit prize plus the value of being an unmatched agent with the maximum reputation (i.e., $w(1)$); (\rn{2}) with the remaining probability of $1- \lambda_q \pi \,\d t$ the match does not accomplish a breakthrough, thus the agent's posterior belief drops down to $\pi - \d \pi$, and she continues with the updated valuation of $v_q(\pi- \d \pi)$. Therefore, one arrives at the following representation for $v_q(\pi)$ on the continuation region:
\begin{equation*}
    \begin{gathered}
        v_q(\pi) = -c \, \d t + (1-r\, \d t)\Big(\lambda_q \pi \,\d t\, \big(1+w(1)\big)+ (1- \lambda_q \pi \,\d t)\, v_q(\pi- \d \pi)\Big) + o(\d t)\,.
    \end{gathered}
\end{equation*}
By~\eqref{eq: Poisson_Bayes} during an unsuccessful period of $\d t$, we have $\d \pi = -\lambda_q \pi (1-\pi) \, \d t + o(\d t)$. In addition, $v_q(\pi- \d \pi) = v_q(\pi) - v'_q(\pi)\, \d \pi + o(\d t)$. Replacing these two into the above equation, applying some rearrangements and sending $\d t \to 0$ justify the HJB equation in~\eqref{eq: Bellman}.

\subsection{Proof of Proposition~\ref{prop: q_monotonicity}}
Suppose the agent with reputation $\pi$ is approached by an $a$-project. Her optimal strategy is to match with the project so long as her reputation is above some level $\pi_0 \leq \pi$, i.e., the threshold rule. The case of $\pi_0=\pi$ simply means the agent rejects the project. Let $\pi_t^q$ represent the deterministic solution to equation~\eqref{eq: Poisson_Bayes}, when the agent is matched to a $q$-project. Define $t_q$ as the deterministic time at which this solution crosses the threshold $\pi_0$, namely:
\begin{equation*}
    t_q := \inf\{t\geq 0: \pi_t^q = \pi_0\}\,.
\end{equation*}
Because of Bayes law, we have
\begin{equation*}
    \frac{\pi_0}{1-\pi_0} = \frac{\pi}{1-\pi}\, e^{-\lambda_q t_q}\,.
\end{equation*}
Therefore, $\lambda_a t_a = \lambda_b t_b$, that in turn means $\BP(\sigma_a>t_a) = \BP(\sigma_b>t_b)$, in that $\sigma_q$ was defined as the exponential time of the breakthrough in a $q$-match. By the optimality of $\pi_0$ as a cutoff strategy for an $a$-match, one has:
\begin{equation*}
    \begin{gathered}
		v_a(\pi) = \BE\left[e^{-r\sigma_a}-c\int_0^{\sigma_a} e^{-rs}\d s+e^{-r \sigma_a} w(\pi_{\sigma_a}); \sigma_a \leq t_a\right]\\
		+\BE\left[-c\int_0^{t_a} e^{-r s}\d s+e^{-r t_a}w(\pi^a_{t_a});\sigma_a>t_a\right]\,.
     \end{gathered}	
\end{equation*}
Since $\pi_{\sigma_a}=1$, $\pi^a_{t_a}=\pi_0$ and $t_a$ is a deterministic time, then
\begin{equation}
\label{eq: v_a_equals}
    \begin{gathered}
        v_a(\pi) = \BE\left[e^{-r\sigma_a}-c\int_0^{\sigma_a} e^{-rs}\d s+e^{-r \sigma_a} w(1); \sigma_a \leq t_a\right]\\
        +\left(-c\int_0^{t_a} e^{-r s}\d s+e^{-r t_a}w(\pi_0)\right)\BP\left(\sigma_a>t_a\right)\,.
    \end{gathered}
\end{equation}
Recall that $v_b(\pi)$ is the \textit{optimal} matching value function when the agent is approached by a $b$-project, therefore, choosing (the deterministic value) $t_b$ as a stopping time when backing a $b$-project leads to a weakly smaller payoff. That is by equation~\eqref{eq: matching_value_function} it holds that
\begin{equation*}
    \begin{split}
		v_b(\pi) &\geq \BE\left[e^{-r\sigma_b}-c\int_0^{\sigma_b} e^{-rs}\d s+e^{-r \sigma_b} w(\pi_{\sigma_b}); \sigma_b \leq t_b\right]\\
		&\quad+\BE\left[-c\int_0^{t_b} e^{-r s}\d s+e^{-r t_b}w(\pi^b_{t_b});\sigma_b>t_b\right]\,.
	\end{split}	
\end{equation*}
By a similar reasoning, one obtains that
\begin{equation}
\label{eq: v_b_lower_bound}
    \begin{gathered}
        v_b(\pi) \geq \BE\left[e^{-r\sigma_b}-c\int_0^{\sigma_b} e^{-rs}\d s+e^{-r \sigma_b} w(1); \sigma_b \leq t_b\right]\\
		+\left(-c\int_0^{t_b} e^{-r s}\d s+e^{-r t_b}w(\pi_0)\right)\BP\left(\sigma_b>t_b\right)\,.
    \end{gathered}
\end{equation}
Next, I compare the \textit{rhs} of equations~\eqref{eq: v_a_equals} and~\eqref{eq: v_b_lower_bound}. First, observe that
\begin{equation*}
\begin{gathered}
    \BE\left[e^{-r\sigma_b}-c\int_0^{\sigma_b} e^{-rs}\d s+e^{-r \sigma_b} w(1); \sigma_b \leq t_b\right]\\
    = \BP(\sigma_b\leq t_b) \, \BE\left[e^{-r\sigma_b}-c\int_0^{\sigma_b} e^{-rs}\d s+e^{-r \sigma_b} w(1)\, \big\vert \, \sigma_b \leq t_b\right]\,.
\end{gathered}
\end{equation*}
One can easily verify that since $\lambda_b > \lambda_a$ and $\lambda_a t_a = \lambda_b t_b$, the conditional distribution of $(\sigma_a \mid \sigma_a \leq t_a)$ first order stochastically dominates the conditional distribution of $(\sigma_b \mid \sigma_b \leq t_b)$. The expression above inside the conditional expectation is a decreasing function in $\sigma$, therefore,
\begin{equation*}
\begin{gathered}
    \BE\left[e^{-r\sigma_b}-c\int_0^{\sigma_b} e^{-rs}\d s+e^{-r \sigma_b} w(1)\, \big\vert \, \sigma_b \leq t_b\right] \geq \\
    \BE\left[e^{-r\sigma_a}-c\int_0^{\sigma_a} e^{-rs}\d s+e^{-r \sigma_a} w(1)\, \big\vert \, \sigma_a \leq t_a\right]\,.
\end{gathered}
\end{equation*}
Since $\BP(\sigma_b\leq t_b)=\BP(\sigma_a\leq t_a)$, the first terms on the \textit{rhs} of~\eqref{eq: v_a_equals} and~\eqref{eq: v_b_lower_bound} compare as:
\begin{equation*}
    \begin{gathered}
        \BE\left[e^{-r\sigma_b}-c\int_0^{\sigma_b} e^{-rs}\d s+e^{-r \sigma_b} w(1); \sigma_b \leq t_b\right] \geq \\
        \BE\left[e^{-r\sigma_a}-c\int_0^{\sigma_a} e^{-rs}\d s+e^{-r \sigma_a} w(1); \sigma_a \leq t_a\right]\,.
    \end{gathered}
\end{equation*}
Regarding the second terms, observe that $\BP(\sigma_b >  t_b)=\BP(\sigma_a > t_a)$ and $t_b < t_a$, hence
\begin{equation*}
    \begin{gathered}
        \left(-c\int_0^{t_b} e^{-r s}\d s+e^{-r t_b}w(\pi_0)\right)\BP\left(\sigma_b>t_b\right) \geq \\
        \left(-c\int_0^{t_a} e^{-r s}\d s+e^{-r t_a}w(\pi_0)\right)\BP\left(\sigma_a>t_a\right)\,.
    \end{gathered}
\end{equation*}
The previous two inequalities jointly imply that $v_b(\pi) \geq v_a(\pi)$, thus proving Proposition~\ref{prop: q_monotonicity}.\qed
\subsection{Proof of Corollary~\ref{cor: pi=1}}
At $\pi=1$, there is no learning and hence the Bellman equation in~\eqref{eq: Bellman_at_1} reduces to
\begin{equation*}
    v_q(1) = \max\left\{w(1),\frac{\lambda_q -c}{r+\lambda_q }+\frac{\lambda_q  w(1)}{r+\lambda_q }\right\}\,.
\end{equation*}
This implies $1 \in \mathcal{M}_q$ if and only if $\lambda_q -c > r w(1)$. Since $\lambda_b> c$ and $\mathcal{M}_a \subseteq \mathcal{M}_b$, then $1 \in \mathcal{M}_b$ always. Let $w_b$ be the reservation value function in an outcome where $1 \notin \mathcal{M}_a$. Then according to the other leg of the fixed-point system, i.e., equation~\eqref{eq: reservation_value}, it must be that
\begin{equation}
\label{eq: high_cost_w}
    rw_b(1) = \frac{\kappa \varphi_b\left(\lambda_b-c\right)}{r+\kappa \varphi_b+\lambda_b}\,.
\end{equation}
Hence, $1\notin \mathcal{M}_a$ implies that $\lambda_a -c \leq rw_b(1)$. 

Conversely, assume $1\in \mathcal{M}_a$ and let $w_{ab}(1)$ be the reservation value function in this outcome, where $1\in \mathcal{M}_a$. Specifically, one obtains
\begin{equation}
\label{eq: low_cost_w}
	rw_{ab}(1)=\frac{\kappa \varphi_b\left(\lambda_b-c\right)\left(r+\lambda_a\right)+\kappa \varphi_a\left(\lambda_a-c\right)\left(r+\lambda_b\right)}{\left(r+\lambda_a\right)\left(r+\lambda_b\right)+\kappa \varphi_b\left(r+\lambda_a\right)+\kappa \varphi_a\left(r+\lambda_b\right)}\,.
\end{equation}
Then, $1 \in \mathcal{M}_a$ means $\lambda_a -c > rw_{ab}(1)$. Also, because of optimality in equation~\eqref{eq: reservation_value}, one has $w_{ab}(1) > w_b(1)$, hence it must be that $\lambda_a -c > rw_b(1)$.\qed

\subsection{Proof of Proposition~\ref{prop: optimal_Ma}}
To prove this proposition, I will first show that $\mathcal{M}_a$ is always an interval, implying that it is always connected. By Corollary~\ref{cor: pi=1}, $1\in \mathcal{M}_a$ in the low cost regime. So the following lemma already establishes that $\mathcal{M}_a$ must be an increasing interval in the low cost regime.
\begin{lemma}
In both cost regimes (low and high) the optimal $\mathcal{M}_a$ is an interval.
\end{lemma}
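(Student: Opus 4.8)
The plan is to show that the stopping region for $a$ can never separate two acceptance regions, by translating the question into the sign behavior of a single \emph{concave} function. Throughout write $\mu:=\tfrac{\kappa\varphi_b}{r+\kappa\varphi_b}\in(0,1)$ and $\psi:=v_a-\mu v_b$ on $\mathcal{M}_b=(\beta,1]$. The first step is to identify $\mathcal{M}_a$ with $\{\psi>0\}$. On the stopping region for $a$ that sits inside $\mathcal{M}_b$, only $b$-projects are accepted, so \eqref{eq: reservation_value} gives $w=\mu v_b$ and hence $v_a=w=\mu v_b$ there; on $\mathcal{M}_a$ the identity $rw=\kappa\varphi_a(v_a-w)+\kappa\varphi_b(v_b-w)$ rearranges to the pointwise equivalence $v_a>w\iff v_a>\mu v_b$. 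Thus $\mathcal{M}_a=\{\psi>0\}$, the smooth-pasting conditions \eqref{eq: boundary_cond_va} read $\psi=\psi'=0$ at every interior boundary point of $\mathcal{M}_a$, and it suffices to prove that $\{\psi>0\}$ is connected.

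Next I would introduce the obstacle defect $G(\pi):=[\mathcal{L}_a(\mu v_b)](\pi)-r\mu v_b(\pi)-c$. Feeding the interior branch of the $b$-Bellman equation \eqref{eq: Bellman} (valid on all of $\mathcal{M}_b\supseteq\mathcal{M}_a$) into this expression eliminates $v_b'$ and yields
\begin{equation*}
G(\pi)=\frac{r\lambda_a\big(1+w(1)\big)}{r+\kappa\varphi_b}\,\pi-\frac{\mu r(\lambda_b-\lambda_a)}{\lambda_b}\,v_b(\pi)+c\Big(\frac{\mu\lambda_a}{\lambda_b}-1\Big).
\end{equation*}
Since $\lambda_b>\lambda_a$ and $v^*_b$ is convex (which follows from its closed form \eqref{eq: optimal_vb}, whose homogeneous factor is convex and carries a positive coefficient), $G$ is an affine function of $\pi$ minus a positive multiple of a convex function, hence \emph{concave} on $(\beta,1]$. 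Two facts will drive the argument: the superharmonic property forces $G\le0$ on the stopping region for $a$ (there $v_a=\mu v_b$, so $\mathcal{L}_a v_a-rv_a-c=G$), while subtracting the relation $\mathcal{L}_a(\mu v_b)-r\mu v_b-c=G$ from the $a$-Bellman equation gives, on $\mathcal{M}_a$, the first-order linear relation
\begin{equation*}
\lambda_a\pi(1-\pi)\psi'(\pi)=-\big(\lambda_a\pi+r\big)\psi(\pi)+G(\pi).
\end{equation*}

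With this ODE I would show that every connected component of $\mathcal{M}_a$ contains a point where $G>0$. On an interior component $\psi$ attains a positive interior maximum at some $\pi_m$ with $\psi'(\pi_m)=0$, whence $G(\pi_m)=(\lambda_a\pi_m+r)\psi(\pi_m)>0$; on the top component $(\alpha,1]$ present in the low-cost regime, $G\le0$ throughout would force $\psi'<0$ wherever $\psi>0$, contradicting $\psi(\alpha)=0$ with $\psi>0$ immediately to the right, so again $G>0$ somewhere. Finally, if $\mathcal{M}_a$ had two components there would be points $x_1<x_2$ in different components with $G(x_1),G(x_2)>0$ separated by a stopping point or interval on which $G\le0$; but concavity of $G$ gives $G>0$ on all of $(x_1,x_2)$, a contradiction. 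Hence $\mathcal{M}_a$ is connected, i.e.\ an interval, in both regimes.

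The main obstacle is establishing the curvature input — the concavity of $G$, equivalently the convexity of $v^*_b$ — and discharging the boundary bookkeeping: that $G=0$ and the ODE persist up to the smooth-pasting endpoints, and that the degenerate case of a single-point gap between two components is excluded as well (there $\psi$ has a zero interior minimum, forcing $G=0$ at an interior point of $(x_1,x_2)$, which again clashes with strict concave positivity). Everything else is a direct manipulation of the two Bellman equations together with the superharmonic inequality.
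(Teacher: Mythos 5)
Your proof is correct, and at its core it runs on the same engine as the paper's own argument: writing $\mu:=\kappa\varphi_b/(r+\kappa\varphi_b)$, your $G$ is exactly the paper's $\mathcal{D}_a v_a := \mathcal{L}_a v_a - r v_a - c$ evaluated where $v_a=\mu v_b^*$, your closed-form expression for it matches the paper's, your identification $\mathcal{M}_a=\{\psi>0\}$ is the paper's Lemma~\ref{lem: ordering}, and both arguments play the same curvature input (convexity of $v_b^*$ from \eqref{eq: optimal_vb} makes this defect function concave) against the superharmonic inequality. The difference is the endgame. The paper works locally on the gap $[\pi_2,\pi_3]$ between two putative components: there $\mathcal{D}_a v_a$ vanishes at both endpoints (continuity plus the Bellman equation holding on the components), so \emph{strict} concavity forces it positive inside the gap, contradicting superharmonicity directly. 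You work globally on $(\beta,1]$: you derive the first-order relation $\lambda_a\pi(1-\pi)\psi'=-(\lambda_a\pi+r)\psi+G$ for $\psi=v_a-\mu v_b$, use an interior-maximum (or top-component) argument to produce a point with $G>0$ inside each component, note $G\le 0$ on gaps, and let concavity of $G$ forbid the sign pattern $+,\ \le 0,\ +$. Your route costs an extra ingredient (the $\psi$-ODE) that the paper does not need, but it buys two things: you only need non-strict concavity of $G$ (the paper needs $v_b^{*\prime\prime}>0$, not merely $\ge 0$, to convert zero endpoint values into interior positivity), and you explicitly dispose of the degenerate case of a single-point gap between two components --- a case the paper's argument, which posits maximal intervals $(\pi_1,\pi_2)$ and $(\pi_3,\pi_4)$ with $\pi_2<\pi_3$, passes over silently (with $\pi_2=\pi_3$ the paper's endpoint-value argument gives no contradiction, since concavity on a degenerate interval is vacuous, whereas your observation that $G(\pi_2)=0$ while $G>0$ on both sides still clashes with concavity).
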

\begin{proof}
Let us define $\mathcal{D}_a v_a := \mathcal{L}_a v_a -r v_a-c$. Since $v_a$ belongs to $C^1$, then $\mathcal{D}_a v_a$ is continuous. In addition, superharmonicity implies that $\mathcal{D}_a v_a(\pi)\leq 0$ for all $\pi$, and particularly, $\mathcal{D}_a v_a(\pi)=0$ on $\mathcal{M}_a$ by the Bellman equation. Assume $\pi \in \mathcal{M}_a^c \cap \mathcal{M}_b$, then 
\begin{equation*}
    v_a(\pi) = w(\pi) = \frac{\kappa \varphi_b}{r+ \kappa \varphi_b}\, v^*_b(\pi)\,.
\end{equation*}
Therefore, the Bellman equation for $v^*_b$ implies that
\begin{equation*}
    \mathcal{D}_av_a(\pi) = \frac{-\kappa \varphi_b}{r+\kappa \varphi_b}(\lambda_b-\lambda_a)\,\frac{r v^*_b(\pi)+c}{\lambda_b}+\frac{r\lambda_a \pi \big(1+w(1)\big)-cr}{r+\kappa \varphi_b}\,.
\end{equation*}
The unique characterization for $v^*_b$ in~\eqref{eq: optimal_vb} is twice differentiable. Since $\lambda_b>c$, it is easy to verify that $v_b^{* ''}  \geq 0$, and especially $v_b^{* ''} >0$ on $\mathcal{M}_b$. Hence, for $\pi \in \mathcal{M}_a^c \cap \mathcal{M}_b$ the above expression leads to
\begin{equation*}
	\frac{\d^2}{\d\, \pi^2}\, \mathcal{D}_a v_a(\pi)=\frac{-\kappa \varphi_b (\lambda_b-\lambda_a)}{\left(r+\kappa \varphi_b\right)\lambda_b}\,v_b^{* ''}(\pi)< 0\,.
\end{equation*}
Therefore, $\mathcal{D}_a v_a$ is strictly concave on every connected subset of $\mathcal{M}_a^c \cap \mathcal{M}_b$. Now assume by contradiction that $\mathcal{M}_a$ is not connected. Thus, it shall contain two disjoint maximal open intervals, say $(\pi_1,\pi_2)$ and $(\pi_3,\pi_4)$, where $\pi_2<\pi_3$. Since $\mathcal{M}_b$ is an increasing interval containing $\mathcal{M}_a$ --- respectively, by Lemma~\ref{lem: single_cross} and Corollary~\ref{cor: set_inclusion} --- it must be that $[\pi_2,\pi_3] \subset \mathcal{M}_a^c \cap \mathcal{M}_b$. The previous analysis means that $\mathcal{D}_a v_a = 0$ on $(\pi_1,\pi_2) \cup (\pi_3,\pi_4)$, and $\mathcal{D}_a v_a$ is strictly concave in between, i.e., on $[\pi_2,\pi_3]$. Thus, continuity of $\mathcal{D}_a v_a$ implies that it is positive on $[\pi_2,\pi_3]$, violating the superharmonicity, and thus proving the lemma. 
\end{proof}
The following two lemmas are aimed at proving $\mathcal{M}_a =\emptyset$ in the high cost regime. In the first one, I show a characterization for the optimal matching set that only hinges on the optimal matching value function $v$. Borrowing that in the second lemma, I show $\mathcal{M}_a$ cannot have a lower boundary point in $\mathcal{M}_b$. Thus, in light of $\mathcal{M}_a \subseteq \mathcal{M}_b$, one can conclude that $\mathcal{M}_a = \emptyset$. Finally, in both of these lemmas $v_b$ is equal to the optimal $v^*_b$ --- determined uniquely in~\eqref{eq: optimal_vb} --- but the $*$ superscript is dropped to avoid clutter.
\begin{lemma}
\label{lem: ordering}
At the optimum, $\pi \in \mathcal{M}_a \cap \mathcal{M}_b$ if and only if
\begin{equation}
\label{eq: ordering_cond}
	\frac{\kappa \varphi_a}{r+ \kappa \varphi_a} <\frac{v_b(\pi)}{v_a(\pi)} < \frac{r+\kappa \varphi_b}{\kappa \varphi_b} \,.
\end{equation} 
In addition, $\pi \in \mathcal{M}_a^c \cap \mathcal{M}_b$ if and only if the second inequality above binds.
\end{lemma}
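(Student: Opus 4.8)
The plan is to exploit Corollary~\ref{cor: set_inclusion}, which gives $\mathcal{M}_a \subseteq \mathcal{M}_b$, so that every $\pi \in \mathcal{M}_b$ falls into exactly one of two mutually exclusive cases: either $\pi \in \mathcal{M}_a \cap \mathcal{M}_b = \mathcal{M}_a$, or $\pi \in \mathcal{M}_a^c \cap \mathcal{M}_b$. I would restrict attention to $\pi \in \mathcal{M}_b$, where $v_b(\pi) > 0$ and hence also $v_a(\pi) > 0$ (if $\pi \notin \mathcal{M}_a$ then $v_a = w = \tfrac{\kappa\varphi_b}{r+\kappa\varphi_b}v_b > 0$, and if $\pi \in \mathcal{M}_a$ then $v_a > w \geq 0$), so that the ratio $v_b/v_a$ is well defined. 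The whole argument then reduces to reading off the membership tests $v_a > w$ and $v_b > w$ from the closed form that \eqref{eq: reservation_value} forces on $w$ in each branch.

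First I would treat the case $\pi \in \mathcal{M}_a \cap \mathcal{M}_b$. Here $\mathcal{M}(\pi) = \{a,b\}$, so \eqref{eq: reservation_value} reads $rw = \kappa\varphi_a(v_a - w) + \kappa\varphi_b(v_b - w)$, which solves to
\[
w(\pi) = \frac{\kappa\varphi_a v_a(\pi) + \kappa\varphi_b v_b(\pi)}{r + \kappa\varphi_a + \kappa\varphi_b}.
\]
Substituting this into $v_a(\pi) > w(\pi)$ and clearing denominators collapses the $\varphi_a$ terms and yields $v_a(r + \kappa\varphi_b) > \kappa\varphi_b v_b$, i.e.\ the upper bound $v_b/v_a < (r+\kappa\varphi_b)/(\kappa\varphi_b)$; symmetrically $v_b(\pi) > w(\pi)$ collapses the $\varphi_b$ terms and gives the lower bound $v_b/v_a > \kappa\varphi_a/(r+\kappa\varphi_a)$. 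This delivers the forward implication of the main claim.

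Next I would handle the binding statement and the converse together. If $\pi \in \mathcal{M}_a^c \cap \mathcal{M}_b$, then $\mathcal{M}(\pi) = \{b\}$, so \eqref{eq: reservation_value} gives $w(\pi) = \tfrac{\kappa\varphi_b}{r+\kappa\varphi_b} v_b(\pi)$, while $v_a(\pi) = w(\pi)$ on the stopping region $\mathcal{M}_a^c$ by \eqref{eq: matching_set}; dividing yields exactly $v_b/v_a = (r+\kappa\varphi_b)/(\kappa\varphi_b)$, so the upper inequality binds. Since the two cases exhaust $\mathcal{M}_b$ and the first case gives the upper bound \emph{strictly}, the upper bound binds if and only if $\pi \in \mathcal{M}_a^c \cap \mathcal{M}_b$; and if both inequalities hold strictly then $\pi$ cannot lie in the second case, so $\pi \in \mathcal{M}_a \cap \mathcal{M}_b$. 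I would also remark that the lower bound is automatic on $\mathcal{M}_b$: by Proposition~\ref{prop: q_monotonicity} one has $v_b \geq v_a$, whence $v_b/v_a \geq 1 > \kappa\varphi_a/(r+\kappa\varphi_a)$, so the genuinely informative constraint is the upper one.

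The main obstacle is structural rather than computational: $w$ is not an exogenously given function but is itself pinned down, through \eqref{eq: reservation_value}, by which sections of the matching set contain $\pi$. One therefore cannot plug a single formula for $w$ into the membership tests; the argument must be organized around the exhaustive dichotomy on $\mathcal{M}_b$, using the correct closed form of $w$ in each branch and then verifying that the resulting inequalities are consistent with the branch assumed. Once this bookkeeping is in place, every remaining step is a one-line manipulation of \eqref{eq: reservation_value}.
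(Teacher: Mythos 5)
Your proposal is correct and follows essentially the same route as the paper: in each branch of the dichotomy, substitute the closed form that \eqref{eq: reservation_value} forces on $w(\pi)$ (the paper writes this via the indicator representation $w(\pi) = \kappa\bigl(\varphi_a v_a\chi_a + \varphi_b v_b\chi_b\bigr)/\bigl(r+\kappa(\varphi_a\chi_a+\varphi_b\chi_b)\bigr)$) into the membership tests $v_a(\pi)>w(\pi)$ and $v_b(\pi)>w(\pi)$, and simplify to the two inequalities in \eqref{eq: ordering_cond}. Your organization via the exhaustive dichotomy on $\mathcal{M}_b$, with the converse obtained by elimination because the stopping branch forces the upper bound to bind, is a slightly more explicit rendering of the paper's terse ``if'' direction, but the mathematical content is identical.
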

\begin{proof}
An equivalent representation for equation~\eqref{eq: reservation_value} is 
\begin{equation}
\label{eq: chi_reservation_value}
	w(\pi) = \frac{\kappa \big(\varphi_a v_a(\pi)\chi_a(\pi) + \varphi_b v_b(\pi) \chi_b(\pi)\big)}{r+\kappa \big(\varphi_a\chi_a(\pi)+\varphi_b\chi_b(\pi)\big)}\,.
\end{equation}
One can check that if the inequality chain~\eqref{eq: ordering_cond} holds, then with $\chi_a(\pi)=\chi_b(\pi)=1$ in the above representation, both of the conditions $v_a(\pi)>w(\pi)$ and $v_b(\pi)>w(\pi)$ are satisfied, and hence the \textit{if} part is established. For the \textit{only if} direction, assume $\pi \in \mathcal{M}_a\cap \mathcal{M}_b$, then it must be that $\chi_a(\pi)=\chi_b(\pi)=1$. Replacing this in~\eqref{eq: chi_reservation_value} and simplifying $v_b(\pi)>w(\pi)$ justify the first inequality in~\eqref{eq: ordering_cond}. Similarly, simplifying $v_a(\pi)>w(\pi)$ leads to the second inequality in~\eqref{eq: ordering_cond}. The proof of the last claim in the lemma follows the same logic.  
\end{proof}
\begin{lemma}
Suppose $\mathcal{M}_a$ and $\mathcal{M}_b$ are the optimal matching sets in the high cost regime. Then, $\mathcal{M}_a$ cannot have a lower boundary point in $\mathcal{M}_b$.
\end{lemma}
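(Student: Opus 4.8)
The plan is to argue by contradiction. Suppose $\mathcal{M}_a$ does have a lower boundary point $\alpha := \inf\mathcal{M}_a$ lying in $\mathcal{M}_b$, i.e.\ $\alpha > \beta$. By the interval lemma just established, $\mathcal{M}_a$ is connected, and by Corollary~\ref{cor: pi=1} the high cost regime forces $1 \notin \mathcal{M}_a$; hence $\mathcal{M}_a = (\alpha,\alpha')$ with $\alpha' := \sup\mathcal{M}_a \le 1$. The first thing I would do is rule out $\alpha' = 1$. On the interval $(\alpha,\alpha')$ the function $v_a$ solves the Bellman ODE, and the homogeneous factor $(1-\pi)^{1+r/\lambda_a}\pi^{-r/\lambda_a}$ vanishes at $\pi=1$, so every $C^1$ solution extends to the value $v_a(1)=\frac{\lambda_a(1+w(1))-c}{r+\lambda_a}$ irrespective of the free constant. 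In the high cost regime $\lambda_a - c < r\,w(1)$, which is precisely the inequality that makes this limit strictly below $w(1)$; since the majorant property demands $v_a(1)\ge w(1)$, the set $\mathcal{M}_a$ cannot abut $\pi=1$. Thus $\alpha' < 1$, and because $\alpha' > \alpha > \beta$ both $\alpha,\alpha' \in \mathcal{M}_b$.

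Next I would invoke the explicit expression for $\mathcal{D}_a v_a$ on $\mathcal{M}_a^c\cap\mathcal{M}_b$ obtained in the preceding lemma, namely $\tilde\Phi := \mathcal{D}_a\!\big(\tfrac{\kappa\varphi_b}{r+\kappa\varphi_b}v_b\big)$, which is \emph{strictly concave} on $\mathcal{M}_b$ because $v_b'' > 0$ there. Since $v_a\in C^1$, $\mathcal{D}_a v_a$ is continuous; it vanishes on $\mathcal{M}_a$ by the Bellman equation and equals $\tilde\Phi$ on $\mathcal{M}_a^c\cap\mathcal{M}_b$. Passing to the two boundary points of $\mathcal{M}_a$ therefore yields $\tilde\Phi(\alpha)=\tilde\Phi(\alpha')=0$. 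A strictly concave function with two zeros at $\alpha<\alpha'$ is strictly positive between them, so $\tilde\Phi > 0$ on $(\alpha,\alpha')$.

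The contradiction then comes from the upper boundary. Consider $h := v_a - \tfrac{\kappa\varphi_b}{r+\kappa\varphi_b}v_b$. Value-matching at the two boundaries (using that just outside $\mathcal{M}_a$ one has $w = \tfrac{\kappa\varphi_b}{r+\kappa\varphi_b}v_b$) gives $h(\alpha)=h(\alpha')=0$, while the ordering characterization of Lemma~\ref{lem: ordering} gives $h>0$ on $(\alpha,\alpha')$. Subtracting the two Bellman relations and using that $\mathcal{D}_a$ acts affinely with the same first-order linear part on any argument, one finds on $(\alpha,\alpha')$
\begin{equation*}
	\lambda_a\pi(1-\pi)\,h'(\pi) + (r+\lambda_a\pi)\,h(\pi) = \tilde\Phi(\pi) > 0 .
\end{equation*}
Integrating this first-order linear equation forward from $\alpha$ with initial value $h(\alpha)=0$, the positivity of the right-hand side and of the integrating factor forces $h(\alpha') > 0$ --- here the bound $\alpha'<1$ is used so that the integrating factor stays finite at $\alpha'$. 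This contradicts $h(\alpha')=0$, completing the argument.

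The step I expect to be the main obstacle is ruling out $\alpha'=1$: it is the only place the high cost inequality genuinely enters (the same argument must fail in the low cost regime, where $\mathcal{M}_a=(\alpha,1]$ legitimately has a lower boundary point inside $\mathcal{M}_b$), and it cannot be handled by a local smooth-pasting condition at $1$ but only through the fixed boundary value of the ODE solution at $\pi=1$ combined with the majorant property; one must also be slightly careful at the knife-edge $\lambda_a-c=r\,w(1)$, where strictness is lost. Once $\alpha'<1$ is secured, the strict concavity of $\mathcal{D}_a v_a$ on $\mathcal{M}_a^c\cap\mathcal{M}_b$ and the two-sided vanishing of $h$ deliver the contradiction almost mechanically.
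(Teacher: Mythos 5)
Your argument is a genuinely different proof from the paper's, and away from one corner case it is correct. The paper never looks at the upper end of $\mathcal{M}_a$: it works entirely at the single point $x=\inf\mathcal{M}_a$, observes via Lemma~\ref{lem: ordering} that $x$ maximizes $v_b/v_a$, turns the one-sided second-order condition $\lim_{\ve\downarrow 0}\left(v_b/v_a\right)''(x+\ve)\leq 0$ into an inequality purely in the primitives (its relation \eqref{eq: primitives}), and shows that inequality is infeasible for $x\in[0,1)$ in the high cost regime. You instead exploit \emph{both} endpoints of $\mathcal{M}_a=(\alpha,\alpha')$: continuity of $\mathcal{D}_a v_a$ pins down $\tilde\Phi(\alpha)=\tilde\Phi(\alpha')=0$, the strict concavity of $\tilde\Phi$ inherited from the preceding lemma gives $\tilde\Phi>0$ in between, and the first-order linear equation $\lambda_a\pi(1-\pi)h'+(r+\lambda_a\pi)h=\tilde\Phi$ for $h=v_a-\frac{\kappa\varphi_b}{r+\kappa\varphi_b}v_b$, integrated with the factor $\pi^{r/\lambda_a}(1-\pi)^{-(1+r/\lambda_a)}$, forces $h(\alpha')>0$, contradicting value matching at $\alpha'$. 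This avoids the paper's heaviest algebra (no second derivatives of $v_a$, no substitution of \eqref{eq: high_cost_w}), but the cleanliness has a price: you must separately rule out $\alpha'=1$, which is the only place the high cost hypothesis enters your proof, whereas the paper's one-endpoint argument needs no such case distinction.

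That price is where the genuine gap sits. The paper defines the high cost regime as the \emph{complement} of the strict low-cost inequality, so the knife edge $\lambda_a-c=r w(1)$, with $w(1)$ as in \eqref{eq: high_cost_w}, is part of the regime the lemma must cover. There your exclusion of $\alpha'=1$ collapses: the ODE forces $v_a(1)=\frac{\lambda_a\left(1+w(1)\right)-c}{r+\lambda_a}=w(1)$, which is consistent with the majorant property, and the remainder of your argument cannot absorb the case $\mathcal{M}_a=(\alpha,1)$, because the integrating factor diverges at $\pi=1$ (an increasing $\mu h$, $h\to 0$ and $\mu\to\infty$ are mutually compatible, so no contradiction results). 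You flagged this, but flagging does not close it; the paper's proof does cover the knife edge, since its contradiction is evaluated at $x=\inf\mathcal{M}_a<1$, where the affine comparison is strict. The gap is fixable inside your scheme with one extra computation: if $\mathcal{M}_a=(\alpha,1)$, then because the homogeneous solutions $(1-\pi)^{1+r/\lambda_q}\pi^{-r/\lambda_q}$ have vanishing value \emph{and} derivative at $\pi=1$, the left derivative of $h$ at $1$ is pinned down as
\begin{equation*}
h'(1^-)=\left(1+w(1)+\frac{c}{r}\right)\left(\frac{\lambda_a}{r+\lambda_a}-\frac{\kappa\varphi_b}{r+\kappa\varphi_b}\,\frac{\lambda_b}{r+\lambda_b}\right),
\end{equation*}
and this is strictly positive if and only if $\lambda_a>\kappa\varphi_b\lambda_b/(r+\kappa\varphi_b+\lambda_b)$, which holds at the knife edge since there $\lambda_a=\bigl(c(r+\lambda_b)+\kappa\varphi_b\lambda_b\bigr)/(r+\kappa\varphi_b+\lambda_b)$. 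But $h>0$ on $(\alpha,1)$ (by Lemma~\ref{lem: ordering}) together with $h(1)=0$ requires $h'(1^-)\leq 0$, a contradiction. With that addition your proof is complete on the entire high cost regime.
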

\begin{proof}
Assume by contradiction that $x:=\inf \mathcal{M}_a$ belongs to $\mathcal{M}_b$. Then, continuous differentiability implies that
\begin{equation}
\label{eq: smooth_fit}
	v_a(x) = w(x)= \frac{\kappa \varphi_b}{r+\kappa \varphi_b}\,v_b(x) ~~ \text{and} ~~ v'_a(x) =w'(x)= \frac{\kappa \varphi_b}{r+\kappa \varphi_b}\,v_b'(x)\,.
\end{equation}
Now define $\Omega_q(x) := -c +\lambda_q x\big(1+w(1)\big)$ and $\Gamma_q(x):= r+\lambda_q x$. Then, continuous differentiability and the Bellman equations on the continuation regions $\mathcal{M}_a$ and $\mathcal{M}_b$ imply that:
\begin{equation*}
	\frac{v'_b(x)}{v'_b(x)}=\frac{\lambda_a}{\lambda_b}\,\frac{\Omega_b(x)-\Gamma_b(x)v_b(x)}{\Omega_a(x)-\Gamma_a(x)v_a(x)}\,.
\end{equation*}
Simplifying the previous two equations results in
\begin{equation}
\label{eq: aboundary}
	\left(\frac{\lambda_b}{\lambda_a}-1\right) rv_b(x) = -c \left(\frac{r+\kappa \varphi_b}{\kappa \varphi_b}\frac{\lambda_b}{\lambda_a}-1\right)+\frac{r\lambda_b x \big(1+w(1)\big)}{\kappa \varphi_b}\,.
\end{equation}
By Lemma~\ref{lem: ordering}, $x$ is a maximizer of $v_b(\cdot)/v_a(\cdot)$. Also, observe that $v_a$ essentially solves the same Bellman equation as $v^*_b$, albeit with different constants. Therefore, the forms of its particular and homogeneous solutions match those of $v^*_b$, and hence it becomes twice differentiable on $\mathcal{M}_a$. Since $x$ is a maximizer of $v_b/v_a$, and this ratio strictly decreases to the right of $x$, then it must be that $\lim_{\ve \downarrow 0} \left(\frac{v_b(x+\ve)}{v_a(x+\ve)}\right)''\leq 0$. Let us denote $v''_q(x):= \lim_{\ve \downarrow 0}v''_q(x+\ve)$ for $q\in \{a,b\}$. Then, the previous second order condition together with~\eqref{eq: smooth_fit} imply that:
\begin{equation}
\label{eq: local_concav}
	\frac{v''_b(x)}{v_b(x)}\leq \frac{v''_a(x)}{v_a(x)} \Rightarrow v''_b(x) \leq \frac{r+\kappa \varphi_b}{\kappa \varphi_b}\,v''_a(x)\,.
\end{equation}
One can find an expression for the second order derivatives by differentiating the Bellman equations on the continuation region:
\begin{equation*}
	rv'_q(x) = \lambda_q\big(1+w(1)-v_q(x)\big)-\lambda_q x v'_q(x)
			-\lambda_q (1-2x)v'_q(x)-\lambda_q x(1-x)v''_q(x)\,.
\end{equation*}
Replacing $v'_q$ from the original Bellman equation into the above relation yields the following expression for $v''_q$:
\begin{equation*}
    \lambda_q x(1-x)v''_q(x)=-\frac{r\big(1+w(1)\big)}{1-x}+\frac{r(r+\lambda_q)}{\lambda_q x(1-x)}\,v_q(x)+\frac{c\big(r+\lambda_q(1-x)\big)}{\lambda_q x(1-x)}\,.
\end{equation*}
Plugging the second order derivatives from above into~\eqref{eq: local_concav} and applying some rearrangements lead to the following \textit{equivalent} inequality:
\begin{equation*}
	\begin{gathered}
		rv_b(x)\left(\frac{\lambda_b}{\lambda_a}-1\right)\left(1+\frac{r}{\lambda_a}+\frac{r}{\lambda_b}\right)	\geq \\ \Big(rx\big(1+w(1)\big)-c(1-x)\Big)\left(\frac{r+\kappa \varphi_b}{\kappa \varphi_b}\frac{\lambda_b}{\lambda_a}-1\right)
		-\frac{cr}{\lambda_b}\left(\frac{r+\kappa \varphi_b}{\kappa \varphi_b}\frac{\lambda_b^2}{\lambda_a^2}-1\right)\,.
 	\end{gathered}
\end{equation*}
By substituting~\eqref{eq: aboundary} into the above inequality, and making some regroupings, one arrives at:
\begin{equation*}
x\Big[\big(1+w(1)\big)\big(\lambda_a\left(r+\lambda_b\right)-\kappa \varphi_b\left(\lambda_b-\lambda_a\right)\big)-c\left(\lambda_b+r^{-1}\kappa \varphi_b\left(\lambda_b-\lambda_a\right)\right)\Big]\geq cr\,.
\end{equation*}
In the high cost regime $w(1)$ follows~\eqref{eq: high_cost_w}, which after substitution into the above inequality leads to an equivalent condition to~\eqref{eq: local_concav}, that is only in terms of the primitives of the model:
\begin{equation}
\label{eq: primitives}
	\begin{gathered}
		\frac{cr^2}{r+\kappa \varphi_b} \left(1+\frac{\kappa \varphi_b}{r+\lambda_b}\right)+cx\lambda_b\left(1+\frac{r}{r+\lambda_b}\frac{\kappa \varphi_b}{r+\kappa \varphi_b}\right) \\
		 \leq x \big(\lambda_a\left(r+\lambda_b\right)-\kappa \varphi_b\left(\lambda_b-\lambda_a\right)\big)\,.
	\end{gathered}
\end{equation}
Next, I will show that the \textit{lhs} above is always greater than the \textit{rhs}, thus the contradiction is resulted and there is no $x=\inf \mathcal{M}_a \in \mathcal{M}_b$. Obviously at $x=0$ the \textit{lhs} is greater than the \textit{rhs}.  At $x=1$, the \textit{rhs} is increasing in $\lambda_a$, so can be upper bounded when $\lambda_a$ assumes its maximum level in the high cost regime, i.e., $c+\frac{\kappa \varphi_b\left(\lambda_b-c\right)}{r+\kappa \varphi_b+\lambda_b}$. Therefore the \textit{rhs} of~\eqref{eq: primitives} at $x=1$ is upper bounded by
\begin{equation*}
	\lambda_a\left(r+\lambda_b\right)-\kappa \varphi_b\left(\lambda_b-\lambda_a\right)\leq c\left(r+\lambda_b \right).
\end{equation*}
However, the \textit{lhs} of~\eqref{eq: primitives} equals $c(r+\lambda_b)$ at $x=1$. This mean $\mathcal{M}_a = \{1\}$, that is in contradiction with the fact that in the high cost regime $1 \notin \mathcal{M}_a$.  So~\eqref{eq: primitives} can never be satisfied. Therefore, the contradiction is resulted, and in the high cost regime $\mathcal{M}_a$ cannot have a lower boundary point in $\mathcal{M}_b$.
\end{proof}
By Corollary~\ref{cor: set_inclusion} at the optimum $\mathcal{M}_a \subseteq \mathcal{M}_b$. Thus, the previous lemma implies that $\mathcal{M}_a = \emptyset$ in the high cost regime, and thereby concluding the proof of Proposition~\ref{prop: optimal_Ma}.
\subsection{Proof of Theorem~\ref{thm: unique}}
The fact that the suggested tuple satisfies the necessary conditions~\eqref{eq: reservation_value}, \eqref{eq: Bellman}, and~\eqref{eq: matching_set} as well as the majorizing condition is established in Section~\ref{sec: nec_cond}. It thus only remains to show that this tuple also satisfies the superharmonic condition.

\paragraph{Superharmonicity of $v^*_b$.} Obviously the superharmonic condition holds with equality on $(\beta,1]$, because $v^*_b$ solves the Bellman equation on this region. However, it needs to be checked on $[0,\beta]$ as well. Observe that for $\pi \in [0,\beta]$, one has $v^*_b(\pi)=0$, thus
\begin{equation*}
	\mathcal{L}_bv_b(\pi)-rv_b(\pi)-c = \lambda_b\pi \big(1+w(1)\big)-c\leq \lambda_b\beta\big(1+w(1)\big)-c=0\,,
\end{equation*}
where the last equality holds by Lemma~\ref{lem: beta}.

\paragraph{Superharmonicity of $v^*_a$.} In the low cost regime and on the interval $(\alpha,1]$, $v^*_a$ clearly satisfies the superharmonic property, because it actually solves the Bellman equation. The proofs of the superharmoincity of $v^*_a$ in the low cost regime on $[0,\alpha]$, and in the high cost regime on $[0,1]$ follow the same logic, thus here I only present the latter. In the high cost regime and on $[0,\beta]$ one has $v^*_a = 0$, and thus
\begin{equation*}	
		\mathcal{L}_a v^*_a(\pi)-rv^*_a(\pi)-c=\lambda_a\pi \big(1+w(1)\big)-c
		\leq \lambda_b \beta \big(1+w(1)\big)-c=0\,.
\end{equation*}
However, further analysis is required to establish the superharmonicity of $v^*_a$ on the interval $(\beta,1]$. On this region, $v^*_a = \frac{\kappa \varphi_b}{r+\kappa \varphi_b}\, v^*_b$, therefore,
\begin{equation*}
    \begin{gathered}
        \mathcal{L}_a v^*_a(\pi)-rv^*_a(\pi)-c = \mathcal{L}_a\left(\frac{\kappa \varphi_b}{r+\kappa \varphi_b}\,v^*_b\right)(\pi)-\frac{r\kappa \varphi_b}{r+\kappa \varphi_b}\,v^*_b(\pi)-c\\
		=\frac{\kappa \varphi_b}{r+\kappa \varphi_b}\big(\mathcal{L}_a v^*_b(\pi)-rv^*_b(\pi)-c\big)
		+\frac{r\lambda_a\pi}{r+\kappa \varphi_b}\big(1+w(1)\big)-\frac{rc}{r+\kappa \varphi_b}\,.
    \end{gathered}
\end{equation*}
Adding and subtracting $\mathcal{L}_b v^*_b$ from the first parentheses above, and observing the Bellman equation for $v^*_b$ result in
\begin{equation*}
    \begin{gathered}
        \mathcal{L}_a v^*_a(\pi)-rv^*_a(\pi)-c = -\frac{\kappa \varphi_b}{r+\kappa \varphi_b}(\mathcal{L}_b-\mathcal{L}_a)v^*_b(\pi)+\frac{r\lambda_a\pi}{r+\kappa \varphi_b}\big(1+w(1)\big)-\frac{rc}{r+\kappa \varphi_b}\\
        =-\frac{\kappa \varphi_b}{r+\kappa \varphi_b}\left(\lambda_b-\lambda_a\right)\pi \big(1+w(1)-v_b^*(\pi)-(1-\pi)v^{*'}_b(\pi)\big)
		\\
        +\frac{r\lambda_a\pi}{r+\kappa \varphi_b}\big(1+w(1)\big)-\frac{rc}{r+\kappa \varphi_b}\,.
    \end{gathered} 
\end{equation*}
One can readily verify that $v^*_b$ is convex, so, $v^*_b(\pi)+(1-\pi)v^{*'}_b(\pi) \leq v^*_b(1)$. That in turn implies an upper bound on the above expression:
\begin{equation*}
	\begin{gathered}
		\mathcal{L}_a v^*_a(\pi)-rv^*_a(\pi)-c \\
  \leq -\frac{\kappa \varphi_b}{r+\kappa \varphi_b}\frac{r\left(\lambda_b-\lambda_a\right)\pi}{r+\lambda_b}\left(1+w(1)+\frac{c}{r}\right)+\frac{r\lambda_a\pi\big(1+w(1)\big)-cr}{r+\kappa \varphi_b}
			\\
		\leq \left(-\frac{\kappa \varphi_b}{r+\kappa \varphi_b}\frac{r\left(\lambda_b-\lambda_a\right)}{r+\lambda_b}\left(1+w(1)+\frac{c}{r}\right)+\frac{r\lambda_a\big(1+w(1)\big)-rc}{r+\kappa \varphi_b}\right)^+\,.
	\end{gathered}
\end{equation*}
In the second inequality above, I used the fact that the \textit{rhs} of the first inequality is affine in $\pi$ and negative at $\pi=0$. Let us denote the argument of $(\cdot)^+$ by $\mathfrak{Z}$. It is increasing in $\lambda_a$, hence can be bounded above when $\lambda_a$ is replaced by $c+rw(1)$ (i.e., its maximum value in the high cost regime):
\begin{equation*}
	\begin{gathered}
		\mathfrak{Z}\leq -\frac{\kappa \varphi_b}{r+\kappa \varphi_b}\frac{r\big(\lambda_b-c-rw(1)\big)}{r+\lambda_b}\left(1+w(1)+\frac{c}{r}\right)+\frac{r\big(c+rw(1)\big)\big(1+w(1)\big)-rc}{r+\kappa \varphi_b} \\
		=-\frac{\kappa \varphi_b}{r+\kappa \varphi_b}\frac{\left(\lambda_b-c\right)\left(r+\lambda_b\right)\left(r+\kappa \varphi_b+c\right)}{r\left(\kappa \varphi_b+r+\lambda_b\right)^2}\\
  +\frac{\kappa \varphi_b}{r+\kappa \varphi_b}\frac{\left(\lambda_b-c\right)\left(r+\lambda_b\right)\left(r+\kappa \varphi_b+c\right)}{r\left(\kappa \varphi_b+r+\lambda_b\right)^2}=0\,.
	\end{gathered}
\end{equation*}
In the second line above $w(1)$ is replaced from equation~\eqref{eq: high_cost_w}. This concludes the proof of superharmonicity of $v^*_a$ with respect to $\mathcal{L}_a$ on $(\beta,1]$, thereby on the entire $[0,1]$.\qed
\subsection{Proof of Theorem~\ref{thm: optimal_policy}}
Let $w^*$ be the reputation function that is induced by $v^*$ and $\mathcal{M}^*$ following equation~\eqref{eq: reservation_value}. To verify the optimality, I need to show that given $w^*$, the matching set $\mathcal{M}^*$ is the optimal continuation region, and $v^*$ is the optimal value function for the stopping time problem of~\eqref{eq: matching_value_function}. I apply a Martingale verification argument to establish the optimality. Define $\mathbf{v}_q(\iota,\pi):=v^*_q(\pi)1_{\{\iota=0\}}+\big(\iota+w^*(\pi)\big)1_{\{\iota=1\}}$, where $\iota$ is the success indicator process. Since $\mathbf{v}$ is a bounded function, for each $q\in \{a,b\}$, one can find a bounded (and hence uniformly integrable) Martingale process $M^q$ such that:
\begin{equation}
\label{eq: Ito_rule}
	e^{-r t}\mathbf{v}_q(\iota_t,\pi_t)=\mathbf{v}_q(\iota,\pi)+\int_0^t e^{-r s}\big(\mathbb{L}_q \mathbf{v}_q(\cdot,\cdot)-r\mathbf{v}_q(\cdot,\cdot)\big)(\iota_{s^-},\pi_{s^-})\,\d s+M^q_t\,,
\end{equation}
where $\mathbb{L}_q \mathbf{v}_q(\iota,\pi):=\mathcal{L}_q v_q(\pi)1_{\{\iota=0\}}$. By the majorizing condition, for every stopping time $\tau$, one has $\mathbf{v}_q(\iota_\tau,\pi_\tau)\geq \iota_\tau+w^*(\pi_\tau)$, therefore
 \begin{equation*}
	\begin{split}
		e^{-r \tau}\big(\iota_\tau+w^*(\pi_\tau)\big)&\leq \mathbf{v}_q(\iota,\pi)+\int_0^\tau e^{-r s}\big(\mathbb{L}_q \mathbf{v}_q(\cdot,\cdot)-r\mathbf{v}_q(\cdot,\cdot)\big)(\iota_{s^-},\pi_{s^-})\,\d s+M^q_\tau\\
		&\leq \mathbf{v}_q(\iota,\pi)+\int_0^\tau  c e^{-r s}\,\d s+M^q_\tau\,,
	\end{split}
\end{equation*}
where in the second inequality I used the superharmonic property proven in Theorem~\ref{thm: unique}. Doob's optional stopping theorem implies that $\BE M_\tau^q=0$, hence for every stopping time $\tau$, one has
\begin{equation*}
	\mathbf{v}_q(\iota,\pi) \geq \BE_{\pi,q,\iota}\left[e^{-r \tau}\big(\iota_\tau +w^*(\pi_\tau)\big)-c\int_0^\tau e^{-r s }\,\d s \right].
\end{equation*}
That in turn implies
\begin{equation}
\label{eq: upper_bound_v^*}
	v^*_q(\pi)\geq \sup_\tau \BE_{\pi,q,\iota=0}\left[e^{-r \tau}\big(\iota_\tau +w^*(\pi_\tau)\big)-c\int_0^\tau e^{-r s }\,\d s \right]\,.
\end{equation}
To show the achievability in the above inequality, define
\begin{equation*}
    \tau_q:=\inf\left\{t\geq 0: \pi_t\notin \mathcal{M}^*_q \text{ or } \iota_t=1\right\}\,,
\end{equation*}
where $\mathcal{M}^*$ is characterized in Theorem~\ref{thm: unique}. Applying equation~\eqref{eq: Ito_rule} yields
\begin{equation*}
	\begin{split}
		e^{-r \tau_q}\left(\iota_{\tau_q}+w^*(\pi_{\tau_q})\right)&=e^{-r \tau_q}\mathbf{v}_q(\iota_{\tau_q},\pi_{\tau_q})\\
		&=\mathbf{v}_q(\iota,\pi)+\int_0^{\tau_q} e^{-r s}\big(\mathbb{L}_q \mathbf{v}_q(\cdot,\cdot)-r\mathbf{v}_q(\cdot,\cdot)\big)(\iota_{s^-},\pi_{s^-})\,\d s+M^q_{\tau_q}\\
		&=\mathbf{v}_q(\iota,\pi)+\int_0^{\tau_q} c e^{-r s}\,\d s+M^q_{\tau_q}\,,
	\end{split}
\end{equation*}
where the third equality holds by the Bellman on $\mathcal{M}^*_q$. Taking expectations of both sides in the above equality leads to
\begin{equation*}
    \mathbf{v}_q(\iota,\pi)=\BE_{\pi,q,\iota}\left[e^{-r \tau_q}\left(\iota_{\tau_q}+w^*(\pi_{\tau_q})\right)-c\int_0^{\tau_q}  e^{-r s}\,\d s\right].
\end{equation*}
Therefore,
\begin{equation*}
    v^*_q(\pi)=\BE_{\pi,q,\iota=0}\left[e^{-r \tau_q}\left(\iota_{\tau_q}+w^*(\pi_{\tau_q})\right)-c\int_0^{\tau_q}  e^{-r s}\,\d s\right]\,,
\end{equation*}
which together with~\eqref{eq: upper_bound_v^*} conclude the proof.\qed

\subsection{Proof of Proposition~\ref{prop: no_learning_optimum}} 
All the arguments below are stated without using \^{} on top of the variables. After I have established the unique existence of the optimum, one can reintroduce the \^{} superscript.

Observe that the variational Bellman equation for $v_q(\pi)$ in~\eqref{eq: no_learning_Bellman} can be equivalently expressed by:
\begin{equation}
\label{eq: alternate_no_learning_Bellman}
    v_q(\pi) = \max\left\{w(\pi),\frac{\lambda_q \pi-c}{r+\lambda_q \pi}+\frac{\lambda_q \pi w(\pi)}{r+\lambda_q \pi}\right\}\,.
\end{equation}
This representation implies that $v_q(\pi)>w(\pi)$ if and only if the second maximand is larger than the first, that happens if and only if $rw(\pi) < \lambda_q \pi-c$. Hence,
\begin{equation}
\label{eq: no_learning_M_determination}
    \pi \in \mathcal{M}_q \Leftrightarrow v_q(\pi) > w(\pi) \Leftrightarrow \lambda_q \pi-c > rw(\pi)\,.
\end{equation}
This already implies that in any optimal outcome, $\mathcal{M}_a \subseteq \mathcal{M}_b$, thereby proving the last claim in part (\rn{1}). Additionally, it shows that 
\begin{equation}
\label{eq: w_zero_v_zero}
    w(\pi)=0 \text{ if and only if } v_b(\pi)=0\,.
\end{equation}
Let $\hat\beta = \inf \mathcal{M}_b$, then $v_b(\hat \beta)=0$, and hence $\hat \beta = c/\lambda_b$. Then,~\eqref{eq: no_learning_M_determination} and~\eqref{eq: w_zero_v_zero} together imply that $v_b(\pi)>0$ for all $\pi>\hat \beta$, thereby proving the unique existence of $\widehat{\mathcal{M}}_b = (\hat{\beta},1]$, where $\hat \beta = c/\lambda_b$.

To show the uniqueness of $\mathcal{M}_a$, denote the reputation function in the outcome where $\pi \in \mathcal{M}_a \cap \mathcal{M}_b$ by $w_{ab}(\pi)$, and in the outcome where $\pi \in \mathcal{M}_a^c \cap \mathcal{M}_b$ by $w_b(\pi)$. One can find closed-form expressions for both of these functions using~\eqref{eq: reservation_value} and~\eqref{eq: alternate_no_learning_Bellman}. Subsequently, through extensive but straightforward algebraic computations, one can demonstrate the existence of a threshold $\hat \alpha > 0$, which makes the following statements equivalent:
\begin{equation*}
    w_{ab}(\pi) >  w_b(\pi) \Leftrightarrow \lambda_a \pi -c > rw_{ab}(\pi) \Leftrightarrow \pi > \hat \alpha\,.
\end{equation*}
In particular, the expression for $\hat \alpha$ is
\begin{equation}
    \label{eq: no_learning_alpha}
    \begin{gathered}
      \hat \alpha = \frac{1}{2\lambda_a \lambda_b}\big(c\lambda_b -r \lambda_a+\kappa \varphi_b(\lambda_b-\lambda_a)\big) \, \times \\
      \sqrt{r^2 \lambda_a^2+2r\lambda_a \big(c\lambda_b-\kappa \varphi_b(\lambda_b-\lambda_a)\big)+\big(c\lambda_b+\kappa \varphi_b(\lambda_b-\lambda_a)\big)^2}  \,.
    \end{gathered}
\end{equation}
Further computations confirm that $\hat \alpha \leq 1$ if and only if the economy is in the low cost regime. Therefore, one obtains the unique existence of $\widehat{\mathcal{M}}_a = \emptyset$ in the high cost regime, and $\widehat{\mathcal{M}}_a = (\hat\alpha,1]$ in the low cost regime. This concludes the proof of both parts in Proposition~\ref{prop: no_learning_optimum}.\qed
\subsection{Required Results for Section~\ref{subs: stationary_dist}}
\label{sec: dist_results}
\subsubsection{Steady-State Distribution}
\begin{lemma}
\label{lem: steady_state_profile}
In the steady-state of the economy with short-lived agents and reputational externality, the following holds:
\begin{subequations}
    \begin{align}
        \int_\alpha^pm(\pi)\,\d \pi&=\frac{\kappa \varphi \psi(p)/\mu}{\delta+\kappa \varphi \psi(p)/\mu}\frac{p-\alpha}{\Upsilon_2(\alpha)-\alpha \Upsilon_1(\alpha)}\left(\Upsilon_1(\alpha)-\frac{\lambda}{\delta+\lambda}\Upsilon_2(\alpha)\right)\,, \label{eq: zero_mom}\\
        \int_\alpha^p \pi m(\pi)\, \d \pi &= \frac{\kappa \varphi \psi(p)/\mu}{\delta+\kappa \varphi \psi(p)/\mu}\frac{\delta}{\delta+\lambda}\frac{(p-\alpha)\Upsilon_2(\alpha)}{\Upsilon_2(\alpha)-\alpha \Upsilon_1(\alpha)}\,,\label{eq: first_mom}\\
        m(1)&=\frac{\kappa \varphi \psi(p)/\mu}{\delta+\kappa \varphi \psi(p)/\mu}\frac{\kappa \varphi/\mu}{\delta+\lambda+\kappa \varphi/\mu}\frac{\lambda}{\delta+\lambda}\frac{(p-\alpha)\Upsilon_2(\alpha)}{\Upsilon_2(\alpha)-\alpha\Upsilon_1(\alpha)}\,,\label{eq: steady_state_m(1)}\\
        n(\alpha)&=\frac{\kappa \varphi \psi(p)/\mu}{\delta+\kappa \varphi \psi(p)/\mu}\frac{\Upsilon_2(\alpha)-p\Upsilon_1(\alpha)}{\Upsilon_2(\alpha)-\alpha \Upsilon_1(\alpha)}\,,\label{eq: steady_state_n(a)}
    \end{align}
\end{subequations}
where 
\begin{equation*}
    \begin{split}
        \Upsilon_i(\alpha) := \left(\frac{p}{\alpha}\right)^{\delta/\lambda-1}\left(\frac{1-p}{1-\alpha}\right)^{-(\delta/\lambda+2)}p^i(1-p)-\alpha^i(1-\alpha), ~\text{ for } i\in \{1,2\}\,.
    \end{split}
\end{equation*}
\end{lemma}
\newpage
\begin{proof}
In the steady-state the time derivatives in~\eqref{eq: pi=1_and_p} are zero, therefore~\eqref{eq: m1t} and~\eqref{eq: n1t} jointly result in:
\begin{subequations}
    \begin{align}
        n(1)&=\frac{\delta+\lambda}{\kappa \varphi/\mu}\,m(1)\,,\\
        (\delta+\lambda+\kappa \varphi/\mu)\delta m(1)&= \kappa \varphi/\mu \int_\alpha^p\lambda \pi m(\pi)\,\d \pi\,.\label{eq: m(1)_first_mom}
    \end{align}
\end{subequations}
Also at $\pi=p$, equation~\eqref{eq: npt} implies that $n(p) = \delta/(\delta+\kappa \varphi\psi(p)/\mu)$. Next, the expression found in~\eqref{eq: m(pi)_sol} translates to
\begin{equation}
\label{eq: pim(pi)integral}
    \int_\alpha^p \pi m(\pi)\,\d \pi = \frac{\lambda m(\alpha)}{\delta+\lambda}\Upsilon_2(\alpha)\,.
\end{equation}
The \textit{rhs} to~\eqref{eq: flow_[alpha,p]} can be simplified using the steady-state differential equation resulted from $\dot{m}(\pi)=0$:
\begin{equation}
\label{eq: ma_na_1}
    \begin{gathered}
        \kappa \varphi\frac{\psi(p)}{\mu}\,n(p)=\frac{\delta\kappa \varphi\psi(p)/ \mu }{\delta+\kappa \varphi \psi(p)/\mu} = \lambda\int_\alpha^p \pi m(\pi)\,d \pi +\delta \int_\alpha^pm(\pi)\,\d \pi +\delta n(\alpha)\\
        =\lambda \int_\alpha^p\partial_{\pi}\big(\pi(1-\pi)m(\pi)\big)\,\d \pi+\delta n(\alpha)\\
        =\lambda\big(p(1-p)m(p)-\alpha (1-\alpha)m(\alpha)\big)+\delta n(\alpha)\\
        =\lambda m(\alpha)\Upsilon_1(\alpha)+\delta n(\alpha)\,.
    \end{gathered}
\end{equation}
Because of the Bayesian learning during the matches the steady-state average reputation must be equal to $p$ (i.e., conservation of the first moment):
\begin{equation*}
    m(1)+n(1)+pn(p)+\int_\alpha^p\pi m(\pi)\, \d \pi+\alpha n(\alpha)=p\,.
\end{equation*}
Simplifying this relation using~\eqref{eq: m(1)_first_mom} and~\eqref{eq: pim(pi)integral} implies that 
\begin{equation}
\label{eq: ma_na_2}
    \lambda m(\alpha) \Upsilon_2(\alpha)+\alpha \delta n(\alpha)=\frac{\delta \kappa \varphi\psi(p)/\mu }{\delta+\kappa \varphi \psi(p)/\mu}\,p\,.
\end{equation}
It is now straightforward to solve for~$n(\alpha)$ and~$m(\alpha)$ using~\eqref{eq: ma_na_1} and~\eqref{eq: ma_na_2}, thereby obtaining~\eqref{eq: steady_state_n(a)} and 
\begin{equation}
\label{eq: m(alpha)_expression}
    m(\alpha) = \frac{\delta \kappa \varphi \psi(p)/\mu}{\lambda\left(\delta+\kappa \varphi \psi(p)/\mu\right)}\frac{p-\alpha}{\Upsilon_2(\alpha)-\alpha \Upsilon_1(\alpha)}\,.
\end{equation}
Substituting $m(\alpha)$ from above into~\eqref{eq: pim(pi)integral} yields the lemma's claim for $\int_{\alpha}^p\pi m(\pi)\,\d \pi$, i.e. equation~\eqref{eq: first_mom}. Subsequently, $m(1)$ can be found from~\eqref{eq: m(1)_first_mom}, thereby verifying~\eqref{eq: steady_state_m(1)}. Finally, from the second line in~\eqref{eq: ma_na_1}, one obtains the following expression
\begin{equation*}
    \int_\alpha^pm(\pi)\,\d \pi = \frac{\lambda m(\alpha)}{\delta}\left(\Upsilon_1(\alpha)-\frac{\lambda}{\delta+\lambda}\Upsilon_2(\alpha)\right)\,,
\end{equation*}
that leads to~\eqref{eq: zero_mom} after replacing $m(\alpha)$ in the above expression.
\end{proof}

\subsubsection{Stochastic Ordering}
\label{sec: stoch_order}
For a clearer understanding of the stochastic ordering of the steady-state distribution $\bm{\pi_\infty}$, I will start by formulating the CDF of the density $m(\cdot)$:
\begin{equation}
\label{eq: m_CDF}
    \begin{gathered}
        \int_\alpha^\pi m(x)\,\d x =\frac{\kappa \varphi \psi(p)/\mu}{\delta + \kappa \varphi \psi(p)/\mu}\frac{p-\alpha}{\Upsilon_{2,1}(\alpha,p)-\alpha \Upsilon_{1,1}(\alpha,p)}\Big(\frac{\delta}{\delta+\lambda}\Upsilon_{2,1}(\alpha,\pi)+\Upsilon_{1,2}(\alpha,\pi)\Big),\\
        \Upsilon_{i,j}(x,y):=\left(\frac{y}{x}\right)^{(\delta/\lambda-1)} \left(\frac{1-y}{1-x}\right)^{-(\delta/\lambda+2)}y^i(1-y)^j-x^i(1-x)^j\,.
    \end{gathered}
\end{equation}
In addition, using the solution found for $m(\pi)$ and the expression~\eqref{eq: m(alpha)_expression} for $m(\alpha)$, it is easy to verify that for $\pi \in [\alpha,p]$:
\begin{equation}
\label{eq: m(pi)_ind_alpha}
    m(\pi) = \frac{\delta \kappa \varphi \psi(p)/\mu}{\lambda\left(\delta+\kappa \varphi \psi(p)/\mu\right)}\left(\frac{\pi}{p}\right)^{(\delta/\lambda-1)}\left(\frac{1-\pi}{1-p}\right)^{-(\delta/\lambda+2)}\frac{1}{p(1-p)}\,.
\end{equation}
Therefore for a fixed $\mu$ the above density is \textit{independent} of $\alpha$. 

My next goal is to show that $\mathsf{M}(\mu,\alpha)$ is increasing in each argument holding the other one constant. To achieve this, I turn to the theory of stochastic orders, and in particular, I employ the second order stochastic dominance. For two real-valued random variables $X$ and $Y$, it is said that $X \succeq_{\mathsf{SSD}} Y$ if $\BE u(X) \geq \BE u(Y)$ for every increasing and concave function $u$. An equivalent definition is that $X \succeq_\mathsf{SSD} Y$ if $\BE\left[(X-t)_-\right] \geq \BE\left[(Y-t)_-\right]$ for every $t \in \BR$, provided that the expectations exist.\footnote{For every $r\in \BR$, $(r)_-:=\min\{r,0\}$. The reader can refer to Chapter~4 of \cite{shaked2007stochastic} for the proof of the equivalence.} The following lemma provides a sufficient condition for the second order stochastic dominance, drawing from the results of \cite{karlin1963generalized}.
\newpage
\begin{lemma}[Sufficient condition for \textsf{SSD}]
\label{lem: suff_SSD}
    Suppose that the following two conditions hold: (\rn{1}) $\BE[X] \geq \BE[Y]$, and (\rn{2}) there exists $t_0\in \BR$ such that for all $t\leq t_0$, $\BP\left(X \geq t\right) \geq \BP\left(Y\geq t\right)$, and for all $t>t_0$, $\BP\left(X\geq t\right) \leq \BP\left(Y \geq t\right)$. Then $X \succeq_\mathsf{SSD} Y$.
\end{lemma}
\begin{proof}
For every $t \leq t_0$,
\begin{equation*}
    \begin{gathered}
        \BE\left[(X-t)_-\right] =- \int_0^\infty \BP\left(-(X-t)_->u\right)\,\d u= -\int_0^\infty \BP\left(X< t-u\right)\,\d u\\
        =-\int_{-\infty}^t \BP\left(X<z\right)\, \d z
        \geq -\int_{-\infty}^t \BP\left(Y<z\right) \,\d z=\BE\left[(Y-t)_-\right]\,.
    \end{gathered}
\end{equation*}
Also, an equivalent representation for $\BE\left[(X-t)_-\right]$ is
\begin{equation*}
    \begin{gathered}
        \BE\left[(X-t)_-\right]=\BE\left[(X-t);X<t\right]\\
        =\BE[X]-t-\BE\left[X-t;X\geq t\right]=\BE[X]-t-\int_t^\infty \BP\left(X \geq z\right)\,\d z\,.
    \end{gathered}
\end{equation*}
Therefore, 
\begin{equation*}
    \begin{split}
        \BE\left[(X-t)_-\right]-\BE\left[(Y-t)_-\right]=\BE[X]-\BE[Y]+\int_t^\infty \big(\BP\left(Y \geq z\right)-\BP\left(X \geq z\right) \big)\,\d z\,.
    \end{split}
\end{equation*}
The first term is positive and the integral term is also positive for all $t>t_0$, so $\BE\left[(X-t)_-\right] \geq \BE\left[(Y-t)_-\right]$ for all $t>t_0$ as well.
\end{proof}
I will use the technique offered in this lemma to prove that an increase in $\alpha$ or $\mu$ \textit{positively} shifts the steady-state distribution of $\bm{\pi_\infty}$. This distribution is completely described by the measures found in Lemma~\ref{lem: steady_state_profile}.
\begin{lemma}
    Let $\alpha_1\leq \alpha_2<p$ and $\mu_1\leq \mu_2$, then
    \begin{enumerate}[label=(\roman*)]
        \item \label{item: part_mu_mon} holding $\alpha$ constant, $\bm{\pi_\infty}(\mu_2) \succeq_{\mathsf{SSD}} \bm{\pi_\infty}(\mu_1)$.
        \item \label{item: part_alpha_mon} Holding $\mu$ constant, $\bm{\pi_\infty}(\alpha_2) \succeq_{\mathsf{SSD}} \bm{\pi_\infty}(\alpha_1)$.
    \end{enumerate}
\end{lemma}
\noindent\textit{Proof.}
\paragraph{Part \ref{item: part_mu_mon}:}I show that 
\begin{align}
\label{eq: mu_comparison}
    \BP\left(\bm{\pi_\infty}(\mu_2) \geq t\right)\left\{
    \begin{array}{lr}
          \geq \BP\left(\bm{\pi_\infty}(\mu_1) \geq t\right) & \forall t\leq p \\
          \leq \BP\left(\bm{\pi_\infty}(\mu_1) \geq t\right) &  \forall t> p\,.
    \end{array}\right.
\end{align}
Observe that for every $t> p$,
\begin{equation*}
    \BP\left(\bm{\pi_\infty} \geq t\right) = m(1)+n(1) = \frac{\kappa \varphi \psi(p)/\mu}{\delta+\kappa \varphi \psi(p)/\mu}\frac{\lambda}{\delta+\lambda}\frac{(p-\alpha)\Upsilon_{2,1}(\alpha,p)}{\Upsilon_{2,1}(\alpha,p)-\alpha \Upsilon_{1,1}(\alpha,p)}\,,
\end{equation*}
which is obviously decreasing in $\mu$, hence proving the second assertion in~\eqref{eq: mu_comparison}. For every $t \leq p$:
\begin{equation*}
    \BP\left(\bm{\pi_\infty} \geq t\right)=1-\BP\left(\bm{\pi_\infty}< t\right)=1-\left(n(\alpha)+\int_\alpha^t m(\pi)\,\d \pi\right)\,.
\end{equation*}
According to~\eqref{eq: steady_state_n(a)}, the mass $n(\alpha)$ is decreasing in $\mu$, so is $\int_\alpha^t m(\pi)\,\d \pi$ by~\eqref{eq: m_CDF}. Hence, $\BP\left(\bm{\pi_\infty} \geq t\right)$ must be increasing in $\mu$ for every $t\leq p$, thus establishing the first line of~\eqref{eq: mu_comparison}. By Bayesian learning $\BE\left[\bm{\pi_\infty} (\mu_2)\right]=\BE\left[\bm{\pi_\infty} (\mu_1)\right]=p$. Therefore both parts of the Lemma~\ref{lem: suff_SSD} are satisfied to conclude Part~\ref{item: part_mu_mon}.

\paragraph{Part \ref{item: part_alpha_mon}:} Holding $\mu$ constant, for every $t\leq \alpha_2$ we have $\BP\left(\bm{\pi_\infty}(\alpha_2)\geq t\right)=1 \geq \BP\left(\bm{\pi_\infty}(\alpha_1)\geq t\right)$.
Alternatively, for every $t > \alpha_2$, it holds that
\begin{equation*}
    \BP\left(\bm{\pi_\infty}(\alpha)\geq t\right) = 1_{\{t \leq p\}}\left(\int_t^pm(\pi)\,\d \pi + n(p)\right)+n(1)+m(1)\,.
\end{equation*}
Because of~\eqref{eq: m(pi)_ind_alpha} the integral term is independent of $\alpha$ (for a fixed $\mu$). This is the case for $n(p)$ as well. Therefore, it is sufficient to show that holding $\mu$ constant, $n(1)+m(1)$ is decreasing in $\alpha$. This is equivalent to showing that the following expression is decreasing in $\alpha$:
\begin{equation*}
    \begin{split}
        \frac{(p-\alpha)\Upsilon_{2,1}(\alpha,p)}{\Upsilon_{2,1}(\alpha,p)-\alpha \Upsilon_{1,1}(\alpha,p)}&=\frac{(p-\alpha)\Upsilon_{2,1}(\alpha,p)}{(p-\alpha)p(1-p)\left(\frac{p}{\alpha}\right)^{\delta/\lambda-1}\left(\frac{1-p}{1-\alpha}\right)^{-(\delta/\lambda+2)}}\\
        &=p\left[1-\frac{\alpha^2(1-\alpha)}{p^2(1-p)}\left(\frac{\alpha}{p}\right)^{\delta/\lambda-1}\left(\frac{1-\alpha}{1-p}\right)^{-(\delta/\lambda+2)}\right]\\
        &=p\left[1-\left(\frac{\alpha}{1-\alpha}\right)^{\delta/\lambda+1}\left(\frac{p}{1-p}\right)^{-(\delta/\lambda+1)}\right]\,.
    \end{split}
\end{equation*}
Since $\alpha/(1-\alpha)$ is increasing in $\alpha$, then the above expression is decreasing in $\alpha$. So as a result of this, for every $\alpha_1\leq \alpha_2<p$ and $t>\alpha_2$, it holds that $\BP\left(\bm{\pi_\infty}(\alpha_2)\geq t\right)\leq \BP\left(\bm{\pi_\infty}(\alpha_1)\geq t\right)$. Hence, Lemma~\ref{lem: suff_SSD} can be applied again to establish Part~\ref{item: part_alpha_mon}.\qed


\setcitestyle{numbers}	 
\bibliographystyle{normalstyle.bst}
\bibliography{ref}
\end{document}